\documentclass[11pt]{article}

\usepackage{amsmath}
\usepackage{amsthm}
\usepackage{graphicx} 
\usepackage{array} 
\usepackage{bbm}
\usepackage{thm-restate}

\usepackage{fancybox}

\newenvironment{fminipage}%
  {\begin{Sbox}\begin{minipage}}%
  {\end{minipage}\end{Sbox}\fbox{\TheSbox}}

\newenvironment{algbox}[0]{\vskip 0.2in
\noindent 
\begin{fminipage}{6.3in}
}{
\end{fminipage}
\vskip 0.2in
}

\DeclareMathOperator*{\poly}{poly}
\DeclareMathOperator{\diag}{diag}

\newcommand{\norm}[1]{\left\lVert#1\right\rVert}
\def\abs#1{\left|#1  \right|}

\usepackage{amsmath, amssymb, amsfonts, verbatim}
\usepackage{hyphenat,epsfig,subfigure,multirow}
\usepackage{multicol}
\usepackage{nicefrac}
\usepackage{paralist}
\usepackage{enumitem}
\usepackage{tikz}
\usetikzlibrary{positioning}

\usepackage[ruled]{algorithm2e}

\DeclareFontFamily{U}{mathx}{\hyphenchar\font45}
\DeclareFontShape{U}{mathx}{m}{n}{
      <5> <6> <7> <8> <9> <10>
      <10.95> <12> <14.4> <17.28> <20.74> <24.88>
      mathx10
      }{}
\DeclareSymbolFont{mathx}{U}{mathx}{m}{n}
\DeclareMathSymbol{\bigtimes}{1}{mathx}{"91}

\usepackage{tcolorbox}
\tcbuselibrary{skins,breakable}
\tcbset{enhanced jigsaw}

\usepackage[normalem]{ulem}
\usepackage[compact]{titlesec}

\definecolor{DarkRed}{rgb}{0.5,0.1,0.1}
\definecolor{DarkBlue}{rgb}{0.1,0.1,0.5}

\usepackage{nameref}
\definecolor{ForestGreen}{rgb}{0.1333,0.5451,0.1333}
\definecolor{Red}{rgb}{0.9,0,0}
\usepackage[linktocpage=true,
	pagebackref=true,colorlinks,
	linkcolor=DarkRed,citecolor=ForestGreen,
	bookmarks,bookmarksopen,bookmarksnumbered]
	{hyperref}
\usepackage[noabbrev,nameinlink]{cleveref}
\crefname{property}{property}{Property}
\creflabelformat{property}{(#1)#2#3}
\crefname{equation}{eq}{Eq}
\creflabelformat{equation}{(#1)#2#3}

\usepackage{bm}
\usepackage{url}
\usepackage{xspace}
\usepackage[mathscr]{euscript}

\usepackage{tikz}
\usetikzlibrary{arrows}
\usetikzlibrary{arrows.meta}
\usetikzlibrary{shapes}
\usetikzlibrary{backgrounds}
\usetikzlibrary{positioning}
\usetikzlibrary{decorations.markings}
\usetikzlibrary{patterns}
\usetikzlibrary{calc}
\usetikzlibrary{fit}
\usetikzlibrary{snakes}

\usepackage{mdframed}

\usepackage[noend]{algpseudocode}
\makeatletter
\def\BState{\State\hskip-\ALG@thistlm}
\makeatother

\usepackage{cite}
\usepackage{enumitem}

\usepackage[margin=1in]{geometry}

\newtheorem{theorem}{Theorem}[section]

\newtheorem{lemma}{Lemma}[section]

\newtheorem{corollary}[lemma]{Corollary}
\newtheorem{claim}[lemma]{Claim}

\newtheorem{definition}[lemma]{Definition}

\newtheorem*{theorem*}{Theorem}
\newtheorem*{claim*}{Claim}
\newtheorem*{proposition*}{Proposition}
\newtheorem*{lemma*}{Lemma}
\newtheorem*{problem*}{Problem}

\crefname{lemma}{Lemma}{Lemmas}
\crefname{claim}{Claim}{Claims}

\newtheoremstyle{restate}{}{}{\itshape}{}{\bfseries}{~(restated).}{.5em}{\thmnote{#3}}
\theoremstyle{restate}

\allowdisplaybreaks

\setlength{\parskip}{3pt}

\title{
Entrywise Approximate Solutions for SDDM Systems in Almost-Linear Time
}
\author{
Angelo Farfan\footnote{dstaff@mit.edu}
\and
Mehrdad Ghadiri\footnote{mehrdadg@mit.edu}
\and Junzhao Yang\footnote{junzhaoy@andrew.cmu.edu}
}
\date{}

\newcommand{\Mbegin}{\left[\begin{matrix}}
\newcommand{\Mend}{\end{matrix}\right]}

\newcommand{\callalg}[1]{\hyperref[alg:#1]{\textsc{#1}}}
\newcommand{\labelalg}[1]{\label{alg:#1}}

\newcommand{\epsL}{\eps_L}

\newcommand{\ExpSet}{\mathsf{Exp}}

\newcommand{\rin}{r_{\text{in}}}
\newcommand{\rout}{r_{\text{out}}}

\newcommand{\Cores}{\text{Cores}}
\newcommand{\cnt}{\mathsf{cnt}}

\newcommand\ee{\boldsymbol{\mathit{e}}}
\newcommand\xx{\boldsymbol{\mathit{x}}}
\newcommand\xxtil{\widetilde{\boldsymbol{\mathit{x}}}}
\newcommand\xxbar{\overline{\boldsymbol{\mathit{x}}}}
\newcommand\xxhat{\widehat{\boldsymbol{\mathit{x}}}}
\newcommand\bb{\boldsymbol{\mathit{b}}}

\newcommand\bbhat{\widehat{\boldsymbol{\mathit{b}}}}

\newcommand\DD{\boldsymbol{\mathit{D}}}
\renewcommand\AA{\boldsymbol{\mathit{A}}}
\newcommand\BB{\boldsymbol{\mathit{B}}}
\newcommand\CC{\boldsymbol{\mathit{C}}}

\renewcommand\SS{\boldsymbol{\mathit{S}}}

\newcommand\II{\boldsymbol{\mathit{I}}}

\newcommand\LL{\boldsymbol{\mathit{L}}}

\newcommand\MM{\boldsymbol{\mathit{M}}}

\newcommand\yy{\boldsymbol{\mathit{y}}}
\newcommand\vv{\boldsymbol{\mathit{v}}}

\newcommand\vvhat{\widehat{\boldsymbol{\mathit{v}}}}

\newcommand{\defeq}{:=}

\renewcommand\P{\mathbb{P}}
\newcommand\E{\mathbb{E}}
\newcommand{\vertiii}[1]{{\left\vert\kern-0.25ex\left\vert\kern-0.25ex\left\vert #1 
    \right\vert\kern-0.25ex\right\vert\kern-0.25ex\right\vert}}

\newcommand{\approxbar}{\overset{\underline{\hspace{0.6em}}}{\approx}}
\newcommand{\Otil}{\widetilde{O}}

\newcommand{\R}{\mathbb{R}}
\newcommand{\Z}{\mathbb{Z}}

\newcommand{\eps}{\epsilon}

\newcommand{\0}{\boldsymbol{0}}
\newcommand{\1}{\boldsymbol{1}}

\newcommand{\calC}{\mathcal{C}}
\newcommand{\nnz}{\text{nnz}}

\begin{document}

\pagenumbering{roman}

\maketitle

\begin{abstract}
We present an algorithm that given any invertible symmetric diagonally dominant M-matrix (SDDM), i.e., a principal submatrix of a graph Laplacian, $\LL$ and a nonnegative vector $\bb$, computes an entrywise approximation to the solution of $\LL \xx = \bb$ in $\Otil(m n^{o(1)})$ time with high probability, where $m$ is the number of nonzero entries and $n$ is the dimension of the system.
\end{abstract}

\clearpage
\tableofcontents
\clearpage

\pagenumbering{arabic}

\newcounter{algcounter}

\section{Introduction}

Graph Laplacians have been a central subject of study in computer science over the past few decades due to their many connections to graphs and random walks. The seminal work of \cite{ST04:journal} presented the first near-linear time (in number of nonzeros of the matrix) algorithm for computing the solution of system $\LL \xx = \bb$, where $\LL$ is a graph Laplacian or more generally a principal submatrix of a graph Laplacian, i.e., symmetric diagonally dominant M-matrices (SDDM). Many follow-up works contributed to improving many aspects of such near-linear solvers such as simplifying the algorithm \cite{KOSZ13,LS13,KS16}, improving the polylogarithmic factors in the running time \cite{KMST10,KMP11,LS13,PS14,KMP14,CKMPPRX14,JS21,KS16}, and developing parallel solvers with polylogarithmic depth and near-linear work \cite{BGKMPT11,PS14,KLPSS16}.

All of these algorithms have focused on normwise error bounds and output a solution $\widehat{\xx}$ such that
\[
\norm{\widehat{\xx} - \LL^{\dagger} \bb}_{\LL} \leq \epsilon \cdot \norm{\LL^{\dagger} \bb}_{\LL}.
\]
As pointed out recently by \cite{GNY25}, since the entries of the solution can vary exponentially, such normwise error bounds cannot recover small entries unless the error parameter $\epsilon$ is exponentially small in $n$, the size of the matrix. Since these algorithms are all iterative preconditioning approaches, they require about $\log(1/\epsilon)$ iterations and the numbers used in the algorithm need $\log(\kappa/\epsilon)$ bits of precision to guarantee convergence, where $\kappa$ is the condition number of the matrix. Taking $\epsilon$ exponentially small means the number of bit operations, which we refer to as the running time throughout the paper, of the algorithm is about $O(mn^2)$. 
\cite{GNY25} improved on this by presenting an algorithm with a running time of $\Otil(m \sqrt{n} \log^2 (U \epsilon^{-1} \delta^{-1}))$, where $U$ is a bound on the magnitude of entries and $1-\delta$ is the probability of success, that for any invertible SDDM matrix $\LL$ produces $\widehat{\xx}$ such that
\[
e^{-\epsilon} (\LL^{-1} \bb)_i \leq \widehat{\xx}_i \leq e^{\epsilon} (\LL^{-1} \bb)_i, ~~ \text{for all} ~~ i \in [n],
\]
which we denote by $\widehat{\xx} \approxbar_{\epsilon} \LL^{-1} \bb$. This raises the question of whether solving a linear system with the stronger notion of entrywise approximation is possible in almost-linear time. We answer this question in the affirmative.

\begin{restatable}{theorem}{MainTheorem}
\label{thm:main}
   There exists a randomized algorithm \callalg{SDDMSolve} such that, for any $\delta \in (0, 1)$ and any $\eps > (nU)^{-2^{\sqrt{\log n}}}$ (i.e., not exponentially small), any invertible SDDM matrix $\LL \in \Z^{n \times n}$ with $m$ nonzero integer entries in $[-U, U]$ and a nonnegative vector $\bb \in \Z^n$ with integer entries in $[0, U]$, with probability at least $1 - \delta$, computes the entrywise approximate solution
   \begin{align*}
       \xxtil \approxbar_{\epsilon} \LL^{-1} \bb
   \end{align*}
   whose entries are represented by $O(\log (nU/\eps))$-bit floating points, using\footnote{$\Otil$ suppresses the $\log \log (U \eps^{-1} \delta^{-1})$ factors.} 
   \[
   \Otil(m 2^{O(\sqrt{\log n})} \log(U) \log^{2} (U \eps^{-1} \delta^{-1}))
   \]
   bit operations.
\end{restatable}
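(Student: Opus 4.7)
The plan is to build a chain of approximate Schur complements $\LL = \LL_0, \LL_1, \ldots, \LL_k$ of depth $k = O(\sqrt{\log n})$, in the spirit of the Spielman--Teng chain of preconditioners but with the crucial difference that each link must preserve \emph{entrywise} (not merely spectral) approximation of $\LL^{-1}\bb$. At level $i$ the vertex set is partitioned as $V_i = S_i \cup \bar S_i$, where $\bar S_i$ is a union of low-diameter pieces so that $\LL_{i;\bar S_i, \bar S_i}$ is easy to invert nearly exactly, and the (potentially dense) Schur complement $\Schur(\LL_i, S_i)$ is replaced by a sparse entrywise-accurate surrogate $\LL_{i+1}$. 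Each level shrinks $|V_i|$ by a factor $R = 2^{\Theta(\sqrt{\log n})}$, so after $k$ levels the remaining matrix has $O(1)$ rows and is solved by brute force. A per-level multiplicative overhead of $2^{O(\sqrt{\log n})}$, summed over $k$ levels, will produce the final $m\,2^{O(\sqrt{\log n})}$ bound.

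To execute level $i$, I would use the block elimination formula: given the target $\bb^{(i)}$, form
\[
\bb^{(i+1)} = \bb_{S_i}^{(i)} - \LL_{i;S_i,\bar S_i}\,(\LL_{i;\bar S_i,\bar S_i})^{-1}\, \bb_{\bar S_i}^{(i)},
\]
recursively compute $\xx^{(i+1)} \approxbar_{\eps/k} \LL_{i+1}^{-1}\bb^{(i+1)}$, and then reconstruct $\xx_{\bar S_i}^{(i)} = (\LL_{i;\bar S_i,\bar S_i})^{-1}(\bb_{\bar S_i}^{(i)} - \LL_{i;\bar S_i,S_i}\,\xx^{(i+1)})$. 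Because $\LL_i$ is SDDM and $\bb^{(i)}$ is nonnegative, the entries of $(\LL_{i;\bar S_i,\bar S_i})^{-1}$ are nonnegative, $\bb^{(i+1)}$ remains nonnegative, and $\LL_{i+1}$ can be chosen to be SDDM; so the structural preconditions for entrywise approximation are preserved at every level. Splitting the error budget $\eps$ evenly across the $k$ levels keeps the total entrywise error below $O(\eps)$ by composition of the $\approxbar$ relation.

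The technical heart -- and the step I expect to be the main obstacle -- is constructing the entrywise-accurate sparsifier $\LL_{i+1}$. Spectral sparsifiers only preserve the quadratic form $\xx^\top \LL \xx$ up to $(1\pm\eps)$, which is far too weak: two vectors with the same quadratic form can differ multiplicatively in a fixed entry by an unbounded amount. My approach would exploit the random-walk interpretation of M-matrices -- an off-diagonal $(u,v)$ entry of $-\Schur(\LL_i,S_i)$ is a weighted sum over $u$-to-$v$ walks whose interior lies in $\bar S_i$. By enforcing that every piece of $\bar S_i$ has combinatorial diameter $O(\sqrt{\log n})$, all contributing walks are short, so $2^{O(\sqrt{\log n})}$ random-walk samples per surviving edge should suffice for multiplicative Chernoff bounds on every retained entry, while entries whose cumulative contribution is below roughly $\eps/(nk)$ are explicitly dropped. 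The final subtlety is to keep the sampling, accumulation, and rounding within $O(\log(nU/\eps))$-bit floating point, which will require combining the walk sampling with the numerical stability machinery of \cite{GNY25}. Summing the costs geometrically across the $k$ levels then produces the stated $\Otil(m\,2^{O(\sqrt{\log n})} \log(U)\log^2(U\eps^{-1}\delta^{-1}))$ running time.
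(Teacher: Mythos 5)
Your proposal takes a genuinely different route from the paper, but it has a gap at its technical heart that I do not think can be repaired without essentially reinventing the paper's main machinery.

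The paper does not build a multi-level Schur-complement chain at all. It runs a single-level iterative procedure (the Threshold Decay framework of \cite{GNY25}), and the algorithmic innovation is a \emph{low-diameter cover with respect to the probability distance} $D_{\LL}(i,j) = -\log_{nU}((\LL^{-1})_{ij}) + 2$. This cover is built up front by solving $n^{o(1)}$ auxiliary Laplacian systems with random right-hand sides; it is then used inside Threshold Decay to \emph{predict} which coordinates of the current partial solution are too small to matter, so that each iteration can solve only on a small ``boundary-expanded'' active set $H^{(t)}$ with $\sum_t |H^{(t)}| \le n^{1+o(1)}$. No Schur complements are sparsified, and no recursion is used.

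The central gap in your plan is the step ``entrywise-accurate sparsification of the Schur complement via random-walk sampling, with $2^{O(\sqrt{\log n})}$ samples per retained edge.'' Three things go wrong. First, a walk whose interior lies in a piece of combinatorial diameter $O(\sqrt{\log n})$ need not be short: it can revisit vertices arbitrarily often, so the claim ``all contributing walks are short'' is false, and there is no $2^{O(\sqrt{\log n})}$ bound on walk length from the diameter constraint alone. Second, even granting short walks, the entries of $\Schur(\LL_i, S_i)$ can span an exponentially large dynamic range, and Monte-Carlo estimation of a probability $p$ to multiplicative error needs $\Omega(1/p)$ samples, so exponentially small entries cannot be recovered with $2^{O(\sqrt{\log n})}$ samples. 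Third, the proposal to ``drop entries whose cumulative contribution is below $\eps/(nk)$'' is exactly the move that cannot be made blindly: the whole difficulty of entrywise approximation is that you do not know \emph{a priori} which entries of the eventual solution $\LL^{-1}\bb$ are small, and the right-hand side at a deeper level may be concentrated on precisely the vertices whose coupling runs through those ``small'' Schur-complement entries. This is the problem the paper's low-diameter cover in $D_{\LL}$ is designed to solve; using combinatorial diameter instead of probability distance loses the structure that makes safe pruning possible. Finally, note that entrywise approximation of a matrix is not in general preserved under inversion, so even a per-entry $(1\pm\eta)$ surrogate $\LL_{i+1}$ for the Schur complement does not immediately yield a per-entry $(1\pm\eta')$ surrogate for $(\Schur(\LL_i,S_i))^{-1}\bb$ without a further argument; any walk-length-based argument compounds multiplicatively over walk length, and, as noted above, walk lengths are not bounded by cluster diameter. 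To make this line of attack sound you would, at minimum, need to replace combinatorial diameter with the probability distance $D_{\LL}$, at which point you are back to needing something very much like the paper's cover construction to even evaluate that distance.
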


Our algorithm first constructs a low-diameter cover by solving a collection of linear systems with normwise error bounds on random right-hand side vectors. We then use this low-diameter cover to \emph{predict} the scale of the entries of the solution in an adaptive manner, thereby achieving an entrywise approximate solution.

In the next subsection, we introduce our notation and review the relevant results from the literature. Equipped with these preliminaries, we provide an overview of our algorithm and the main ideas behind the proof of \Cref{thm:main} in \Cref{sec:technical}. We discuss related work in \Cref{sec:related}. In \Cref{sec:low-diam-cover}, we define our low-diameter cover and present an algorithm for constructing it with high probability. Finally, \Cref{sec:almost-lin-alg} presents our almost-linear–time algorithm for solving SDDM linear systems with entrywise approximation guarantees, together with the proofs of correctness and bit-complexity bounds. We conclude in \Cref{sec:conclusion} with a discussion of open problems and future directions.

\subsection{Preliminaries}
\label{sec:prelim}

\textbf{Notation.}  
We denote $\{1, 2, \dots, n\}$ by $[n]$. For an undirected graph $G = (V, E, w)$ and a vertex $v \in V$, let $N(v)$ denote the set of neighbors of $v$. For a subset of vertices $S \subseteq V$, we denote by $G(S)$ the subgraph of $G$ induced by $S$. We use $\Otil$ notation to suppress polylogarithmic factors in $n$ and polyloglog factors in $U \epsilon^{-1} \delta^{-1}$. Formally, $\Otil(f) = O(f \cdot \log(n \cdot \log(U \epsilon^{-1} \delta^{-1})))$. We also note that $\poly\log n$ factors are at most $2^{O(\sqrt{\log n})}$.

We use bold lowercase letters to denote vectors and bold uppercase letters to denote matrices. The vector of all ones is denoted by $\1$, and the $i$-th standard basis vector by $\ee^{(i)}$. We do not explicitly specify their dimensions when they are clear from context. For a matrix $\MM$, we denote its $(i, j)$-entry by $\MM_{ij}$. If $S$ and $T$ are subsets of its row and column indices, respectively, then $\MM_{ST}$ denotes the submatrix formed by the entries $\{ (i, j) : i \in S,\, j \in T \}$.

For $\vv \in \R^n$, let $\diag(\vv) \in \R^{n \times n}$ denote the diagonal matrix with the entries of $\vv$ on its diagonal. For $\MM \in \R^{n \times n}$, let $\diag(\MM) \in \R^{n \times n}$ denote the diagonal matrix obtained from $\MM$ by zeroing out all off-diagonal entries. The Moore–Penrose pseudoinverse of $\MM$ is denoted by $\MM^\dagger$.

\begin{definition}[RDDL and SDDM Matrices]
A matrix $\MM \in \R^{n \times n}$ is called an L-matrix if $\MM_{ij} \le 0$ for all $i \neq j$ and $\MM_{ii} > 0$ for all $i \in [n]$.  
It is \emph{row diagonally dominant} (RDD) if for every $i \in [n]$,  
\[
|\MM_{ii}| \ge \sum_{j \in [n] \setminus \{ i \}} |\MM_{ij}|.
\]
We call $\MM$ \emph{RDDL} if it is both an L-matrix and row diagonally dominant.  
An \emph{SDDM matrix} is an invertible symmetric RDDL matrix.
\end{definition}

In this paper, we focus exclusively on SDDM matrices. An equivalent characterization is that such matrices are principal submatrices of undirected graph Laplacians. We refer to the corresponding graph as the \emph{associated graph} of the SDDM matrix, defined as follows.

\begin{definition}[Associated Graph of an RDDL Matrix]
For an RDDL matrix $\MM \in \R^{n \times n}$, we define the associated weighted graph $G = ([n+1], E, w)$ by adding a dummy vertex $n+1$. For every pair $i, j \in [n]$ with $i \neq j$, we set the edge weight $w(i, j) = -\MM_{ij}$.  
Additionally, we set the edge weight $w(i, n+1) = \sum_{j \in [n]} \MM_{ij}$.
\end{definition}

The associated graph of an SDDM matrix is defined analogously.  
The solution of a linear system with an SDDM matrix is closely connected to random walks on its associated graph, which we define next.

\paragraph{Random walks.}  
Given an undirected weighted graph $G = ([n], E, w)$ with edge weights $w \in \R_{\ge 0}^{E}$, a random walk on $G$ moves from a vertex to one of its neighbors independently of all previous steps.  
If the walk is currently at vertex $i$, it moves to $j \in N(i)$ in the next step with probability
\[
\AA_{ij} := \frac{w(i, j)}{\sum_{k \in N(i)} w(i, k)}.
\]
In other words, $\AA$ is the transition matrix of the Markov chain associated with $G$.  
For unweighted graphs, this simplifies to a uniform transition probability $1 / |N(i)|$ for each neighbor.

\begin{definition}[Escape Probability]
The \emph{escape probability} $\P(s, t, p)$ denotes the probability that a random walk starting at vertex $s$ visits vertex $t$ before vertex $p$.
\end{definition}

The following result characterizes the relationship between the entries of the inverse of an invertible SDDM matrix and the escape probabilities in its associated graph.

\begin{lemma}[Lemma 2.7 of \cite{GNY25}]
\label{lemma:escape-prob}
Let $\LL \in \R^{n \times n}$ be an invertible RDDL matrix. Then, for all $s, t \in [n]$,
\[
\P(s, t, n+1) = \frac{(\LL^{-1})_{st}}{\LL_{tt} \, (\LL^{-1})_{tt}}.
\]
Furthermore, $(\LL^{-1})_{tt} > 0$ for every $t \in [n]$.
\end{lemma}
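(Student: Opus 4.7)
I would prove the identity by a standard first-step / harmonic-function reduction that turns the escape probability into a linear system in $\LL$, and separately derive the positivity of $(\LL^{-1})_{tt}$ from the Neumann series of $\LL^{-1}$. Both arguments use the factorization $\LL = \DD(\II - \AA^{\mathrm{in}})$, where $\DD = \diag(\LL)$ records the degrees in the associated graph (note $d(s) = \LL_{ss}$) and $\AA^{\mathrm{in}} \in \R^{n \times n}$ is the sub-stochastic transition matrix obtained by restricting the full walk on $[n+1]$ to its $[n] \times [n]$ block, i.e., killing the walk upon reaching $n+1$.

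\textbf{Harmonic reduction.} Let $h(s) := \P(s, t, n+1)$ for $s \in [n+1]$, with $h(n+1) = 0$. For every $s \in [n] \setminus \{t\}$, first-step analysis on the walk yields $h(s) = \sum_{j \in N(s)} \AA_{sj} h(j)$. Multiplying through by $\LL_{ss} = d(s)$ and using $w(s,j) = -\LL_{sj}$ for $j \in [n] \setminus \{s\}$ (the $j = n+1$ contribution vanishes because $h(n+1) = 0$) rearranges to $(\LL \bar h)_s = 0$, where $\bar h := h|_{[n]}$. Hence $\LL \bar h = c\,\ee^{(t)}$ for some scalar $c$, so $\bar h(s) = c \cdot (\LL^{-1})_{st}$. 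The boundary value of $h$ at $s = t$ then fixes $c = 1/(\LL_{tt}(\LL^{-1})_{tt})$, yielding the claimed formula. Equivalently, I can read the formula off from two classical Markov-chain identities: $V_{st} := \LL_{tt}(\LL^{-1})_{st}$ is the expected number of visits to $t$ by the killed walk started at $s$, and the first-passage decomposition $V_{st} = \Pr[\tau_t < \tau_{n+1} \mid X_0 = s] \cdot V_{tt}$ expresses the escape probability purely in terms of $\LL^{-1}$.

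\textbf{Positivity of $(\LL^{-1})_{tt}$.} Invertibility of $\LL$ together with $\AA^{\mathrm{in}} \geq 0$ being sub-stochastic forces $\rho(\AA^{\mathrm{in}}) < 1$, so the Neumann series $(\II - \AA^{\mathrm{in}})^{-1} = \sum_{k \geq 0} (\AA^{\mathrm{in}})^k$ converges entrywise to a nonnegative matrix, and $\LL^{-1} = (\II - \AA^{\mathrm{in}})^{-1}\DD^{-1}$ is nonnegative. The $k = 0$ term already contributes $1/\LL_{tt}$ to $(\LL^{-1})_{tt}$, so $(\LL^{-1})_{tt} \geq 1/\LL_{tt} > 0$.

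\textbf{Main obstacle.} The argument is essentially bookkeeping; the only delicate point is to match the random-walk normalization convention so that the $\LL_{tt}$ factor in the denominator of the formula appears correctly. Specifically, whether the starting vertex $s = t$ counts as an immediate "visit" to $t$ in the definition of $\P$ controls the boundary value of $h$ at $t$ and hence the scalar $c$ above; once this convention is fixed consistently, both parts follow from one-line invocations of classical Markov-chain identities.
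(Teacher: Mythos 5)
The paper only \emph{cites} this as Lemma~2.7 of \cite{GNY25} and gives no proof, so there is no in-paper argument to compare against; I am evaluating your proposal on its own.

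Your harmonic-function / first-step setup and your Neumann-series positivity argument are both the standard, correct routes, and your reduction $\LL\bar h = c\,\ee^{(t)}$ is fine. The gap is exactly the step you flag as ``delicate'' and then wave away: you assert that the boundary value at $s=t$ fixes $c = 1/(\LL_{tt}(\LL^{-1})_{tt})$, but you never compute that boundary value, and under the paper's own definition of $\P(s,t,n+1)$ (the probability the walk started at $s$ visits $t$ before $n+1$) one has $h(t)=\P(t,t,n+1)=1$, so $c\,(\LL^{-1})_{tt}=1$ forces $c = 1/(\LL^{-1})_{tt}$, giving $\P(s,t,n+1) = (\LL^{-1})_{st}/(\LL^{-1})_{tt}$ with \emph{no} $\LL_{tt}$ in the denominator. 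Your alternative derivation gives the same: since $\LL^{-1}=(\II-\AA^{\mathrm{in}})^{-1}\DD^{-1}$, the expected-visits count is $V_{st}=\LL_{tt}(\LL^{-1})_{st}$, and the first-passage decomposition gives $\P(s,t,n+1)=V_{st}/V_{tt}=(\LL^{-1})_{st}/(\LL^{-1})_{tt}$; the $\LL_{tt}$ cancels. No choice of visit-counting convention rescues the extra $\LL_{tt}$: for example with $n=2$, $\LL_{11}=\LL_{22}=2$, $\LL_{12}=\LL_{21}=-1$, one computes $\P(1,2,3)=1/2$ directly, while the stated right-hand side equals $(1/3)/(2\cdot 2/3)=1/4$. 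So both of your arguments, carried out honestly, prove a formula different from the one you claim to prove, and the sentence ``the boundary value then fixes $c=1/(\LL_{tt}(\LL^{-1})_{tt})$'' is a step that fails rather than a convention choice. (Whether the extra $\LL_{tt}$ is a transcription slip in the lemma statement or a differing normalization in \cite{GNY25} is a separate question --- the downstream uses in this paper only need the weaker multiplicativity consequences, which hold under either version --- but as written your proof does not derive the formula it asserts.)
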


We next recall a few linear algebraic notations and definitions that we use throughout the paper.

\noindent \textbf{Norms.}  
The $\ell_1$-norm and $\ell_{\infty}$-norm of a vector $\vv \in \R^n$ are defined as
$\norm{\vv}_1 := \sum_{i \in [n]} \abs{\vv_i}$ and
$\norm{\vv}_{\infty} := \max_{i \in [n]} \abs{\vv_i}$, respectively.
The induced $\ell_1$ and $\ell_{\infty}$ norms for a matrix $\MM \in \R^{n \times n}$ are defined as
\begin{align*}
    \norm{\MM}_1 &:= \max_{\vv \in \R^n : \norm{\vv}_1 = 1} \norm{\MM \vv}_1,\\
    \norm{\MM}_{\infty} &:= \max_{\vv \in \R^n : \norm{\vv}_{\infty} = 1} \norm{\MM \vv}_{\infty}.
\end{align*}
Both vector and induced norms satisfy the triangle inequality and consistency property; for example,
$\norm{\AA \BB}_{\infty} \le \norm{\AA}_{\infty} \cdot \norm{\BB}_{\infty}$.
We also define $\norm{\AA}_{\max} := \max_{i, j} \abs{\AA_{ij}}$. The \emph{energy norm} with respect to a positive definite matrix $\LL$ is defined as  
\[
\norm{\vv}_{\LL} := \sqrt{\vv^\top \LL \vv}.
\]

\vspace{1em}

\noindent \textbf{Schur complement.}  
Given a block matrix
\[
    \MM
    =
    \begin{bmatrix}
        \AA & \BB\\
        \CC & \DD
    \end{bmatrix},
\]
the \emph{Schur complement} of $\MM$ with respect to $\AA$ is
$\text{Sc}(\MM, \AA) = \DD - \CC \AA^{-1} \BB$, assuming $\AA$ is invertible.
If both $\SS := \text{Sc}(\MM, \AA)$ and $\AA$ are invertible, then $\MM$ is invertible and
\begin{equation} \label{eq:block-Sc-inversion}
    \MM^{-1}
    =
    \begin{bmatrix}
        \AA^{-1} + \AA^{-1} \BB \SS^{-1} \CC \AA^{-1} & - \AA^{-1} \BB \SS^{-1}\\
        - \SS^{-1} \CC \AA^{-1} & \SS^{-1}
    \end{bmatrix}.
\end{equation}

\paragraph{Solving SDDM linear systems with normwise error bounds.}  
The following result by Spielman and Teng~\cite{ST04:journal} was a breakthrough, providing a near-linear–time algorithm for solving SDDM linear systems with normwise error guarantees.

\begin{lemma}[Theorem 5.5 of \cite{ST04:journal}]
\label{lemma:near-linear-solver}
Let $\LL \in \R^{n \times n}$ be an SDDM matrix with $m$ nonzero entries, and let $\bb \in \R^{n}$ have integer entries in $[-U, U]$.  
For any $\epsilon > 0$ and $\delta \in (0,1)$, there exists an algorithm that, with probability at least $1-\delta$ and using  
$\Otil(m \log(U \epsilon^{-1} \delta^{-1}) \log(\epsilon^{-1}))$ bit operations, computes a vector $\widetilde{\xx}$ such that
\[
\norm{\widetilde{\xx} - \LL^{\dagger} \bb}_{\LL}
\leq
\epsilon \cdot \norm{\LL^{\dagger}\bb}_{\LL}.
\]
Moreover, the entries of $\widetilde{\xx}$ can be represented using $O(n \log(U \epsilon^{-1}))$ bits in fixed-point arithmetic.
\end{lemma}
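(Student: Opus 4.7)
The plan is to reduce the SDDM system to a graph Laplacian and solve it by preconditioned Chebyshev iteration against a recursively constructed multilevel preconditioner whose condition number against $\LL$ is at most $\poly\log n$. Because $\LL$ is a principal submatrix of the Laplacian $\mathcal{L}$ of its associated graph (using the dummy vertex $n+1$), it suffices to build a fast operator $\MM$ for $\mathcal{L}$ whose inverse can be applied in $\Otil(m)$ arithmetic operations and that satisfies $\MM \preceq \mathcal{L} \preceq \poly\log n \cdot \MM$; restricting to the first $n$ coordinates then gives a preconditioner for $\LL$.

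First I would compute a low-stretch spanning tree $T$ of the associated graph in $\Otil(m)$ time with total stretch $\Otil(m\log n)$, and use tree distances as surrogates for effective resistances in a Spielman--Srivastava style spectral sparsifier. Sampling off-tree edges at a rate of $\Otil(\log(n/\delta))$ times their stretch produces, with probability $1-\delta$, a graph $H$ with $\Otil(n)$ edges that spectrally $2$-approximates the original. To drive the dimension down I would then perform a partial Cholesky elimination on a constant fraction of vertices whose elimination incurs only bounded fill-in (the standard choice is low-degree vertices under a random permutation), re-sparsify the resulting Schur complement, and recurse. After $O(\log n)$ levels the base case has constant size; the preconditioner $\MM^{-1}$ is the composition of the resulting triangular factors and sparsifiers, applied in $\Otil(m)$ total work.

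Given such an $\MM$, preconditioned Chebyshev iteration converges to $\norm{\xxtil - \LL^{\dagger}\bb}_{\LL} \le \eps \cdot \norm{\LL^{\dagger}\bb}_{\LL}$ in $O(\sqrt{\kappa(\MM^{-1}\LL)}\log(1/\eps)) = \Otil(\log(1/\eps))$ iterations, each costing one $\LL$-apply and one $\MM^{-1}$-apply, i.e.\ $\Otil(m)$ arithmetic operations per iteration. To turn arithmetic operations into bit operations I would carry every intermediate scalar in $O(\log(nU\eps^{-1}))$-bit fixed point; since $\LL$ has integer entries in $[-U,U]$, its condition number is at most $\poly(n,U)$, so this precision controls all quantities that arise. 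The hard part will be the end-to-end numerical stability argument: one must show that Chebyshev remains backward stable when the preconditioner is itself approximate and rounded, and that the partial-Cholesky factors can be stored at $O(\log(nU\eps^{-1}))$ bits per entry without inflating $\kappa(\MM^{-1}\LL)$, with the $\log(1/\delta)$ factor entering only through the sparsification oversampling rate. Once this is in place, multiplying $\Otil(\log(1/\eps))$ iterations by $\Otil(m)$ work per iteration by $O(\log(nU\eps^{-1}\delta^{-1}))$ bits per operation yields the claimed bit-operation bound, and the $O(n\log(U\eps^{-1}))$-bit representation of the final iterate follows from the fixed-point precision used for each coordinate.
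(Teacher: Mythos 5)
This statement is not proved in the paper at all: it is imported verbatim as Theorem~5.5 of \cite{ST04:journal} and used as a black box (its only role here is to feed \Cref{cor:near-linear-solver-SDDM}). So there is no in-paper argument to compare yours against; what you have written is an attempt to re-derive the Spielman--Teng solver itself. Your outline is the standard modern pipeline for that result (low-stretch tree, stretch-based oversampling as a surrogate for effective resistances, partial Cholesky plus re-sparsification of Schur complements, recursion, preconditioned Chebyshev), and at the level of a roadmap it is accurate. It is closer in spirit to the Koutis--Miller--Peng-style recursive preconditioning chain than to the original ST04 construction via ultrasparsifiers and graph partitioning, but both routes deliver the stated guarantee.

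The issue is that the proposal is a survey of the proof, not a proof: every genuinely difficult ingredient is named and then deferred. The spectral correctness of stretch-based sampling, the fill-in and condition-number control of the recursive sparsified-Cholesky chain, and above all the finite-precision stability of preconditioned Chebyshev when the preconditioner is itself approximate and rounded are exactly the content that occupies the bulk of \cite{ST04:journal} and its successors; you explicitly flag the last of these as ``the hard part'' without supplying it. Two smaller points. First, reducing the SDDM system to the associated Laplacian and back requires an argument that an $\eps$-accurate solution in $\norm{\cdot}_{\mathcal{L}}$ restricted to the first $n$ coordinates is $\eps$-accurate in $\norm{\cdot}_{\LL}$; this is true but needs the Schur-complement identity, not just ``restricting to the first $n$ coordinates.'' Second, your closing claim that each coordinate is carried in $O(\log(nU\eps^{-1}))$-bit fixed point does not match the stated $O(n\log(U\eps^{-1}))$-bit representation: in fixed-point arithmetic the solution entries of an SDDM system can be exponentially small in $n$ relative to $\norm{\bb}$, which is precisely why the lemma allots $\Theta(n)$ times more bits per entry (and why this paper exists). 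As written, the proposal would be accepted as a citation-plus-sketch but not as a self-contained proof of the lemma.
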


The above result bounds the error with respect to the energy norm.  
For our purposes, it is often more convenient to work with the Euclidean norm.  
\Cref{cor:near-linear-solver-SDDM}, which follows directly from \Cref{lemma:near-linear-solver} together with the following bound on the condition number of invertible SDDM matrices, provides such an error guarantee in the Euclidean norm.

\begin{lemma}[Corollary 2.5 of \cite{GNY25}]
\label{lemma:norm2-inv}
Let $\LL$ be an $n \times n$ invertible SDDM matrix with integer entries in $[-U, U]$. Then
\[
\norm{\LL^{-1}}_2
=
\frac{1}{\lambda_{\min}(\LL)}
<
n^2.
\]
\end{lemma}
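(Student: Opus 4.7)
The plan is to combine a standard trace-based bound for symmetric positive-definite matrices with the classical electrical-network interpretation of the diagonal entries of $\LL^{-1}$.

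First, I would note that an invertible SDDM matrix is symmetric positive definite (symmetric by assumption, PSD because a symmetric diagonally dominant matrix with nonnegative diagonal is PSD, and PD because $\LL$ is invertible). Consequently $\|\LL^{-1}\|_2 = 1/\lambda_{\min}(\LL) = \lambda_{\max}(\LL^{-1})$, and all eigenvalues of $\LL^{-1}$ are strictly positive.

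Second, I would apply the elementary bound $\lambda_{\max}(M) \leq \mathrm{tr}(M)$ valid for any symmetric PSD matrix $M$ (its trace is the sum of nonnegative eigenvalues), with $M = \LL^{-1}$. This reduces the problem to controlling $\sum_i (\LL^{-1})_{ii}$; moreover, the inequality is strict as soon as $n \geq 2$, because $\LL^{-1}$ then has at least two strictly positive eigenvalues.

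The main combinatorial step is to identify $(\LL^{-1})_{ii}$ with an effective resistance in the associated graph $G$ of $\LL$. Setting $\xx := \LL^{-1}\ee^{(i)}$, the vector $\xx$ records the vertex potentials that arise when one unit of current is injected at $i$ and drained at the grounded vertex $n+1$, so $(\LL^{-1})_{ii} = \xx_i = R_{\mathrm{eff}}(i, n+1)$ in $G$; this is the grounded-Laplacian identity underlying \Cref{lemma:escape-prob}, and can be read off from that lemma by setting $s = t$ and relating the return probability to the resistance. Because $\LL$ has integer entries in $[-U,U]$ with $\LL_{ij} \leq 0$ for $i \neq j$, every weight $w(i,j) = -\LL_{ij}$ or $w(i,n+1) = \sum_j \LL_{ij}$ of $G$ is a nonnegative integer, hence at least $1$ whenever the edge is present, giving edge resistance at most $1$. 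Invertibility of $\LL$ forces $G$ to be connected: any subset of $[n]$ with no edges to the rest of $G$ (including to $n+1$) would produce a nontrivial kernel vector of $\LL$, a contradiction. Thus each $i \in [n]$ admits a path to $n+1$ in $G$ with at most $n$ edges, and the series law gives $R_{\mathrm{eff}}(i, n+1) \leq n$.

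Stringing the pieces together yields
\[
\|\LL^{-1}\|_2 \;\leq\; \sum_{i=1}^n (\LL^{-1})_{ii} \;=\; \sum_{i=1}^n R_{\mathrm{eff}}(i, n+1) \;\leq\; n \cdot n \;=\; n^2,
\]
with the first inequality strict once $n \geq 2$. The main obstacle is really only the third step, namely the clean identification of $(\LL^{-1})_{ii}$ with an effective resistance for a \emph{grounded} rather than a full Laplacian; this is already supplied by \Cref{lemma:escape-prob}, so no new machinery is needed and the remaining bounds are routine consequences of integrality and diagonal dominance.
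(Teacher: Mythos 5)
Your proof is correct. Note that the paper does not actually prove this statement; it is imported verbatim as Corollary~2.5 of \cite{GNY25}, so there is no in-paper argument to compare against. Your chain $\norm{\LL^{-1}}_2=\lambda_{\max}(\LL^{-1})\le \operatorname{tr}(\LL^{-1})=\sum_i R_{\mathrm{eff}}(i,n+1)\le n\cdot n$ is sound: the grounded-Laplacian identity $(\LL^{-1})_{ii}=R_{\mathrm{eff}}(i,n+1)$ is standard, integrality gives every present edge resistance at most $1$, and your kernel-vector argument correctly shows each $i$ reaches the ground vertex by a path of at most $n$ edges. Two small caveats. First, your suggestion that the resistance identity can be ``read off'' from \Cref{lemma:escape-prob} by setting $s=t$ does not literally work: that formula at $s=t$ yields $\P(t,t,n+1)=1/\LL_{tt}$, which only makes sense under a return-probability convention, so you should either invoke the grounded-Laplacian/effective-resistance identity directly (as a known fact) or derive $(\LL^{-1})_{ii}=\LL_{ii}^{-1}\sum_k(\AA^k)_{ii}$ as in the proof of \Cref{claim:dist-prop}. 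Second, your strictness comes only from the trace bound and hence only covers $n\ge 2$; for $n=1$ the matrix $\LL=[1]$ gives $\norm{\LL^{-1}}_2=1=n^2$, so the strict inequality as stated degenerates there --- this is an edge case of the cited statement itself rather than of your argument, and is harmless for the paper, which only ever uses the non-strict bound and assumes $n$ large.
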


\begin{corollary}[Corollary 3.2 of \cite{GNY25}]
\label{cor:near-linear-solver-SDDM}
Let $\LL \in \R^{n \times n}$ be an invertible SDDM matrix with $m$ nonzero entries, and let $\bb \in \R^{n}$ have integer entries in $[-U, U]$.  
For any $\epsilon > 0$ and $\delta \in (0,1)$, there exists an algorithm that, with probability at least $1-\delta$ and using  
$\Otil(m \log^2(U \epsilon^{-1} \delta^{-1}))$ bit operations, computes a vector $\widetilde{\xx}$ such that
\[
\norm{\widetilde{\xx} - \LL^{\dagger} \bb}_{2}
\leq
\epsilon \norm{\bb}_{2}.
\]
Moreover, the entries of $\widetilde{\xx}$ can be represented using $O(\log(n U \epsilon^{-1}))$ bits in fixed-point arithmetic.
\end{corollary}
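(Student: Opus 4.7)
The plan is to invoke the near-linear-time solver from \Cref{lemma:near-linear-solver} with a slightly smaller accuracy parameter, and then convert the resulting energy-norm error bound to a Euclidean-norm error bound using the spectral bound from \Cref{lemma:norm2-inv}. The corollary is essentially a direct conversion, so there is no serious obstacle; the only choice to make is how much to tighten $\eps$ when invoking the solver.

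First, I would set $\eps' := \eps / n^2$ and invoke the algorithm of \Cref{lemma:near-linear-solver} on $(\LL, \bb, \eps', \delta)$ to obtain $\xxtil$ with
\[
\norm{\xxtil - \LL^{-1}\bb}_{\LL} \le \eps' \cdot \norm{\LL^{-1}\bb}_{\LL}.
\]
Next I would perform the norm conversion. Since $\LL$ is symmetric positive definite, for every $\vv$ we have $\norm{\vv}_{\LL}^2 = \vv^\top \LL \vv \ge \lambda_{\min}(\LL) \norm{\vv}_2^2$, hence $\norm{\vv}_2 \le \norm{\vv}_{\LL}/\sqrt{\lambda_{\min}(\LL)}$. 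Also, $\norm{\LL^{-1}\bb}_{\LL}^2 = \bb^\top \LL^{-1} \bb \le \norm{\LL^{-1}}_2 \norm{\bb}_2^2$. Combining these two observations and using $\norm{\LL^{-1}}_2 = 1/\lambda_{\min}(\LL) < n^2$ from \Cref{lemma:norm2-inv}, I obtain
\[
\norm{\xxtil - \LL^{-1}\bb}_2 \;\le\; \frac{\norm{\xxtil - \LL^{-1}\bb}_{\LL}}{\sqrt{\lambda_{\min}(\LL)}} \;\le\; \eps' \cdot \norm{\LL^{-1}}_2 \cdot \norm{\bb}_2 \;<\; \eps' \cdot n^2 \cdot \norm{\bb}_2 \;=\; \eps \cdot \norm{\bb}_2,
\]
which is exactly the claimed Euclidean-norm bound.

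For the running time, substituting $\eps' = \eps/n^2$ into the bound from \Cref{lemma:near-linear-solver} gives
\[
\Otil\!\left( m \log(U n^2 \eps^{-1} \delta^{-1}) \cdot \log(n^2 \eps^{-1}) \right) \;=\; \Otil\!\left( m \log^2(U \eps^{-1} \delta^{-1}) \right),
\]
since any $\log n$ factor is absorbed by the $\Otil$ notation. Finally, the solver of \Cref{lemma:near-linear-solver} returns a fixed-point vector using $O(n \log(U (\eps')^{-1})) = O(n \log(n U \eps^{-1}))$ bits in total; rounding each coordinate to precision $\Theta(\eps / (n^2 U))$ reduces the per-entry bit length to $O(\log(n U \eps^{-1}))$ while changing $\xxtil$ by at most an additional $O(\eps \norm{\bb}_2)$ in $\ell_2$, which can be absorbed into the error budget by tightening $\eps'$ by a constant factor. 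This completes the proof.
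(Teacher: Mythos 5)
Your proof is correct and follows essentially the same route the paper intends: the text explicitly introduces \Cref{cor:near-linear-solver-SDDM} as something that "follows directly from \Cref{lemma:near-linear-solver} together with" the spectral bound in \Cref{lemma:norm2-inv}, and you carry out exactly that conversion by tightening $\eps$ to $\eps' = \eps/n^2$, passing from energy norm to Euclidean norm via $\lambda_{\min}(\LL)$, and bounding $\norm{\LL^{-1}\bb}_{\LL}$ by $\sqrt{\norm{\LL^{-1}}_2}\norm{\bb}_2$. The running-time and per-entry bit-length bookkeeping are handled in the standard way, with $\log n$ factors absorbed into the $\Otil$ notation.
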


\subsection{Technical Overview}
\label{sec:technical}

Our approach builds on the \emph{threshold decay (TD)} framework introduced in \cite{GNY25}. We first recall the main components of this framework and then explain how we adapt it to obtain an almost-linear-time algorithm for solving SDDM linear systems with entrywise approximation guarantees.

\paragraph{Threshold decay (TD) framework.}
The TD framework relies on three key observations:
\begin{enumerate}
    \item A solution with normwise error guarantees (for a polynomially small error parameter) also provides entrywise approximations for the large coordinates of the solution. Moreover, every solution to a linear system has at least one large entry.
    
    \item Given a subset of entries $F \subseteq [n]$, if we substitute an entrywise approximation for the entries in $F$ into the system, the resulting smaller linear system is only mildly affected. More precisely, substituting an $\exp(\epsilon)$-entrywise approximation for $F$ changes the solution on $[n] \setminus F$ by at most a multiplicative factor of $\exp(\epsilon)$.
    
    \item Removing small entries of the solution introduces only a limited error. Specifically, if $C \subseteq [n]$ satisfies $(\LL^{-1}\bb)_i < \theta$ for all $i \in C$, then with $F = [n] \setminus C$,
    \[
    \Bigl\|\begin{bmatrix}
        (\LL_{FF})^{-1}\bb_F ; \boldsymbol{0}
    \end{bmatrix} - \LL^{-1}\bb\Bigr\|_2 \le n^3 U \theta.
    \]
\end{enumerate}

The first two observations imply that normwise guarantees can be leveraged to recover entrywise approximations for some coordinates of the solution, which can then be substituted back to form a smaller system. Repeating this process allows us to iteratively recover approximations for progressively smaller entries. The third observation shows that if we knew in advance which entries were small, we could safely remove them and focus computation on a reduced system that preserves accuracy for the large entries.

These insights naturally motivate the following iterative algorithm. We maintain three disjoint subsets of indices partitioning $[n]$:  
\emph{(i)} a \emph{solved set} $P$ containing entries whose entrywise approximations have been computed,  
\emph{(ii)} an \emph{inactive set} $Q$ of entries deemed small enough to omit, and  
\emph{(iii)} an \emph{active set} $R$ of remaining entries participating in the current linear system.  
At each iteration, the algorithm maintains a threshold parameter $\theta$ and proceeds as follows:
\begin{enumerate}
    \item Add to $R$ all entries in $Q$ whose magnitude is at least~$\theta$. Update $Q$ as $Q \gets [n] \setminus (R \cup P)$.
    \item Solve the linear system with coefficient matrix $\LL_{RR}$ and recover entrywise approximations for the entries in $R$ that are large relative to $\theta$. Move these entries from $R$ to $P$.
    \item Substitute the newly solved entries into the system to obtain an updated right-hand-side vector~$\bb$.
    \item Reduce the threshold $\theta$ by a polynomial factor in $nU$ and repeat.
\end{enumerate}
We remark that the first step is the most technically challenging and constitutes our main contribution.
To make the algorithm efficient, we must restrict the system solve to the active set 
$
R
$.
However, this restriction provides no direct information about the large entries in 
$
Q
$.
The core difficulty is to identify these large entries without explicitly solving on 
$
Q
$.
Our key idea is to achieve this through carefully designed \emph{predictions}.

\paragraph{TD with prediction.}
Achieving almost-linear running time requires keeping the active set small on average across iterations. This, in turn, depends on having good \emph{a priori} predictions for the magnitudes of the solution entries. A notable instance where such predictions are available is the computation of the inverse of an SDDM matrix (see \cite{GNY25}). Specifically, for any pair $(i,j)$ with $\LL_{ij} \ne 0$, one has
\[
\frac{1}{nU}\,\LL^{-1}\ee^{(j)} \;\le\; \LL^{-1}\ee^{(i)} \;\le\; nU\,\LL^{-1}\ee^{(j)},
\]
implying that the $i$-th column of $\LL^{-1}$ provides reliable predictions for the $j$-th column. Using this property, \cite{GNY25} obtained an $\tilde{O}(mn)$-time algorithm for computing the inverse of an SDDM matrix. Once one column of $\LL^{-1}$ is computed, each subsequent column can be obtained in $\tilde{O}(m)$ time.

However, this prediction only applies when inverting the entire matrix, i.e., solving $O(n)$ systems. For a single system, we must establish an \emph{upper bound} on the small entries in order to identify which entries can be safely ignored when extracting the large ones. Directly using the inequality gives only \emph{lower bounds} on the small entries, making it useless here.

\paragraph{Low-diameter (LD) cover.}
Obtaining accurate predictions for a single linear system
requires a more sophisticated approach. In this paper, we obtain such predictions by first computing a \emph{low-diameter cover} for the graph associated with~$\LL$, where we define the distance between two vertices $i$ and $j$ in terms of their corresponding entry in the inverse:
\[
D_{\LL}(i,j) := -\log_{nU}\!\big((\LL^{-1})_{ij}\big) + 2.
\]
An important property of this distance function is that it satisfies the triangle inequality.

We construct a collection of pairs $(V, W)$ of \emph{inner and outer balls}, where $V \subseteq W \subseteq [n]$, satisfying the following properties:
\begin{enumerate}[label=(\roman*)]
    \item Every vertex is contained in at least one inner ball.
    \item Each vertex is contained in at most $n^{o(1)}$ outer balls.
    \item For any pair $i,j$ within the same outer ball, $D_{\LL}(i,j) \le n^{o(1)}$.
    \item If $i$ lies in an inner ball and $j$ lies outside its corresponding outer ball, then $D_{\LL}(i,j) \ge n^{\omega(1)}$.
\end{enumerate}

We next explain how such a low-diameter cover can be leveraged to solve the linear system in almost-linear time, and subsequently describe how to compute such a cover efficiently.

\paragraph{Almost linear-time solver using LD cover.}
In iteration $t$ of the TD framework, let $P^{(t)}$ denote the solved set where the entrywise approximations of the entries have been computed. We now solve on the reduced system on $[n] \setminus P^{(t)}$ (obtained by plugging in the solved entries in $P^{(t)}$)
\begin{align*}
\LL_{[n] \setminus P^{(t)},\, [n] \setminus P^{(t)}} \xx = \widehat{\bb}^{(t)}
\end{align*}
where $\widehat{\bb}^{(t)} \approxbar     \bb_{[n] \setminus P^{(t)}} - \LL_{[n] \setminus P^{(t)},\, P^{(t)}} \widetilde{\xx}^{(t)}
$ is the right-hand-side vector of the reduced system.

It can be shown that (see Lemma~3.8 of~\cite{GNY25}) the nonzero entries of $\widehat{\bb}^{(t)}$ differ by at most a polynomial (in~$U$) factor. This allows us to invoke the near-linear-time solver of~\cite{ST04:journal} to compute a solution with normwise error guarantees in each iteration.

Let $I^{(t)}$ denote the set of nonzero entries of $\widehat{\bb}^{(t)}$.
The entries indexed by $I^{(t)}$ in the solution 
$
(\LL_{[n] \setminus P^{(t)},\, [n] \setminus P^{(t)}})^{-1}  \widehat{\bb}^{(t)}
$
are large---this follows directly from a simple application of \Cref{lemma:escape-prob}. We use $I^{(t)}$ as a seed to construct the active set, ensuring that the inactive set contains only small entries of the solution.

We define the \emph{boundary-expanded set} (see \Cref{def:boundary-expanded-set}) of $I^{(t)}$ as the union of inner balls whose corresponding outer balls intersect $I^{(t)}$. Due to the definition of the LD cover, if a vertex $u \in [n]$ does not belong to the boundary-expanded set of $I^{(t)}$, the corresponding outer ball of the inner ball that contains $u$ does not intersect $I^{(t)}$, then the probability distance between $u$ and any $v \in I^{(t)}$ is at least $n^{\omega(1)}$. This implies that the entry at $u$ is small and can be omitted in iteration~$t$ of the TD framework. Therefore, the active set in iteration~$t$, denoted by~$H^{(t)}$, is defined as the intersection of $[n] \setminus P^{(t)}$ and the boundary-expanded set of $I^{(t)}$.

Moreover, we show that the total size of all $H^{(t)}$'s satisfies
    \[
    \sum_t |H^{(t)}| \le n^{1+o(1)}.
    \]
By properties of the TD framework, every vertex is added to and deleted from the active set exactly once. When a vertex $u$ is added to the active set, the distance $d$ to the solved set is bounded by the radius of an outer ball in the LD cover, i.e., $n^{o(1)}$. This gives a lower bound on the $u$-th entry of the solution $(\LL^{-1} \bb)_u \ge \theta \cdot (nU)^{-d}$, where $\theta$ is the current threshold. Since the threshold is reduced by a polynomial factor in $nU$ in each iteration, the vertex will be removed from the active set and added to the solved set after $O(d) = n^{o(1)}$ iterations, concluding the bound on total size of $H^{(t)}$'s.

Maintaining the sets $I^{(t)}$ and $H^{(t)}$ requires additional care to ensure almost-linear running time:
\begin{enumerate}
    \item It can be shown that the right-hand-side vector $\bbhat^{(t)}$ undergoes only $O(m+n)$ updates in total. Hence, the sets $I^{(t)}$ can be maintained in $O(m+n)$ time overall.
    \item The set $H^{(t)}$ is maintained by keeping a counter for each pair in the LD cover that tracks the number of vertices in $I^{(t)}$ contained in the corresponding outer ball. Then $H^{(t)}$ is the union of the inner balls of all pairs with nonzero counters. Since the number of updates to $I^{(t)}$ is bounded and each vertex is covered by a limited number of outer balls, these counters can be updated in almost-linear time. Moreover, since
    \[
    \sum_t |H^{(t)}| \le n^{1+o(1)},
    \]
    and each vertex is covered by at most $n^{o(1)}$ outer balls,
    using the property of our TD framework, the total time to construct all sets $H^{(t)}$ is at most $n^{1+o(1)}$ over the course of the algorithm.
\end{enumerate}

\paragraph{Constructing the LD cover.}
To explain the high-level ideas of our algorithm for constructing the LD cover, we assume exact solutions to linear systems. However, in \Cref{sec:low-diam-cover}, we carefully address bit complexity and the errors introduced when solving linear systems approximately.

Consider the solution to the linear system $\LL \xx = \ee^{(i)}$. Let
\[
V = \{j \in [n] : (\LL^{-1}\ee^{(i)})_j \ge 2^{-\sqrt{\log n}}\}
\quad\text{and}\quad
W = \{j \in [n] : (\LL^{-1}\ee^{(i)})_j \ge 4^{-\sqrt{\log n}}\}.
\]
Then $V \subseteq W$. By the triangle inequality for $D_{\LL}$, one can verify that for the inner ball $V$ and the outer ball $W$, properties (iii) and (iv) of the LD cover are satisfied. Moreover, $i$ is contained in $V$, and therefore property (i) is satisfied for $i$. Thus, one may construct a pair of balls for each $i \in [n]$ by considering the large entries of the solution to $\LL \xx = \ee^{(i)}$.

This naïve approach encounters two main issues: (1) it requires solving $n$ different linear systems; and (2) each vertex may belong to too many outer balls. To address these issues, we instead attempt to construct outer/inner ball pairs for multiple vertices simultaneously by solving a linear system with right-hand side $\boldsymbol{1}_S = \sum_{i\in S} \ee^{(i)}$. By linearity, the large entries of the solution $\LL \xx = \boldsymbol{1}_S$ roughly give the \emph{union} of balls for $i \in S$.
An important observation is that if $i$ is far (in $D_{\LL}$ distance) from all $j \in S \setminus \{i\}$, then the pair of balls corresponding to $i$ form an isolated connected component from the union of the rest, and therefore can be recovered from the solution.

To obtain such an $S$, we select a random subset of $[n]$, where each vertex is included independently. Moreover, instead of using the fixed thresholds $2^{-\sqrt{\log n}}$ and $4^{-\sqrt{\log n}}$ for the inner and outer balls, we consider multiple thresholds that are exponential in a set of distance parameters $d_i$ to ensure that vertices in $S$ remain sufficiently far from one another.

Therefore, we construct the LD cover by solving a \emph{collection of linear systems} with coefficient matrix~$\LL$ and random right-hand-side vectors, and by recovering inner/outer balls from the large entries of the resulting solutions.
We consider $O(\sqrt{\log n})$ different probabilities~$p_j$ (for selecting vertices to set $S$) and $O(\sqrt{\log n})$ different distance thresholds~$d_i$ in a nested loop. For each pair~$(d_i, p_j)$, we run $n^{o(1)}$ independent iterations. In each iteration, every vertex~$u$ is included in~$S$ independently with probability~$p_j$, forming a random subset~$S$. We then compute a normwise approximate solution with error parameter exponential in $d_i$ to the linear system
\[
\LL \xx = \boldsymbol{1}_S.
\]

Let $T$ denote the set of vertices corresponding to the \emph{large} entries (i.e., entries larger than $\exp(-d_i/2)$) in the resulting solution. If a connected component of the subgraph induced by $T$ contains exactly one vertex of $S$, then we construct a pair of inner and outer balls $(V, W)$ on that connected component such that the distance between any vertex inside the inner ball and any vertex outside the outer ball is at least $n^{\omega(1)}$, and the distance between any two vertices within the outer ball is at most $n^{o(1)}$. We construct inner/outer ball pairs only from such connected components and add them to our cover.

This constraint on the connected components ensures that the vertex in $S$ contained in that component is far from all other vertices in $S$. Consequently, the construction satisfies properties~(iii) and~(iv) of the LD cover.

We show that for every vertex~$u \in [n]$, there exists a pair~$(d_i, p_j)$ such that, with probability at least~$n^{-o(1)}$, the random set~$S$ contains a vertex near~$u$ (that is, within its corresponding inner ball) and that vertex is the unique element of~$S$ in its connected component of the subgraph induced by~$T$, which is our requirement for constructing inner/outer balls as explained above. The repeated iterations in the inner loop amplify the probability of this event. Consequently, with high probability, every vertex is contained in at least one inner ball. Moreover, since in each iteration we only include disjoint outer balls, each vertex is covered by at most $n^{o(1)}$ outer balls. Hence, properties~(i) and~(ii) also hold.

\subsection{Related Work}
\label{sec:related}

Solving linear systems with entrywise approximation guarantees dates back to the 1980s and 1990s.  
From a theoretical point of view, Higham~\cite{H90} observed that fast matrix multiplication methods such as Strassen’s~\cite{S69} cannot achieve entrywise approximations for matrix multiplication (which, via standard reductions, can be used for solving linear systems) due to their inherent reliance on subtractions, even when applied to nonnegative matrices.

On the other hand, empirical studies demonstrated that Gaussian elimination is significantly more stable when computing the stationary distribution of Markov chains~\cite{H87,GTH85,HP84} compared to its application to general matrices.  
These studies attributed the improved stability to the fact that Gaussian elimination, when applied to SDDM matrices, does not require subtractions and is therefore highly compatible with floating-point arithmetic.

Subsequent works further analyzed the error bounds and the number of floating-point operations required for computing the stationary distribution~\cite{O93} and for performing LU factorization~\cite{O96} with entrywise approximation guarantees.  
All of these algorithms require cubic time.  
Very recently,~\cite{GNY25} studied this problem for both SDDM and RDDL matrices with the goal of designing subcubic-time algorithms.  
In particular, for solving a linear system with an SDDM matrix, they presented an algorithm with running time $\Otil(m \sqrt{n} \log^2 (U \epsilon^{-1} \delta^{-1}))$.  
Moreover, for inverting an SDDM matrix, they provided an algorithm with running time $\Otil(m n \log^2 (U \epsilon^{-1} \delta^{-1}))$.

The recent work of Gao, Kyng, and Spielman~\cite{GKS23:arXiv} has demonstrated the superior practical performance of near-linear–time Laplacian solvers.  
We hope that our approach can further advance practical solvers, particularly in settings where edge weights vary widely---for example, in solving linear systems that arise as subroutines in interior-point methods for flow problems.

There also exist algorithms in the literature for decomposing graphs into components with small diameter or covering them with low-diameter components, with applications to shortest-path problems and parallel SDDM solvers~\cite{C98,BGKMPT11}.  
However, these low-diameter decompositions and covers differ substantially from ours.  
To begin with, our covers are defined using pairs of inner and outer balls, with the guarantee that vertices inside an inner ball are far from those outside the corresponding outer ball.  
In contrast, the decompositions and covers in the literature are defined using single clusters or balls of small diameter.  
Moreover, our notion of distance is based on the \emph{probability distance}
\[
D_{\LL}(i,j) := -\log_{nU}\!\big((\LL^{-1})_{ij}\big) + 2,
\]
as opposed to the standard (weighted or unweighted) graph distance. Most importantly, our access to the distance $D_{\LL}(\cdot,\cdot)$ is highly restricted since it requires computing entries of the inverse. We address this limitation via a carefully designed oracle query, obtained by solving the system on a random vector with normwise error guarantees.

\section{Low-Diameter Cover}
\label{sec:low-diam-cover}

In this section, we introduce our definition of a low-diameter cover and present an algorithm for constructing it with high probability using SDDM solvers with normwise error bounds as a subroutine.  
We note that the existence of such low-diameter covers is nontrivial, and the proof of correctness of our algorithm also serves as a constructive proof of their existence.

The diameters of the components in our cover are measured with respect to a distance function that we call the \emph{probability distance}.  
We define this notion and establish some of its useful properties in \Cref{sec:prob-dist}.  
In \Cref{sec:low-dim-cover-alg}, we formally define our low-diameter cover and present an algorithm for constructing it.

\subsection{Probability Distance}
\label{sec:prob-dist}

The following distance function, defined in terms of the entries of the inverse of an SDDM matrix, forms the backbone of our low-diameter cover construction.

\begin{definition}[Probability distance] 
\label{def:probability-distance}
Let $\LL \in \Z^{n \times n}$ be an invertible SDDM matrix with integer entries in $[-U, U]$.  
We define the \emph{probability distance} with respect to $\LL$ as
\[
    D_{\LL}(i,j) := -\log_{nU}\!\big((\LL^{-1})_{ij}\big) + 2,
    \quad \text{for all } i,j \in [n].
\]
\end{definition}

The following result establishes some basic properties of the probability distance, including the triangle inequality and the fact that the distance between two adjacent vertices is small.

\begin{claim}[Properties of the probability distance]
\label{claim:dist-prop}
Let $\LL \in \Z^{n \times n}$ be an invertible SDDM matrix with integer entries in $[-U, U]$.  
Then, for all $i, j, k \in [n]$, the probability distance $D_{\LL}$ satisfies:
\begin{enumerate}
    \item $D_{\LL}(i,i) \ge 0$.
    \item $D_{\LL}(i,j) = D_{\LL}(j,i)$.
    \item $D_{\LL}(i,k) \le D_{\LL}(i,j) + D_{\LL}(j,k)$.
    \item For adjacent vertices $i,j$ (i.e., $\LL_{ij} \neq 0$), we have $D_{\LL}(i,j) \le 4$.
\end{enumerate}
\end{claim}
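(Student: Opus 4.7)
The plan is to translate each of the four properties into a statement about entries of $\LL^{-1}$ and verify it using the escape-probability identity of Lemma~\ref{lemma:escape-prob} together with the spectral bound of Lemma~\ref{lemma:norm2-inv}. All relevant entries $(\LL^{-1})_{ij}$ are nonnegative by standard M-matrix facts, and I will treat $\log_{nU}(0) = -\infty$ (hence $D_{\LL} = +\infty$) when necessary; in that case (3) follows from the observation that if $i$ and $k$ lie in different connected components of the associated graph minus $n{+}1$, then so must $i,j$ or $j,k$, while (1), (2), (4) are unaffected.

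Properties (2) and (1) are bookkeeping. Symmetry of $\LL$ gives $(\LL^{-1})_{ij} = (\LL^{-1})_{ji}$, and hence (2). For (1), it suffices to show $(\LL^{-1})_{ii} \le (nU)^2$; this follows from the PSD inequality $(\LL^{-1})_{ii} = \ee^{(i)\top}\LL^{-1}\ee^{(i)} \le \norm{\LL^{-1}}_2$ together with the bound $\norm{\LL^{-1}}_2 < n^2$ from Lemma~\ref{lemma:norm2-inv}.

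For property (3), unfolding the definition of $D_{\LL}$ shows the triangle inequality is equivalent to
\[
(\LL^{-1})_{ij}\,(\LL^{-1})_{jk} \;\le\; (nU)^{2}\,(\LL^{-1})_{ik}.
\]
The plan is to apply Lemma~\ref{lemma:escape-prob} to each of the three entries, writing $(\LL^{-1})_{st} = \P(s,t,n+1)\cdot \LL_{tt}(\LL^{-1})_{tt}$, which produces
\[
\frac{(\LL^{-1})_{ij}(\LL^{-1})_{jk}}{(\LL^{-1})_{ik}} \;=\; \frac{\P(i,j,n+1)\,\P(j,k,n+1)}{\P(i,k,n+1)} \cdot \LL_{jj}(\LL^{-1})_{jj}.
\]
The probability ratio is at most $1$ by the strong Markov property: the event that a walk from $i$ hits $j$ before $n+1$ and then, restarted at $j$, hits $k$ before $n+1$, is contained in the event that a walk from $i$ hits $k$ before $n+1$. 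The remaining factor is bounded by $\LL_{jj}(\LL^{-1})_{jj} \le U \cdot n^2 \le (nU)^2$, using that $\LL_{jj}$ is a positive integer in $\{1,\dots,U\}$ and Lemma~\ref{lemma:norm2-inv}.

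For property (4), the goal is $(\LL^{-1})_{ij} \ge (nU)^{-2}$ whenever $\LL_{ij} \neq 0$. I would again invoke Lemma~\ref{lemma:escape-prob} to write $(\LL^{-1})_{ij} = \P(i,j,n+1)\cdot\LL_{jj}(\LL^{-1})_{jj}$ and bound the three factors separately: first, $\P(i,j,n+1) \ge \AA_{ij} = -\LL_{ij}/\LL_{ii} \ge 1/U$, since a single step from $i$ directly to $j$ already realizes the event and $-\LL_{ij}$ is a positive integer while $\LL_{ii} \le U$; second, $\LL_{jj} \ge 1$; and third, $(\LL^{-1})_{jj} \ge 1/\lambda_{\max}(\LL) \ge 1/(nU)$, using the PSD inequality for $\LL^{-1}$ together with $\lambda_{\max}(\LL) \le \norm{\LL}_\infty \le nU$. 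Multiplying gives $(\LL^{-1})_{ij} \ge 1/(nU^2) \ge (nU)^{-2}$, and hence $D_{\LL}(i,j) \le 4$. The only genuinely substantive ingredient is the strong-Markov step in~(3); once the escape-probability identity is in hand, everything else is arithmetic.
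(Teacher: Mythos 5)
Your proposal is correct and mirrors the paper's argument: it is the same escape-probability identity (Lemma~\ref{lemma:escape-prob}) for both the triangle inequality via $\P(i,j,n+1)\,\P(j,k,n+1)\le \P(i,k,n+1)$ and the adjacent-vertex bound, combined with Lemma~\ref{lemma:norm2-inv} for property~(1). The only cosmetic difference is in property~(4), where you lower-bound $(\LL^{-1})_{jj}\ge 1/\lambda_{\max}(\LL)$ spectrally rather than via the Neumann-series expansion the paper uses, yielding the weaker but still sufficient bound $(\LL^{-1})_{ij}\ge (nU)^{-2}$ in place of $U^{-2}$.
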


\begin{proof}
    The first property follows from \Cref{lemma:norm2-inv} which states that the largest eigenvalue of $\LL^{-1}$ is at most $n^2$.
    The second property holds because $\LL$ is symmetric.
    For the fourth property, let $\DD \defeq \diag(\LL)$ and $\AA = \DD^{-1}(\DD - \LL)$ and $i,j$ be adjacent.
    By \Cref{lemma:escape-prob}
    \begin{align*}
        (\LL^{-1})_{ij}
        &=
        \P(i,j,n+1) \LL_{jj} (\LL^{-1})_{jj}\\
        &=
        \P(i,j,n+1) \LL_{jj} ((\DD (\II - \AA))^{-1})_{jj}\\
        &= \P(i,j,n+1) \LL_{jj} \DD^{-1}_{jj} ((\II - \AA)^{-1})_{jj}\\
        & = 
        \P(i,j,n+1) \LL_{jj} \DD^{-1}_{jj} (\II + \AA + \AA^{2} + \cdots)_{jj}
        \\
        &\ge U^{-2},
    \end{align*}
    where the inequality holds because $\P(i,j,n+1), \DD^{-1}_{jj} \geq U^{-1}$ and $\LL_{jj}, (\II + \AA + \AA^{2} + \cdots)_{jj} \geq 1$. Therefore
    \[
    D_{\LL}(i,j) = -\log_{nU} ( (\LL^{-1})_{ij}) + 2 \leq -\log_{nU} ( U^{-2}) + 2 \leq 4.
    \]
    Now we prove the third property, i.e., the triangle inequality. By \Cref{lemma:norm2-inv},
    \[
    (\LL^{-1})_{jj} \leq n^2.
    \]
    Therefore since $\LL_{jj} \leq U$,
    \begin{align}
    \label{eq:diag-mult-less-2}
    \log_{nU}(\LL_{jj}(\LL^{-1})_{jj}) \leq 2.
    \end{align}
    Now note that due to Markov property,
    \[
    \P(i,j,n+1) \cdot \P(j,k,n+1) \leq \P(i,k,n+1).
    \]
    Therefore
    \[
    (\P(i,j,n+1) \cdot \LL_{jj} (\LL^{-1})_{jj}) \cdot (\P(j,k,n+1) \cdot \LL_{kk} (\LL^{-1})_{kk}) \leq (\P(i,k,n+1)\cdot \LL_{kk} (\LL^{-1})_{kk})) \cdot (\LL_{jj} (\LL^{-1})_{jj}).
    \]
    Thus by \Cref{lemma:escape-prob},
    \[
    (\LL^{-1})_{i,j} \cdot (\LL^{-1})_{j,k} \le (\LL^{-1})_{i,k} \cdot (\LL_{jj} (\LL^{-1})_{jj}).
    \]
    Taking the logarithm and using \Cref{eq:diag-mult-less-2}, we have
    \[
    \log(\LL^{-1})_{i,j} + \log (\LL^{-1})_{j,k} \leq \log (\LL^{-1})_{i,k} + \log (\LL_{jj} (\LL^{-1})_{jj}) \leq \log (\LL^{-1})_{i,k} + 2. 
    \]
    Therefore
    \[
    D_{\LL}(i,j) + D_{\LL}(j,k) = -\log(\LL^{-1})_{i,j} - \log (\LL^{-1})_{j,k} + 4 \geq - \log (\LL^{-1})_{i,k} - 2 + 4 = D_{\LL}(i,k).
    \]
\end{proof}

Finally, we conclude this part by showing that the principal submatrices of an invertible SDDM matrix $\LL$ induce larger distances than $\LL$ itself.  
We will use this property in our almost-linear–time algorithm to establish that if two vertices are far from each other with respect to $\LL$, then they remain far with respect to any principal submatrix of $\LL$.

\begin{lemma}[Monotonicity of distance]
\label{lem:distance-monotonicity}
Let $\LL \in \Z^{n \times n}$ be an invertible SDDM matrix, and let $\LL_{S,S}$ denote the principal submatrix indexed by $S \subseteq [n]$.  
Then, for all $i, j \in S$, we have
\[
    D_{\LL}(i,j) \le D_{\LL_{S,S}}(i,j).
\]
\end{lemma}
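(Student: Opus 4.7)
The plan is to reduce the distance inequality to an entrywise inequality on inverses, specifically $(\LL_{S,S}^{-1})_{ij} \le (\LL^{-1})_{ij}$ for all $i,j \in S$, and then establish this inequality via a Neumann series / walk-counting argument.

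First I would set up the random-walk representation of $\LL^{-1}$. Let $\DD \defeq \diag(\LL)$ and $\AA \defeq \II - \DD^{-1}\LL$; since $\LL$ is an invertible SDDM matrix, $\AA$ is entrywise nonnegative with spectral radius strictly less than $1$, and $\LL = \DD(\II - \AA)$. The Neumann series therefore gives
\[
\LL^{-1} = (\II - \AA)^{-1}\DD^{-1} = \sum_{k \ge 0} \AA^k \DD^{-1}.
\]
A direct check shows that the principal submatrix $\AA_{S,S}$ coincides with the transition matrix built analogously from $\LL_{S,S}$: both have zero diagonal and off-diagonal entries $-\LL_{ij}/\LL_{ii}$ for $i \ne j$ in $S$, using that $\diag(\LL_{S,S}) = \DD_{S,S}$. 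Because $\LL_{S,S}$ is itself an invertible SDDM matrix (principal submatrices of invertible SDDM matrices inherit the symmetric $L$-matrix structure, row diagonal dominance, and positive definiteness), the same expansion applies and gives
\[
\LL_{S,S}^{-1} = \sum_{k \ge 0} (\AA_{S,S})^k \DD_{S,S}^{-1}.
\]

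The core step is to compare $((\AA_{S,S})^k)_{ij}$ with $(\AA^k)_{ij}$ for $i,j \in S$. Expanding both as sums over length-$k$ walks, $(\AA^k)_{ij}$ sums the weight $\prod_\ell \AA_{v_\ell,v_{\ell+1}}$ over all sequences $i = v_0, v_1, \ldots, v_k = j$ with $v_\ell \in [n]$, whereas $((\AA_{S,S})^k)_{ij}$ sums the same weights only over sequences that stay in $S$. Nonnegativity of $\AA$ makes the restricted sum term-by-term dominated by the unrestricted one, so $((\AA_{S,S})^k)_{ij} \le (\AA^k)_{ij}$. Summing over $k \ge 0$ and dividing by the common diagonal value $\LL_{jj} = (\LL_{S,S})_{jj}$ yields the desired entrywise bound $(\LL_{S,S}^{-1})_{ij} \le (\LL^{-1})_{ij}$. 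Applying $-\log_{nU}(\cdot)$, which is monotone decreasing, and adding $2$ then delivers $D_{\LL}(i,j) \le D_{\LL_{S,S}}(i,j)$.

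I do not anticipate a serious obstacle in this proof: the whole argument is essentially a single Neumann-series comparison, and the distance monotonicity is a cosmetic reformulation of the entrywise inequality on inverses. The one place that requires some care is verifying that $\AA_{S,S}$ really is the transition matrix associated with $\LL_{S,S}$---equivalently, that the random walk on the graph of $\LL_{S,S}$ is obtained from the random walk on the graph of $\LL$ by absorbing into the dummy vertex every step that exits $S$---since this observation is exactly what justifies restricting the walk sum to $S$.
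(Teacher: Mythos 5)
Your proof is correct and follows essentially the same route as the paper: write $\LL = \DD(\II-\AA)$ with $\AA = \II - \DD^{-1}\LL$, expand both $\LL^{-1}$ and $\LL_{S,S}^{-1}$ as Neumann series, compare the walk sums entrywise using nonnegativity of $\AA$, and pass through the monotone decreasing map $-\log_{nU}(\cdot)$. The one subtlety you flag---that $\AA_{S,S}$ is exactly the transition matrix associated with $\LL_{S,S}$, since $\diag(\LL_{S,S}) = \DD_{S,S}$---is precisely what the paper relies on as well.
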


\begin{proof}
    Let $\DD \defeq \diag(\LL)$ and
    \begin{align*}
        \AA \defeq \DD^{-1}(\DD - \LL) = \II - \DD^{-1} \LL,
    \end{align*}
    which represents the transition matrix of the associated graph to $\LL$. Then $\LL = \DD(\II - \AA)$. Therefore, $\LL^{-1} = (\II - \AA)^{-1} \DD^{-1}$. We also have $\LL_{S,S} = \DD_{S,S} (\II - \AA)_{S,S}$ and $(\LL_{S,S})^{-1} = ((\II - \AA)_{S,S})^{-1} (\DD_{S,S})^{-1}$.
    By invertibility of $\LL$,
    \begin{align*}
        \left( \II - \AA \right)^{-1}
        &=
        \II + \AA + \AA^{2} + \cdots,\\
        \left( \II_{S,S} - \AA_{S,S} \right)^{-1}
        &=
        \II + \AA_{S,S} + (\AA_{S,S})^2 + \cdots.
    \end{align*}
    Consequently, for any $i,j \in S$ we have
    \begin{align*}
        \left( \LL^{-1} \right)_{ij}
        &=
        \frac{1}{\LL_{jj}} \sum_{k=0}^{\infty} \left( \AA^k \right)_{ij},
        \quad
        \left( (\LL_{S,S})^{-1} \right)_{ij}
        =
        \frac{1}{\LL_{jj}} \sum_{k=0}^{\infty} \left( (\AA_{S,S})^k \right)_{ij}.
    \end{align*}
    It follows from the nonnegativity of $\AA$ that
    $\left( \AA^k \right)_{ij} \ge \left( (\AA_{S,S})^k \right)_{ij}$ for all $k$, and thus
    $(\LL^{-1})_{ij} \ge ((\LL_{S, S})^{-1})_{ij}$.
    The result then follows from the definition of the distance.
\end{proof}

\subsection{Constructing a Cover with Low Probability Distance Diameter}
\label{sec:low-dim-cover-alg}

In this section, we first present the definition of our low-diameter cover, which is characterized by three parameters: $r_{\text{in}}$, $r_{\text{out}}$, and $\alpha$.  
The parameter $\alpha$ serves as an upper bound on the number of outer balls that cover a single vertex.  
$r_{\text{in}}$ is an upper bound on the diameter of the outer balls, and consequently also on the diameter of the inner balls.  
Finally, $r_{\text{out}}$ provides a lower bound on the distance between vertices inside an inner ball and those outside its corresponding outer ball.

\begin{definition}[$(r_{\text{in}}, r_{\text{out}}, \alpha)$-cover]
\label{def:low-diam-cover}
    Given an $n$-by-$n$ invertible SDDM matrix $\LL$ with integer entries in $[-U,U]$, we say a set of pairs $\mathcal{C} = \{(V_1, W_1), (V_2, W_2), \dots, (V_k, W_k)\}$, with $V_i, W_i \subseteq [n]$, for all $i \in [k]$, is a $(r_{\text{in}}, r_{\text{out}}, \alpha)$-cover for $\LL$ if it satisfies the following where $D_{\LL}$ is a probability distance with respect to $\LL$.
    \begin{enumerate}
        \item $V_i \subseteq W_i \subseteq [n]$ for all $i \in [k]$.
        \item For every $u \in [n]$, there exists $i\in [k]$ such that $u \in V_i$.
        \item For every $u \in [n]$, $\abs{\{i\in[k]: u \in W_i\}} \leq  \alpha$.
        \item For any $i \in [k]$, and any $u, v \in W_i$, $D_{\LL}(u,v) \le r_{\text{in}}$. 
        \item For any $i \in [k]$, and any $u \in V_i$, $v \in [n] \setminus W_i$, $D_{\LL}(u, v) > r_{\text{out}}$.
    \end{enumerate}
    We colloquially refer to $V_i$'s as inner balls and $W_i$'s as outer balls. 
    
\end{definition}

The following result states that a low-diameter cover with suitable values of the parameters $r_{\text{in}}$, $r_{\text{out}}$, and $\alpha$ exists and can be computed in almost-linear time with high probability.  
The remainder of this section is devoted to proving this theorem.

\begin{figure}[!t]
    \begin{algbox}
        \textbf{\uline{\textsc{LowDiamConstruct}}}: Constructing a low-diameter cover.
        \\
        \uline{Input}: $\LL$, an $n \times n$ invertible SDDM matrix with integer entries in $[-U, U]$, and $\delta \in (0, 1)$ \\
        \uline{Output:} $\{(V_i, W_i)\}_i$ with properties stated in \Cref{def:low-diam-cover} \\
    \begin{enumerate}
    \item Let $\ell = \lceil \sqrt{\log n} \rceil + 3$, for $i \in [\ell]$, $d_{i} = \frac{4^{\ell}}{2^{i-1}}$, $p_{i} = \min\{\frac{1}{n} \cdot 2^{\ell (i-1)}, 1\}$.
    \item Let $M$ be the smallest power of two larger than $nU$.
    \item Set $\mathcal{C} = \emptyset$.
    \item For $i \in [\ell]$, $j \in [\ell]$:

    \begin{enumerate}
        \item \label{alg-step:low-diam-inner-loop} Repeat for $B := 6 \cdot 16^{\ell} \cdot \lceil \log( n/ \delta) \rceil$ times: 
        \begin{enumerate}
            \item Sample a set $S$ by including each element of $[n]$ independently with probability $p_{j}$.
            \item \label{alg-step:low-diam-l-solve} Compute a vector $\yy$ such that with probability $1-\delta/(2 \ell^2 B)$ satisfies that $\norm{\yy - \LL^{-1} \boldsymbol{1}_S}_{2} \leq M^{-2d_i}$. E.g., using \Cref{cor:near-linear-solver-SDDM}.
            \item Let $T =\{k \in [n]: \yy_{k} \geq M^{-d_i/2}\}$ and $G_T$ be the induced subgraph on $T$ of the graph corresponding to $\LL$.
            \item \label{alg-step:low-diam-construct-balls} For each $v \in S$, let $C_v$ be the connected component of $G_T$ that contains $v$. If for all $w \in S \setminus \{v\}$, $C_v\cap C_w = \emptyset$, then we set 
            \[
            V = \{k \in C_v: \yy_k \geq M^{-d_i/4} - M^{-2d_i}\}
            \]
            and 
            \[
            W= \{k \in C_v: \yy_k \geq M^{-(d_i/2)+2}\},
            \]
            and add $(V,W)$ to $\mathcal{C}$.
        \end{enumerate}
    \end{enumerate}
    \item Return $\mathcal{C}$
    \end{enumerate}
    
    \end{algbox}
    \caption{low-diameter construction}
    \labelalg{LowDiamConstruct}
\end{figure}

\begin{theorem}
\label{thm:construct-low-diam-cover}
Let $U > 32$ and $n>10$.
There exists an algorithm that given $\delta \in (0,1)$ and an $n$-by-$n$ invertible SDDM matrix $\LL$ with integer entries in $[-U,U]$, with probability at least $1-\delta$, computes an $(r_{\text{in}}, r_{\text{out}}, \alpha)$-cover $\mathcal{C}$ for $\LL$ with 
\[
r_{\text{in}}=2^{2\ell+1}, ~~ r_{\text{out}} = 2^{ \ell-2}, ~~ \alpha = 6 \ell^2 \cdot 16^{\ell} \cdot \lceil \log(n/\delta)\rceil, ~~ \ell =\lceil \sqrt{\log n} \rceil + 3
\]
in $m 2^{O(\sqrt{\log n})} \log(U) \log(\delta^{-1})\log(U \delta^{-1})$ bit operations.
\end{theorem}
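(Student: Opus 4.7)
The plan is to verify the five properties of a $(r_{\text{in}}, r_{\text{out}}, \alpha)$-cover one by one. First, I would union-bound the failure probabilities of the $\ell^2 \cdot B$ invocations of \Cref{cor:near-linear-solver-SDDM}, each with parameter $\delta/(2\ell^2 B)$, so that simultaneously $\|\yy - \LL^{-1}\mathbf{1}_S\|_2 \leq M^{-2 d_i}$ in every iteration, except with probability $\delta/2$. Throughout the deterministic analysis I use the pointwise consequence $|\yy_k - \xx_k| \leq M^{-2d_i}$, where $\xx = \LL^{-1}\mathbf{1}_S$. Property~1 ($V \subseteq W$) follows immediately from $M^{-d_i/4} - M^{-2d_i} \geq M^{-d_i/2+2}$ whenever $d_i \geq 16$, which holds since $d_i \geq d_\ell = 2^{\ell+1} \geq 16$. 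Property~3 holds because, within a single iteration, the outer balls $W$ are built from disjoint connected components of $G_T$, so each vertex joins at most one $W$ per iteration, yielding total $\alpha = \ell^2 B$.

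For Properties 4 and 5, I would exploit the block-diagonal structure of $\LL_{T, T}$: since $G_T$ is the induced subgraph on $T$, any edge of $G$ between two distinct components of $G_T$ would merge them, so no such edge exists and $\LL_{T, T}$ is block-diagonal with one block $\LL_{C_v, C_v}$ per component. Restricting $\LL\xx = \mathbf{1}_S$ to the rows in $T$ and eliminating the $\xx_{T^c}$ variables (each entry bounded by $2M^{-d_i/2}$ since $k \notin T$ implies $\yy_k < M^{-d_i/2}$) yields
\[
\xx_{C_v} = (\LL_{C_v, C_v})^{-1}\bigl(\ee^{(v^*)} + \eta_{C_v}\bigr),
\]
where $\eta_{C_v} \geq \0$ with $\|\eta_{C_v}\|_\infty \leq 2nU \cdot M^{-d_i/2} \leq 2 M^{-d_i/2+1}$. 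This identifies $\xx_u$ with $((\LL_{C_v,C_v})^{-1})_{uv^*}$ up to a small error whose magnitude is controlled using $\|(\LL_{C_v, C_v})^{-1}\|_2 \leq n^2$ from \Cref{lemma:norm2-inv}. For $u \in W$ one thereby deduces $((\LL_{C_v, C_v})^{-1})_{uv^*} \geq M^{-O(d_i)}$; combining with the monotonicity $D_{\LL} \leq D_{\LL_{C_v,C_v}}$ (\Cref{lem:distance-monotonicity}) and the triangle inequality (\Cref{claim:dist-prop}) yields Property~4 with $r_{\text{in}} = 2^{2\ell+1}$. For Property~5, given $u \in V$ and $v \notin W$, any random walk from $u$ to $v$ must first traverse a boundary vertex $k \in K := [n] \setminus T$ because the only external neighbors of $C_v$ lie in $K$; each such $k$ satisfies $(\LL^{-1})_{kv^*} \leq (\LL^{-1}\mathbf{1}_S)_k < 2M^{-d_i/2}$, and a strong Markov argument then bounds $(\LL^{-1})_{uv}$ from above and delivers $D_\LL(u, v) > 2^{\ell-2} = r_{\text{out}}$.

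Property~2 is the probabilistic core: for each $u \in [n]$, I would choose $(d_i, p_j)$ adapted to $u$'s local $D_\LL$-geometry, taking $i$ matching the scale at which the $D_\LL$-ball around $u$ first becomes non-trivially populated, and $j$ so that $p_j$ times the size of that ball is $\Theta(1)$. With this choice, in a single inner iteration, with probability at least $n^{-o(1)}$ the random sample $S$ contains exactly one vertex $v^*$ within the $d_i/4$-ball around $u$ and no other $S$-vertex within the $d_i/2$-ball, so that the inner ball built from $v^*$'s connected component in $G_T$ contains $u$. Amplifying across $B = 6 \cdot 16^\ell \log(n/\delta)$ independent repetitions drops the per-vertex failure probability below $\delta/(2n)$, and a union bound over $u \in [n]$ finishes Property~2 with failure probability $\delta/2$; combined with the solver union bound, the overall failure probability is at most $\delta$.

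For the running time, each iteration invokes \Cref{cor:near-linear-solver-SDDM} with error parameter $M^{-2d_i}$ and failure probability $\delta/(2\ell^2 B)$, costing $\Otil(m \log^2(U M^{2d_i} \ell^2 B / \delta))$ bit operations, plus $O(m + n)$ for computing the induced subgraph, its connected components, and the inner/outer balls. Since $d_i \leq 4^\ell$, $\ell^2 B = 2^{O(\sqrt{\log n})} \log(n/\delta)$, and $\log M = O(\log(nU))$, the total bit complexity matches the claimed $m \cdot 2^{O(\sqrt{\log n})} \log(U) \log(\delta^{-1}) \log(U\delta^{-1})$. The main obstacle is Property~2: identifying, for each $u$, a suitable $(d_i, p_j)$ based on unknown distances $D_\LL(u, \cdot)$, and then lower-bounding the probability of the isolation event by $n^{-o(1)}$, are what force the nested loop structure and the specific exponential choices $d_i = 4^\ell/2^{i-1}$ and $p_i = 2^{\ell(i-1)}/n$.
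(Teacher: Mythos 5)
Your high‐level plan mirrors the paper's: union‐bound the solver failures, treat Properties 1 and 3 as deterministic consequences of the thresholds and disjointness of components, establish Property 2 by choosing $(d_i,p_j)$ adapted to $u$'s local $D_{\LL}$‐geometry and amplifying the isolation event over $B$ repetitions, and bound the bit complexity the same way. Where you diverge is in Properties 4 and 5, and that is where there is a genuine gap. For Property 4 you eliminate $\xx_{T^c}$ to write $\xx_{C_v}=(\LL_{C_v,C_v})^{-1}(\ee^{(v^*)}+\eta_{C_v})$ and then control the error via $\|(\LL_{C_v,C_v})^{-1}\|_2\le n^2$. But $((\LL_{C_v,C_v})^{-1}\eta_{C_v})_u$ is then only bounded by roughly $n^{2.5}U\cdot M^{-d_i/2}$, whereas the $W$‐membership threshold is $\yy_u\ge M^{-d_i/2+2}\approx(nU)^2 M^{-d_i/2}$. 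The ratio of the error to the signal is $\Theta(n^{1/2}/U)$, so whenever $U\lesssim\sqrt n$ the error swamps the threshold and the lower bound $((\LL_{C_v,C_v})^{-1})_{u v^*}\ge M^{-O(d_i)}$ cannot be extracted; your argument would then only show nonnegativity, which is useless for bounding $D_{\LL_{C_v,C_v}}(u,v^*)$. The paper avoids this entirely by staying with the full matrix $\LL$ and bounding the competing contributions \emph{termwise}: it decomposes $(\LL^{-1}\1_S)_y=(\LL^{-1}\ee^{(v)})_y+\sum_{w\in S\setminus\{v\}}(\LL^{-1}\ee^{(w)})_y$ and shows each $(\LL^{-1}\ee^{(w)})_y<M^{-d_i/2}$ by a path/maximum‐principle argument that would otherwise force $C_w$ and $C_v$ to overlap. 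This loses only a factor of $n$ (one per competing $w$), not $n^{2.5}U$, so $(\LL^{-1}\ee^{(v)})_y\ge M^{-d_i/2+1}$ survives and $D_{\LL}(v,y)$ can be directly bounded.

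Your Property 5 sketch has a similar problem plus a definitional one. A vertex $z\notin W$ can still lie in $C_v$ (i.e., $z\in T$ with $M^{-d_i/2}\le\yy_z<M^{-d_i/2+2}$), in which case a walk from $u\in V$ to $z$ need not traverse any vertex of $K=[n]\setminus T$, so the ``must exit through $K$'' premise fails. Also, your bound $(\LL^{-1})_{kv^*}\le(\LL^{-1}\1_S)_k<2M^{-d_i/2}$ controls escape toward the seed $v^*$, but what you need is $(\LL^{-1})_{uz}$ for an arbitrary $z\notin W$; these are not directly related. The paper instead bounds $D_{\LL}(v,y)\le 0.36\,d_i$ for $y\in V$ (reusing the termwise decomposition) and $D_{\LL}(v,z)>d_i/2+2$ for $z\notin W$ (again via the path argument, since large $(\LL^{-1}\ee^{(v)})_z$ would force $z\in C_v$ and thence $z\in W$), then applies the triangle inequality. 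If you want to keep a ``Markov'' flavor, you should route both bounds through the seed $v^*$ as the paper does, rather than attempting a direct escape‐probability bound on $(\LL^{-1})_{uz}$.
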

\begin{proof}
We first prove that the output of \callalg{LowDiamConstruct} satisfies all the properties in \Cref{def:low-diam-cover} with probability at least \(1 - \delta\). With probability at least $1-\delta/2$, all the linear systems in Step \ref{alg-step:low-diam-l-solve} of \callalg{LowDiamConstruct} are solved correctly. Therefore for the rest of the proof, we assume that we have the error guarantee of Step \ref{alg-step:low-diam-l-solve}.

\begin{enumerate}
    \item Since $U>32$ and $n>10$, we have $d_i \geq 32$ and $M>320$. Therefore, $M^{-d_i/4} - M^{- 2 d_i} < M^{-(d_i/2)+2}$, and
\[
\{k \in C_u : \yy_k \ge M^{-d_i/4} - M^{- 2 d_i}\} \subseteq \{k \in C_u : \yy_k \ge M^{-(d_i/2)+2}\},
\]
for all $(V, W) \in \mathcal{C}$ we have $V \subseteq W$.

    \item Let $u \in [n]$.  First we show that there exists $i \in [\ell]$ such that 
    \begin{align}
    \label{eq:good-i}
    \abs{\{k \in [n]: (\LL^{-1} \ee^{(u)})_{k} \geq M^{-2d_i}\}} \leq 8^{\ell} \abs{\{k \in [n]: (\LL^{-1} \ee^{(u)})_{k} \geq M^{-d_i/4}\}}.
    \end{align}
    Suppose this does not hold for any $i \in [\ell]$. Then since $8 d_{i+3}=2d_{i+1} = d_{i}$, for $i\in [\ell-3]$, we have
    \[
    \abs{\{k \in [n]: (\LL^{-1} \ee^{(u)})_{k} \geq M^{-d_{i}}\}} > 8^{\ell} \abs{\{k \in [n]: (\LL^{-1} \ee^{(u)})_{k} \geq M^{-d_{i+3}}\}}.
    \]
    Therefore
    \[
    \abs{\{k \in [n]: (\LL^{-1} \ee^{(u)})_{k} \geq M^{-d_{1}}\}} > 8^{\ell \cdot (\lceil \ell/3 \rceil - 1)} \abs{\{k \in [n]: (\LL^{-1} \ee^{(u)})_{k} \geq M^{-d_{\ell}}\}}.
    \]
    Now note that 
    \[
    8^{\ell \cdot (\lceil \ell/3 \rceil - 1)} = 2^{3 \ell \cdot (\lceil \ell/3 \rceil - 1)} \geq 2^{\ell^{2} - 3\ell} \geq 2^{\log n} = n,
    \]
    where the last inequality follows from 
    \[
    \ell^{2} - 3\ell = \ell \cdot \lceil \sqrt{\log n} \rceil > \log n.
    \]
    Thus
    \begin{align}
    \label{eq:n-ratio-between-balls}
        \abs{\{k \in [n]: (\LL^{-1} \ee^{(u)})_{k} \geq M^{-d_{1}}\}} > n \cdot  \abs{\{k \in [n]: (\LL^{-1} \ee^{(u)})_{k} \geq M^{-d_{\ell}}\}}.
    \end{align}
    Moreover since $(\LL^{-1} \ee^{(u)})_u \geq 1/U$, $\abs{\{k \in [n]: (\LL^{-1} \ee^{(u)})_{k} \geq M^{-d_{\ell}}\}} \geq 1$. Additionally 
    \[
    \abs{\{k \in [n]: (\LL^{-1} \ee^{(u)})_{k} \geq M^{-d_{1}}\}} \leq n.
    \]
    However this contradicts \eqref{eq:n-ratio-between-balls}. Therefore there exists $i \in [\ell]$ such that 
    \[
    \abs{\{k \in [n]: (\LL^{-1} \ee^{(u)})_{k} \geq M^{-2d_i}\}} \leq 8^{\ell} \abs{\{k \in [n]: (\LL^{-1} \ee^{(u)})_{k} \geq M^{-d_i/4}\}}.
    \]
    Now since $p_1 = \frac{1}{n}$ and $p_n = 1$, there exists $j \in [\ell-1]$, such that $p_j < 1$
    \begin{align}
        \label{eq:good-j}
    \frac{1}{p_{j} \cdot 2^{\ell}} \leq \frac{1}{p_{j+1}} \leq |\{k \in [n]: (\LL^{-1} \ee^{(u)})_{k} \geq M^{-2 d_i}\}| \leq \frac{1}{p_{j}}.
    \end{align}
    Therefore for the $i$ and $j$ that satisfy \eqref{eq:good-i} and \eqref{eq:good-j}, respectively, for any iteration of the inner loop (Step \ref{alg-step:low-diam-inner-loop} of \callalg{LowDiamConstruct}), we have
    \[
    \frac{1}{2^{\ell}}\leq \E\left[\abs{S \cap \{k \in [n]: (\LL^{-1} \ee^{(u)})_{k} \geq M^{-2d_i}\}} \right] \leq 1.
    \]
    Let $\beta = \abs{\{k \in [n]: (\LL^{-1} \ee^{(u)})_{k} \geq M^{-2d_i}\}}$.
    We have
    \begin{align*}
    \P\left[\abs{S \cap \{k \in [n]: (\LL^{-1} \ee^{(u)})_{k} \geq M^{-2d_i}\}} = 1 \right] & = \beta p_j (1-p_j)^{\beta-1} 
    \\ & \geq \beta p_j e^{-p_{j} (\beta-1)/(1-p_j)} 
    \\ & \geq \beta p_j e^{-p_{j} \beta/(1-p_j)}
    \\ & \geq 2^{-\ell} e^{-1}.
    \end{align*}
    Therefore if $S_1,\ldots,S_{2^\ell}$ is the sampled set in $2^\ell$ different iterations of the inner loop (Step \ref{alg-step:low-diam-inner-loop} of \callalg{LowDiamConstruct}), with probability at least $1-e^{-e^{-1}} \geq 0.3$, there is a $q \in [2^{\ell}]$ such that
    \begin{align*}
    \abs{S_q \cap \{k \in [n]: (\LL^{-1} \ee^{(u)})_{k} \geq M^{-2d_i}\}} = 1.
    \end{align*}
    
    Now condition on $\abs{S_q \cap \{k \in [n]: (\LL^{-1} \ee^{(u)})_{k} \geq M^{-2d_i}\}} = 1$, the probability that 
    \[
    \abs{S_q \cap \{k \in [n]: (\LL^{-1} \ee^{(u)})_{k} \geq M^{-d_i/4}\}} = 1
    \]
    is
    \[
    \frac{\abs{\{k \in [n]: (\LL^{-1} \ee^{(u)})_{k} \geq M^{-d_i/4}\}}}{\abs{\{k \in [n]: (\LL^{-1} \ee^{(u)})_{k} \geq M^{-2d_i}\}}} \geq \frac{1}{8^{\ell}},
    \]
    where the inequality follows from \eqref{eq:good-i}. Therefore with probability at least $(1-e^{-1})\cdot 0.3 \geq \frac{1}{6}$ over $16^{\ell}$ iterations of the inner loop, we pick at least one $S$ such that 
    \begin{align}
        \label{eq:big-ball-empty}
    \abs{S \cap\{k \in [n]: (\LL^{-1} \ee^{(u)})_{k} \geq M^{-d_i/4}\}} = \abs{S \cap \{k \in [n]: (\LL^{-1} \ee^{(u)})_{k} \geq M^{-2d_i}\}} = 1.
    \end{align}
    Therefore over $B = 6 \cdot 16^{\ell} \cdot \lceil \log(2 n/ \delta) \rceil$ iterations, with probability at least
    $1-\frac{\delta}{2n}$, we pick at least one such $S$. 
    
    Let $v$ be the only element of $S \cap\{k \in [n]: (\LL^{-1} \ee^{(u)})_{k} \geq M^{-d_i/4}\}$. Therefore
    \begin{align}
    \label{eq:lower-bound-for-l-inv-uv}
    M^{-d_i/4} \leq (\LL^{-1} \ee^{(u)})_{v} \leq \sum_{k \in S} (\LL^{-1} \ee^{(u)})_{k} = (\LL^{-1} \boldsymbol{1}_S)_{u} \leq \yy_u + M^{-2d_i},
    \end{align}
    where the last inequality follow from $\norm{\yy - \LL^{-1} \boldsymbol{1}_S}_{2} \leq M^{-2d_i}$. Thus $\yy_u \geq M^{-d_i/4} -M^{-2d_i}$ and $u \in V$. We now show that for all $w \in S \setminus \{v\}$, $C_v\cap C_w = \emptyset$. 
    Since $U > 64$,
    \begin{align}
    \label{eq:log-M-nU-bounds}
    1 \leq \log_{nU} (M) \leq 1.2, ~\text{and} ~~ 0.8 \leq \log_{M} (nU) \leq 1.
    \end{align}
    
    Note that by \eqref{eq:big-ball-empty}, since $\{v\} = S \cap\{k \in [n]: (\LL^{-1} \ee^{(u)})_{k} \geq M^{-d_i/4}\}$, for any $w' \in S \setminus \{v\}$, we have
    $
    (\LL^{-1} \ee^{(u)})_{w'} < M^{-2d_i}
    $, and therefore
    \[
    -\log_M((\LL^{-1})_{uw'}) > 2d_i,
    \]
    and 
    \[
    D_{\LL}(u, w') = -\log_{nU}((\LL^{-1})_{uw'}) + 2 = \frac{-\log_M((\LL^{-1})_{uw'})}{\log_M (nU)} + 2 \geq 2 d_i + 2,
    \]
    where the inequality follows from \eqref{eq:log-M-nU-bounds}.
    By \eqref{eq:lower-bound-for-l-inv-uv},  $M^{-d_i/4} \leq (\LL^{-1})_{uv}$, and thus 
    \[
    \frac{d_i}{4} \geq - \log_{M} ((\LL^{-1})_{uv}).
    \]
    Therefore by \eqref{eq:log-M-nU-bounds},
    \begin{align}
    \nonumber
    0.3 d_i + 2 & = 1.2 \cdot \frac{d_i}{4} + 2 \geq - \log_{nU} (M) \log_{M} ((\LL^{-1})_{uv}) + 2 \\ & = - \log_{nU} ((\LL^{-1})_{uv}) + 2 = D_{\LL} (u,v).
    \end{align}
    For the sake of contradiction, suppose there exists $w \in S \setminus \{v\}$ with $w \in C_v$. Moreover let $v=k_0, k_1,\ldots, k_{p-1}, k_p=w$ be an arbitrary path between $v$ and $w$ that goes through $C_v$, i.e., $k_0,\ldots,k_p \in C_v$. By triangle inequality (\Cref{claim:dist-prop}), for all $g \in [p]$, we have
    \[
    D_{\LL}(u,k_{g}) \leq D_{\LL}(u,k_{g-1}) + D_{\LL}(k_{g-1},k_{g}).
    \]
    Moreover by \Cref{claim:dist-prop}, since for all $g \in [p]$, $k_{g-1}$ and $k_g$ are adjacent, $D_{\LL}(k_{g-1},k_g) \leq 4$.
    Therefore since $D_{\LL} (u,v) \leq 0.3 d_i + 2$, $D_{\LL} (u,w) \geq 2d_i + 2$, and $d_i \geq 8$, there exists $q \in [p]$ such that
    \[
    1.2\cdot d_i + 2 \leq D_{\LL}(u,k_q) \leq 1.2\cdot d_i + 6
    \]
    Therefore for all $w' \in S \setminus \{v\}$
    \[
    2d_i + 2\leq D_{\LL}(u,w') \leq D_{\LL}(u, k_q) + D_{\LL}(k_q, w') \leq 1.2 \cdot d_i + 6 + D_{\LL}(k_q, w'),
    \]
    which gives
    \[
    D_{\LL}(k_q, w') \geq 0.8 d_i - 4.
    \]
    Moreover by triangle inequality
    \[
    1.2\cdot d_i + 2 \leq D_{\LL}(u,k_q) \leq D_{\LL}(u, v) + D_{\LL}(v,k_q) \leq 0.3 d_i + 2 + D_{\LL}(v,k_q),
    \]
    which gives
    \[
    D_{\LL}(v,k_q) \geq 0.9 d_i.
    \]
    Therefore
    \begin{align*}
    -\log_M (\LL^{-1} \boldsymbol{1}_S)_{k_q} & = \frac{-\log_{nU} (\LL^{-1} \boldsymbol{1}_S)_{k_q}}{\log_{nU} (M)} 
    \geq \frac{-\log_{nU} (\abs{S} \cdot \max_{w' \in S}(\LL^{-1})_{k_q w'})}{\log_{nU} (M)}
    \\ & 
    \geq 
    \frac{- 1 + \min_{w' \in S} D_{\LL}(k_q, w')-2}{\log_{nU} (M)}
    \geq \frac{0.8 d_i - 7}{1.2} \geq 0.66 d_i - 6.
    \end{align*}
    Thus
    \[
    (\LL^{-1} \boldsymbol{1}_S)_{k_q} \leq M^{-0.66 d_i + 6}.
    \]
    Now since $U>32$ and $n>10$, we have $d_i \geq 32$, $M>320$, and 
    \[
    \yy_{k_q} \leq  M^{-0.66 d_i + 6} + M^{-2d_i} < M^{-d_i/2}.
    \]
    Therefore $k_q$ is not in $T$ and is not in the connected component of $v$. Therefore by this contradiction for all $w \in S \setminus \{v\}$, $C_v\cap C_w = \emptyset$.

    \item This holds because in each iteration of the loop (i), we pick at most one outer ball $W$ that contains a vertex $u$. Therefore each vertex is included in at most $6 \ell^2 \cdot 16^{\ell} \cdot \lceil \log(n/\delta)\rceil$ outer balls.

    \item For the last two parts, let $(V,W) \in \mathcal{C}$ be an inner ball, outer ball pair and $i,j$ be their corresponding iterand in the outer loop. Moreover let $v$ be the unique element in $S$ that is inside the connected component $C_v$ of $G_T$.
    Therefore
    \[
    V = \{k \in C_v: \yy_k \geq M^{-d_i/4} - M^{-2d_i}\}
    \]
    and 
    \[
    W= \{k \in C_v: \yy_k \geq M^{(-d_i/2) + 2}\}.
    \]
    Let $y \in W$. Then we have
    \[
    (\LL^{-1} \boldsymbol{1}_S)_y + M^{-2d_i} \geq \yy_y \geq M^{(-d_i/2) + 2}.
    \]
    Moreover
    \[
    (\LL^{-1} \boldsymbol{1}_S)_y = (\LL^{-1} \ee^{(v)})_y + \sum_{w \in S \setminus\{v\}} (\LL^{-1} \ee^{(w)})_y. 
     \]
     Next, we show, for $w \in S \setminus \{v\}$,
     \begin{align}
    \label{eq:outside-ball-less-m-di-2}
     (\LL^{-1} \ee^{(w)})_y < M^{-d_i/2}.
     \end{align}
    This is because for any $o \in [n] \setminus \{w\}$,
    \[
    \sum_{q \in [n]} \LL_{o q}(\LL^{-1}\ee^{(w)})_{q} = 0
    \]
    Therefore
    \[
    \sum_{q \in [n] \setminus \{o\}} \frac{-\LL_{o q}}{\LL_{oo}}(\LL^{-1}\ee^{(w)})_{q} = (\LL^{-1}\ee^{(w)})_{o}
    \]
    Since $\sum_{q \in [n] \setminus \{o\}} \frac{-\LL_{o q}}{\LL_{oo}} \leq 1$, either $(\LL^{-1}\ee^{(w)})_{o}$ is equal to $(\LL^{-1}\ee^{(w)})_{q}$ for all $q$ that are adjacent to $o$ or for one neighbor $q$, $(\LL^{-1}\ee^{(w)})_{q} > (\LL^{-1}\ee^{(w)})_{o}$. Continuing this argument inductively, one can see that if $(\LL^{-1}\ee^{(w)})_{y} \geq M^{-d_i/2}$, then there is a path from $y$ to $w$ where for all vertices $q$ on this path $(\LL^{-1}\ee^{(w)})_{q} \geq M^{-d_i/2}$. In this case $y \in C_w$ and therefore since $y$ is also in $C_v$, we have $C_v \cap C_w \neq \emptyset$ which is a contradiction.

     Therefore
     \[
     (\LL^{-1} \ee^{(v)})_y \geq M^{-(d_i/2)+2} - n M^{-d_i/2} -  M^{-2d_i} \geq M^{-(d_i/2)+1}.
     \]
     Thus
     \begin{align}
     \nonumber
     D_{\LL} (v,y)=-\log_{nU} ((\LL^{-1} \ee^{(v)})_y) + 2 & = -\log_M ((\LL^{-1} \ee^{(v)})_y) \log_{nU} M + 2 
     \\ \nonumber & \leq (\frac{d_i}{2} - 1)\log_{nU} M + 2 
     \\ & \leq 0.6 d_i -1.2 + 2\leq 0.65 d_i,
     \label{eq:inner-center-to-cluster-dist}
     \end{align}
     where the penultimate inequality follows from \eqref{eq:log-M-nU-bounds}.
     Now by triangle inequality (\Cref{claim:dist-prop}) for $y,y'\in W$,
     \[
     D_{\LL} (y,y') \leq D_{\LL} (v,y) + D_{\LL} (v,y') \leq 1.3 d_i.
     \]
     Therefore since the largest $d_i$ is $4^{\ell}$, the result follows.

    \item Let $y \in V$ and $z \in [n] \setminus W$. 
    Similar to the proof of the previous property, we can show that for all $w \in S \setminus\{v\}$,  $(\LL^{-1} \ee^{(w)})_y < M^{-d_i/2}$ (see \eqref{eq:outside-ball-less-m-di-2}). Moreover since $y \in V$, we have
    \[
    (\LL^{-1} \boldsymbol{1}_S)_y + M^{-2d_i} \geq \yy_y \geq M^{-d_i/4} - M^{-2d_i}.
    \]
    Therefore since $(\LL^{-1} \boldsymbol{1}_S)_y = (\LL^{-1} \ee^{(v)})_y + \sum_{w \in S \setminus\{v\}} (\LL^{-1} \ee^{(w)})_y$,
     \[
     (\LL^{-1} \ee^{(v)})_y \geq M^{-d_i/4} - n M^{-d_i/2} - 2 M^{-2d_i} \geq M^{-d_i/4+0.1}.
     \]
     Thus
     \begin{align}
     \nonumber
     D_{\LL} (v,y)=-\log_{nU} ((\LL^{-1} \ee^{(v)})_y) + 2 & = -\log_M ((\LL^{-1} \ee^{(v)})_y) \log_{nU} M + 2 
     \\ \nonumber & \leq (\frac{d_i}{4} - 0.1)\log_{nU} M + 2 
     \\ & \leq 0.3 d_i -0.08 + 2\leq 0.36 d_i,
     \label{eq:inner-center-to-cluster-dist}
     \end{align}
     where the penultimate inequality follows from \eqref{eq:log-M-nU-bounds}.
     Now by triangle inequality (\Cref{claim:dist-prop}) and \eqref{eq:inner-center-to-cluster-dist},
    \[
    D_{\LL}(v,z) \leq D_{\LL}(v,y) + D_{\LL}(y,z) \leq 0.36 d_i + D_{\LL}(y,z).
    \]
    Again, by an argument similar to the previous part for \eqref{eq:outside-ball-less-m-di-2}, we have $(\LL^{-1}\ee^{(v)})_{z} < M^{-d_i/2}$; otherwise, $z$ would belong to $C_v$ and thus to $W$, which is a contradiction. 
    Therefore by \eqref{eq:log-M-nU-bounds},
    \[
    D_{\LL}(v,z) = - \log_{nU} ((\LL^{-1}\ee^{(v)})_{z}) + 2 = \frac{-\log_M ((\LL^{-1}\ee^{(v)})_{z})}{\log_M (nU)} + 2 > \frac{d_i}{2} + 2,
    \]
    which gives
    \[
    D_{\LL} (y,z) \geq 0.14 d_i + 2.
    \]
    Thus since the smallest $d_i$ is $2^{\ell+1}$, the result follows.
\end{enumerate}

We now bound the bit complexity and failure probability of the algorithm.  
The algorithm performs a total of $2^{O(\sqrt{\log n})} \log(n) \log(n/\delta)$ iterations.  
In each iteration, we solve a linear system that takes
\[
\Otil(m \log (M^{4^{\ell}}) \log(U M^{4^{\ell}} \delta^{-1} \ell^2 B)) 
= m \, 2^{O(\sqrt{\log n})} \log(U) \log(U \delta^{-1})
\]
time.  
Moreover, in each iteration we compute connected components, which takes $O(m)$ time.  
Therefore, the total bit complexity of the algorithm is bounded by
\[
m \, 2^{O(\sqrt{\log n})} \log(U) \log(\delta^{-1}) \log(U \delta^{-1}).
\]

For the failure probability, note that since the total number of iterations is $\ell^2 B$, and each linear-system solve succeeds with probability at least $1 - \delta/(2\ell^2 B)$, by a union bound all linear systems are solved correctly with probability at least $1 - \delta/2$.  
Furthermore, as argued for the second property, we obtain a desirable set $S$ for each vertex with probability at least $1 - \delta/(2n)$ over the $B$ iterations of the inner loop for the corresponding $i$ and $j$.  
Hence, with probability at least $1 - \delta/2$, we select a desirable $S$ for all vertices throughout the algorithm, ensuring that all vertices are covered by the resulting low-diameter cover.  
By another union bound, the algorithm therefore succeeds with probability at least $1 - \delta$.

\end{proof}

\section{The Almost-Linear-Time Algorithm}
\label{sec:almost-lin-alg}

In this section, we present our algorithm for solving SDDM linear systems with entrywise approximation guarantees in almost-linear time and prove the main theorem of the paper (\Cref{thm:main}).  
To this end, we first recall the threshold decay (TD) framework from~\cite{GNY25}, its requirements, and the guarantees it provides in \Cref{sec:td-framework}.  
We then derive bounds on the error of solutions to partial linear systems solved within the TD framework in \Cref{sec:error-partial-solver}.  
Next, in \Cref{sec:partial-solver}, we show that our low-diameter cover can be used to construct the partial solvers required in the iterations of the TD framework.  
Finally, we provide implementation details for achieving almost-linear running time and complete the proof in \Cref{sec:final-almost-lin-time}.

\subsection{The Threshold Decay Framework}
\label{sec:td-framework}

Our almost-linear–time algorithm builds upon the TD framework proposed in~\cite{GNY25}.  
In this framework, the linear system is solved by iteratively solving a sequence of partial linear systems, where each matrix is a principal submatrix of the original SDDM matrix, and the right-hand side vector is adaptively updated based on previously computed entrywise approximations to certain entries---see \callalg{ThresholdDecay}.

\begin{figure}[!t]
    \begin{algbox}
        \textbf{\uline{\textsc{ThresholdDecay}}}: The algorithm framework for our solvers. Implementations are specified later. Only correctness is proved for the framework.  
        \\
        \uline{Input}: $\LL$: an $n \times n$ invertible RDDL matrix with integer entries in $[-U, U]$, \\
            $\bb$: a nonnegative vector with integer entries in $[0,U]$,
            \\
            $\epsilon$: the target accuracy, \\
            $T$: number of iterations (in this paper we always set it to $n$)\\
            Assume that $U \ge 2, T \ge 10, \eps \in (0, 1)$ \\
        \uline{Output:} vector $\xxtil$ supported on some set $A \subseteq [n]$ such that $\xxtil_A \approxbar_{\epsilon} (\LL^{-1} \bb)_A$.
        
        \begin{enumerate}
            \item Initialize $S^{(0)} = [n]$,
                $\bbhat^{(0)} = \bb$, and $\widetilde{\xx}^{(0)}$ to a zero dimensional vector. Let $\eps_L := \frac{\epsilon}{64T(nU)^2}$.
            \item For $t = 0 \ldots T$:
            \begin{enumerate}
                \item \label{ThresDec:L-solver}
                    Compute 
                    $\xxhat^{(t)}$ defined on $S^{(t)}$ such that
                    $ \norm{\widehat{\xx}^{(t)} - \LL_{S^{(t)}, S^{(t)}}^{-1} \widehat{\bb}^{(t)}}_2 \leq
                    \eps_L
                    \norm{\widehat{\bb}^{(t)}}_2$.
                \item \label{ThresDec:set-threshold}
                    Let $\theta_{t} $ be the smallest power of two larger than $  \frac{1}{4(nU)^2} \norm{\widehat{\bb}^{(t)}}_1$.
                \item \label{ThresDec:extractF}
                    Let $F^{(t)} \subseteq S^{(t)}$ be the indices of all the entries in $\xxhat^{(t)}$ larger than or equal to $\theta_{t}$, and let $S^{(t + 1)} \leftarrow S^{(t)} \setminus F^{(t)}$.
                \item \label{ThresDec:updateX}
                    Let $\xxtil^{(t+1)}$ be defined on $[n]\setminus S^{(t+1)}$ with $[\xxtil^{(t+1)}_{[n] \setminus S^{(t)}};\xxtil^{(t+1)}_{F^{(t)}}] \leftarrow [\xxtil^{(t)}; \xxhat^{(t)}_{F^{(t)}}]$.
                \item \label{ThresDec:updateB}
                    Create vector $\bbhat^{\left(t + 1\right)}$ for the next iteration such that
                    \begin{align} \label{eq:iteration-b}
                        \bbhat^{\left(t + 1\right)}
                        \approxbar_{\epsilon/(8T)}
                        \bb_{S^{\left( t +1 \right)}}
                        -
                        \LL_{S^{\left( t +1 \right)}, [n] \setminus S^{(t+1)}}
                        \xxtil^{\left( t+1 \right)}
                    \end{align}
            \end{enumerate} 
            
            \item Output set $A:=[n] \setminus S^{(T)}$ and $\xxtil := [\xxtil_A; \xxtil_{S^{(T)}}] = [\xxtil^{(T)}; \mathbf{0}]$
        \end{enumerate}
    \end{algbox}
    \caption{The Threshold Decay framework.}
    \labelalg{ThresholdDecay}
\end{figure}

\paragraph{Recall Threshold Decay.}
Let $\xxbar := \LL^{-1} \bb$ denote the solution vector. In Threshold Decay, $[n] = S^{(0)} \supset S^{(1)} \supset S^{(2)} \supset \dots$ are a chain of indices where $S^{(t)}$ is the set of indices for which an entrywise approximate has yet not been computed at iteration $t$. We invoke Laplacian solver on the set $S^{(t)}$ to solve $\xxhat^{(t)}$:
\begin{align} \label{eq:threshold-decay-solve}
    \| \xxhat^{(t)} - (\LL_{S^{(t)}, S^{(t)}})^{-1} \bbhat^{(t)} \|_2 \le \epsilon_L \| \bbhat^{(t)} \|_2
\end{align}
The framework has the following correctness guarantees.
\begin{lemma}[Lemma 3.7 of \cite{GNY25}] \label{lemma:threshold-decay-correctness}
Given $\eps \in (0, 1)$, integer $T \ge 10$, $U \ge 2$, an $n \times n$ invertible RDDL matrix $\LL$ with integer entries in $[-U, U]$, and a nonnegative vector $\bb$ with integer weights in $[0, U]$, let $\xxbar \defeq \LL^{-1} \bb$ denote the exact solution vector, \textsc{ThresholdDecay} has the following guarantees for all $t \in [0, T]$:
\begin{enumerate}
    \item $\norm{\bbhat^{(t)}}_1 \le (nU)^{-t} \norm{\bb}_1$. Moreover, if $t>0$, $\norm{\bbhat^{(t)}}_1 \le \frac{1}{(nU)} \norm{\bbhat^{(t-1)}}_1$.
    \item $\xxtil^{(t)} \approxbar_{\eps t / (4T)} \xxbar_{[n] \setminus S^{(t)}}$. Note that we define $\xxtil^{(0)}$ to be a zero dimensional vector.
    \item For all $i \in S^{(t+1)}$, $\xxbar_i < (nU)^{-2} \norm{\bbhat^{(t)}}_1 \le (nU)^{-(t+2)} \norm{\bb}_1$.
\end{enumerate}
\end{lemma}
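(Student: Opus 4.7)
The plan is to prove the three statements jointly by induction on $t$. The base $t = 0$ is immediate: $\bbhat^{(0)} = \bb$ gives (1), and (2) is vacuous since $\xxtil^{(0)}$ is the zero-dimensional vector. For the inductive step, I assume (1) and (2) at index $t$ together with (3) at all earlier indices, and use iteration $t$ of \callalg{ThresholdDecay} to extend (2) and prove (3) at index $t$, which then yields (1) at index $t+1$.

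The central technical step is to translate the solver output $\xxhat^{(t)}$ into an entrywise statement about $\xxbar_{S^{(t)}}$. From $\LL\xxbar = \bb$ we obtain the exact block identity
\begin{align*}
\LL_{S^{(t)},S^{(t)}}\,\xxbar_{S^{(t)}} = \bb_{S^{(t)}} - \LL_{S^{(t)},[n]\setminus S^{(t)}}\,\xxbar_{[n]\setminus S^{(t)}},
\end{align*}
while \Cref{eq:iteration-b} ensures $\bbhat^{(t)} \approxbar_{\eps/(8T)} \bb_{S^{(t)}} - \LL_{S^{(t)},[n]\setminus S^{(t)}}\,\xxtil^{(t)}$. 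Two structural facts supplied by the RDDL structure make entrywise approximations pass through everything cleanly: the off-diagonal block $\LL_{S^{(t)},[n]\setminus S^{(t)}}$ has nonpositive entries while $\xxbar,\xxtil^{(t)} \ge 0$, so the inductive bound $\xxtil^{(t)} \approxbar_{\eps t/(4T)} \xxbar_{[n]\setminus S^{(t)}}$ propagates entrywise through the product; and $(\LL_{S^{(t)},S^{(t)}})^{-1}$ has nonnegative entries, so it preserves entrywise approximations under multiplication. Chaining these observations gives $\LL_{S^{(t)},S^{(t)}}^{-1}\bbhat^{(t)} \approxbar_{\eps(2t+1)/(8T)} \xxbar_{S^{(t)}}$. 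Folding in the solver guarantee---whose additive error $\epsL\norm{\bbhat^{(t)}}_2 \le \epsL\norm{\bbhat^{(t)}}_1$ is, by the choice $\epsL = \eps/(64T(nU)^2)$, smaller than $\theta_t = \Theta(\norm{\bbhat^{(t)}}_1/(nU)^2)$ by a factor $\Theta(\eps/T)$---turns every $i$ with $\xxhat^{(t)}_i \ge \theta_t$ into a multiplicative approximation of $\xxbar_i$ within the budget $\eps(t+1)/(4T)$, extending (2) to $t+1$, and every $i \in S^{(t+1)}$ (with $\xxhat^{(t)}_i < \theta_t$) into the strict inequality $\xxbar_i < (nU)^{-2}\norm{\bbhat^{(t)}}_1$ required by (3).

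Statement (1) at $t+1$ then follows from (3) at $t$ combined with the RDDL row bound $\norm{\LL}_\infty \le 2U$:
\begin{align*}
\norm{\LL_{S^{(t+1)},S^{(t+1)}}\,\xxbar_{S^{(t+1)}}}_1
\le n\cdot\norm{\LL}_\infty\cdot\norm{\xxbar_{S^{(t+1)}}}_\infty
\le \frac{O(1)}{nU}\,\norm{\bbhat^{(t)}}_1,
\end{align*}
and absorbing the $\exp(\eps/(8T))$ slack in $\bbhat^{(t+1)}$ together with the $\exp(\eps(t+1)/(4T))$ entrywise factor on $\xxtil^{(t+1)}$ yields $\norm{\bbhat^{(t+1)}}_1 \le \tfrac{1}{nU}\norm{\bbhat^{(t)}}_1$; the $(nU)^{-t}$ bound follows by telescoping. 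I expect the main obstacle to be the simultaneous bookkeeping of three independent error sources---the solver additive error $\epsL$, the multiplicative slack $\exp(\eps/(8T))$ per update of $\bbhat$, and the inductive multiplicative factor $\exp(\eps t/(4T))$ on $\xxtil^{(t)}$---arranged so that $\theta_t$ cleanly separates true-large from true-small entries of $\xxbar$ at every step while the accumulated entrywise exponent stays within $\eps(t+1)/(4T)$. The constants $\theta_t \approx \norm{\bbhat^{(t)}}_1/(nU)^2$ and $\epsL = \eps/(64T(nU)^2)$ are tuned precisely to make this accounting balance.
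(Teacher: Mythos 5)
This lemma is cited verbatim from \cite{GNY25} (their Lemma~3.7) and is \emph{not proved in this paper}, so there is no in-paper proof to compare against; I will instead assess the argument on its own terms.

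The overall structure of your argument is sound and is the natural one: joint induction on $t$, using the block identity $\LL_{S^{(t)},S^{(t)}}\,\xxbar_{S^{(t)}} = \bb_{S^{(t)}} - \LL_{S^{(t)},[n]\setminus S^{(t)}}\xxbar_{[n]\setminus S^{(t)}}$, propagating entrywise multiplicative approximations through the nonpositive off-diagonal block and the nonnegative M-matrix inverse $(\LL_{S^{(t)},S^{(t)}})^{-1}$, and then trading the additive solver error $\epsL\norm{\bbhat^{(t)}}_2$ against the threshold $\theta_t$. The chain $\bbhat^{(t)} \approxbar_{\eps(2t+1)/(8T)} \LL_{S^{(t)},S^{(t)}}\xxbar_{S^{(t)}}$ and hence $\LL_{S^{(t)},S^{(t)}}^{-1}\bbhat^{(t)} \approxbar_{\eps(2t+1)/(8T)} \xxbar_{S^{(t)}}$ is correct, as is the observation that $\epsL\norm{\bbhat^{(t)}}_1 \le (\eps/16T)\,\theta_t$ gives the remaining $\eps/(8T)$ of entrywise budget for indices in $F^{(t)}$.

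However, your derivation of property~(1) has a concrete gap that the tight constants in \callalg{ThresholdDecay} do not let you wave away. You bound
\begin{align*}
\norm{\LL_{S^{(t+1)},S^{(t+1)}}\,\xxbar_{S^{(t+1)}}}_1 \le n\,\norm{\LL}_\infty\,\norm{\xxbar_{S^{(t+1)}}}_\infty \le 2nU\,\norm{\xxbar_{S^{(t+1)}}}_\infty,
\end{align*}
and then claim the $\exp(\eps/(8T))$ and $\exp(\eps(t+1)/(4T))$ factors can be absorbed to obtain $\norm{\bbhat^{(t+1)}}_1 \le \tfrac{1}{nU}\norm{\bbhat^{(t)}}_1$. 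Tracing the constants: property~(3) at step $t$ gives $\norm{\xxbar_{S^{(t+1)}}}_\infty$ bounded by roughly $0.65\,(nU)^{-2}\norm{\bbhat^{(t)}}_1$ (using $\theta_t \le \tfrac{1}{2}(nU)^{-2}\norm{\bbhat^{(t)}}_1$, $\epsL\norm{\bbhat^{(t)}}_1 \le \tfrac{1}{640}(nU)^{-2}\norm{\bbhat^{(t)}}_1$, and the entrywise factor $e^{\eps(2t+1)/(8T)} \le e^{0.2625}$), and the $\bbhat^{(t+1)}$ slack contributes another factor of about $e^{0.29} \approx 1.33$. With your bound the result is $\gtrsim 1.7/(nU)\cdot\norm{\bbhat^{(t)}}_1$, which overshoots $\tfrac{1}{nU}\norm{\bbhat^{(t)}}_1$. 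The factor of two is the culprit. The fix is to exploit the sign structure rather than the operator norm: since $\LL_{S^{(t+1)},S^{(t+1)}}\,\xxbar_{S^{(t+1)}} = \bb_{S^{(t+1)}} - \LL_{S^{(t+1)},[n]\setminus S^{(t+1)}}\xxbar_{[n]\setminus S^{(t+1)}}$ is entrywise nonnegative (nonnegative vector minus nonpositive-matrix-times-nonnegative-vector), and each entry is at most $\LL_{ii}\xxbar_i \le U\,\xxbar_i$ because the off-diagonal contributions are $\le 0$, one gets $\norm{\LL_{S^{(t+1)},S^{(t+1)}}\,\xxbar_{S^{(t+1)}}}_1 \le U\sum_{i}\xxbar_i \le nU\,\norm{\xxbar_{S^{(t+1)}}}_\infty$ --- exactly a factor of two better than $n\norm{\LL}_\infty\norm{\xxbar}_\infty$, and then the accounting closes (the final constant becomes $\approx 0.87/(nU) < 1/(nU)$). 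You should replace the $\norm{\LL}_\infty \le 2U$ step with this sharper entrywise argument.
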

The first property implies that the threshold $\|\bbhat^{(t)}\|_1$ decreases over iterations.  
The second property ensures that the computed values are accurate entrywise approximations.  
The third property guarantees that large entries are solved and removed from the set $S^{(t)}$ in iteration $t$.

We invoke the following theorem from~\cite{GNY25} to establish the correctness of the framework.  
Note that in this paper we apply the framework only to SDDM matrices, although it generalizes to RDDL matrices.

\begin{theorem} [Correctness of the Threshold Decay framework, Theorem 3.1 of \cite{GNY25}]\label{thm:ThresDec-correctness}
Given $\epsilon \in (0,1)$, integer $T\geq 10$, $U \ge 2$, an $n \times n$ invertible RDDL matrix $\LL$ with integer entries in $[-U, U]$, 
and a nonnegative vector $\bb$ with integer weights in $[0,U]$, \callalg{ThresholdDecay} produces a vector $\xxtil$ and a set $A \subseteq [n]$ such that $\xxtil$ has support $A$, with the following properties hold:  
    \begin{itemize}
        \item For any $i \in [n]$ with $(\LL^{-1} \bb)_i \geq (nU)^{-(T+1)} \norm{\bb}_1$, we have $i \in A$. 
        \item Our output is an entrywise $\exp(\eps)$-approximation on $A$, i.e.,  $\xxtil_A \approxbar_{\epsilon} (\LL^{-1} \bb)_A$.
        \item Specifically, when $T = n$, we have $\xxtil \approxbar_{\epsilon} \LL^{-1} \bb$.
    \end{itemize}
\end{theorem}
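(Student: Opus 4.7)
The plan is to prove the three bullets of the theorem by directly invoking the three properties of Lemma 3.7 as a black box, with one additional rationality argument needed for the last bullet.

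For the first bullet, suppose for contradiction that some $i \in [n]$ with $\xxbar_i := (\LL^{-1}\bb)_i \geq (nU)^{-(T+1)}\norm{\bb}_1$ lies in $S^{(T)}$. Then property 3 of \Cref{lemma:threshold-decay-correctness}, applied at iteration $t = T-1$, says that for all $i \in S^{(T)}$ we have $\xxbar_i < (nU)^{-(T+1)}\norm{\bb}_1$, which is a direct contradiction. Hence every such $i$ must have been removed from the active set by time $T$, i.e., $i \in A$.

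For the second bullet, note that the output vector $\xxtil$ agrees with $\xxtil^{(T)}$ on $A = [n]\setminus S^{(T)}$ by construction (Step~\ref{ThresDec:updateX} accumulates all solved entries). Applying property 2 of \Cref{lemma:threshold-decay-correctness} at $t = T$ yields $\xxtil^{(T)} \approxbar_{\eps T/(4T)} \xxbar_{[n]\setminus S^{(T)}}$, and since $\eps T/(4T) = \eps/4 \le \eps$, we obtain $\xxtil_A \approxbar_\eps \xxbar_A$.

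The third bullet is the one requiring extra care, and I expect it to be the main (though still mild) obstacle: we need to upgrade the approximation on $A$ to one on all of $[n]$ when $T = n$. The plan is to show that $\xxbar_i = 0$ for every $i \in S^{(n)}$, so that the zero extension $\xxtil_{S^{(n)}} = \mathbf{0}$ trivially satisfies the entrywise guarantee. To see this, apply property 3 at $t = n-1$ to get $\xxbar_i < (nU)^{-(n+1)}\norm{\bb}_1 \le (nU)^{-n}$, using $\norm{\bb}_1 \le nU$. On the other hand, $\xxbar = \LL^{-1}\bb = \mathrm{adj}(\LL)\bb/\det(\LL)$ has rational entries whose denominators divide $\det(\LL)$; since $\LL$ has integer entries bounded by $U$, Hadamard's inequality gives $\det(\LL) \le \prod_i \norm{\LL_{i,\cdot}}_2 \le (\sqrt{n}\,U)^n \le (nU)^n$, so any nonzero $\xxbar_i$ satisfies $\xxbar_i \ge (nU)^{-n}$. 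The strict inequality from property 3 then forces $\xxbar_i = 0$. Combining this with the second bullet (applied with $T = n$) concludes the proof.
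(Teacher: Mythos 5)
Your proof is correct and self-contained. The paper itself does not reprove this theorem---it is imported verbatim as Theorem 3.1 of \cite{GNY25}---but your derivation of it from \Cref{lemma:threshold-decay-correctness} is sound and is almost certainly how the cited work argues. Bullets 1 and 2 are immediate consequences of properties 3 and 2 of \Cref{lemma:threshold-decay-correctness} applied at $t=T-1$ and $t=T$ respectively, exactly as you do.

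The only genuinely non-trivial step is bullet 3, and you correctly identify it: property 3 alone only shows that the discarded entries are \emph{small} ($< (nU)^{-n}$), not zero, and a multiplicative approximation of a small-but-nonzero quantity by $0$ would fail. Your Cramer's-rule/Hadamard argument---that nonzero entries of $\LL^{-1}\bb$ are rationals with denominator $\det(\LL)$, and $\det(\LL) \le (\sqrt{n}\,U)^n \le (nU)^n$ by Hadamard, so any nonzero entry is at least $(nU)^{-n}$---closes this gap cleanly; the strict inequality from property 3 then forces these entries to vanish. Two minor sanity checks that you left implicit but which do hold: $\det(\LL) > 0$ because an invertible RDDL matrix is a nonsingular M-matrix, and $\xxbar \ge 0$ entrywise because $\LL^{-1}$ is entrywise nonnegative and $\bb \ge 0$, so the multiplicative approximation notion is well-posed and $0 \approxbar_\eps 0$ is trivially satisfied. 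No gaps.
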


\subsection{Errors Corresponding to Partial Linear Systems}
\label{sec:error-partial-solver}

In this subsection, we present results that bound the error for solutions to partial linear systems, i.e., when a subset of entries is set to zero.  
We will use these bounds to establish the correctness of our algorithm.

\begin{lemma}
\label{lemma:remove-far-set}
Let $\bb$ be a nonnegative $n$-dimensional vector, and let $P$ denote the set of its positive entries.  
If $T \subseteq [n] \setminus P$ and $D_{\LL}(u, v) \ge d > 0$ for all $u \in P$ and $v \in T$, then
\[
    \bigl\|
        (\LL^{-1})_{-T, -T} \bb_{-T}
        -
        (\LL_{-T, -T})^{-1} \bb_{-T}
    \bigr\|_2
    < n^{2.5} U (nU)^{2-d} \norm{\bb}_1.
\]
\end{lemma}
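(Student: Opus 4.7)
My strategy is to reduce the error to an explicit expression involving $\xx_T$, the restriction of $\xx := \LL^{-1}\bb$ to the "far" set $T$, and then bound that expression using the distance hypothesis together with a diagonal-dominance argument. First, observe that because $\bb_T = 0$ (which follows from $T \subseteq [n]\setminus P$), we have $(\LL^{-1})_{-T,-T}\bb_{-T} = (\LL^{-1}\bb)_{-T} = \xx_{-T}$. Restricting the equation $\LL\xx=\bb$ to the rows in $-T$ gives $\LL_{-T,-T}\xx_{-T} + \LL_{-T,T}\xx_T = \bb_{-T}$, and rearranging yields the key identity
\[
\xx_{-T} - (\LL_{-T,-T})^{-1}\bb_{-T} \;=\; -(\LL_{-T,-T})^{-1}\,\LL_{-T,T}\,\xx_T.
\]

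Next I would bound the two factors on the right-hand side separately. For $\xx_T$: for each $v\in T$, write $\xx_v = \sum_{u\in P}(\LL^{-1})_{v,u}\bb_u$; by the hypothesis $D_{\LL}(u,v)\ge d$, the definition of the probability distance, and symmetry of $\LL$, each $(\LL^{-1})_{v,u}\le (nU)^{2-d}$, so $\norm{\xx_T}_{\infty}\le (nU)^{2-d}\norm{\bb}_1$. For the matrix factor $\QQ := -(\LL_{-T,-T})^{-1}\LL_{-T,T}$: it is entrywise nonnegative, because $(\LL_{-T,-T})^{-1}$ is entrywise nonnegative (as $\LL_{-T,-T}$ is SDDM) and $-\LL_{-T,T}$ is entrywise nonnegative (as $\LL$ is an L-matrix). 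To bound its $\infty$-induced norm, I would invoke row-diagonal dominance of $\LL$: for each $u\in -T$, $\sum_{w\in -T}\LL_{u,w} \ge \sum_{v\in T}(-\LL_{u,v})$, i.e., $\LL_{-T,-T}\one \ge -\LL_{-T,T}\one$ componentwise. Left-multiplying by the nonnegative matrix $(\LL_{-T,-T})^{-1}$ yields $\one \ge \QQ\one$, hence $\norm{\QQ}_{\infty}\le 1$.

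Combining these two bounds gives $\norm{\xx_{-T}-(\LL_{-T,-T})^{-1}\bb_{-T}}_{\infty}\le (nU)^{2-d}\norm{\bb}_1$ and therefore $\norm{\xx_{-T}-(\LL_{-T,-T})^{-1}\bb_{-T}}_2 \le \sqrt{n}\,(nU)^{2-d}\norm{\bb}_1$, which is (much) stronger than the claimed bound $n^{2.5}U(nU)^{2-d}\norm{\bb}_1$ and hence implies it. Conceptually, the only nontrivial step is the bound $\norm{\QQ}_{\infty}\le 1$: it really expresses that $\QQ_{u,v}$ is the probability that an absorbing random walk on the associated graph (with $T\cup\{n+1\}$ absorbing) starting at $u\in -T$ is absorbed at $v\in T$, but it admits a clean purely algebraic proof via diagonal dominance as above, so no serious technical obstacle arises.
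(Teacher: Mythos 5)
Your proposal is correct, and it takes a genuinely different route from the paper's while yielding a sharper bound. Both arguments hinge on the same underlying identity: since $\bb_T = \0$, the target quantity equals $\xx_{-T} - (\LL_{-T,-T})^{-1}\bb_{-T}$ where $\xx := \LL^{-1}\bb$, and this difference can be written as $\pm(\LL_{-T,-T})^{-1}\LL_{-T,T}\xx_T$. The paper reaches this via the Schur complement block-inversion formula, then bounds the $\ell_2$ norm by chaining $\norm{(\LL_{-T,-T})^{-1}}_2 \le n^2$ (\Cref{lemma:norm2-inv}), $\norm{\LL_{-T,T}}_{\infty\to 2}\le \sqrt{n}U$, and $\norm{(\LL^{-1})_{T,-T}\bb_{-T}}_\infty \le (nU)^{2-d}\norm{\bb}_1$, which produces exactly the stated $n^{2.5}U(nU)^{2-d}\norm{\bb}_1$. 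You instead derive the identity elementarily by restricting $\LL\xx=\bb$ to the rows in $-T$, and you bound the matrix factor $\QQ := -(\LL_{-T,-T})^{-1}\LL_{-T,T}$ by a substochasticity argument: $\QQ$ is entrywise nonnegative (product of the nonnegative M-matrix inverse $(\LL_{-T,-T})^{-1}$ and the nonnegative block $-\LL_{-T,T}$), and $\QQ\one\le\one$ follows from row diagonal dominance of $\LL$ left-multiplied by the nonnegative $(\LL_{-T,-T})^{-1}$, giving $\norm{\QQ}_\infty\le 1$. This replaces the paper's $n^2\cdot\sqrt{n}U$ factor by $1$, and your bound $\sqrt{n}(nU)^{2-d}\norm{\bb}_1$ strictly dominates the claimed one. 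What your route buys is a sharper, arguably more conceptual bound (it directly expresses that $\QQ$ is a matrix of absorption probabilities); what the paper's route buys is modularity, reusing the condition-number bound \Cref{lemma:norm2-inv} as a black box rather than invoking the nonnegativity of $(\LL_{-T,-T})^{-1}$ and that principal submatrices of SDDM matrices are SDDM (facts that the paper uses only implicitly). Both proofs tacitly rely on $\LL_{-T,-T}$ being invertible, which holds because $\LL$ is positive definite.
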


\begin{proof}
Let
\[
    \LL =
    \begin{pmatrix}
        \LL_{-T, -T} & \LL_{-T, T}\\
        \LL_{T, -T} & \LL_{T, T}
    \end{pmatrix},
\]
and let $\SS := \LL_{T, T} - \LL_{T, -T} (\LL_{-T, -T})^{-1} \LL_{-T, T}$ be the Schur complement.  
Then, by \Cref{eq:block-Sc-inversion},
\begin{align*}
    (\LL^{-1})_{T, -T} &= - \SS^{-1} \LL_{T, -T} (\LL_{-T, -T})^{-1}, \\
    (\LL^{-1})_{-T, -T} &= (\LL_{-T, -T})^{-1}
    + (\LL_{-T, -T})^{-1} \LL_{-T, T} \SS^{-1} \LL_{T, -T} (\LL_{-T, -T})^{-1}.
\end{align*}
We can bound
\begin{align*}
    \bigl\|
        (\LL^{-1})_{-T, -T} \bb_{-T}
        -
        (\LL_{-T, -T})^{-1} \bb_{-T}
    \bigr\|_2
    & =
    \| (\LL_{-T, -T})^{-1} \LL_{-T, T} \SS^{-1} \LL_{T, -T} (\LL_{-T, -T})^{-1} \bb_{-T} \|_2
    \\ & = 
    \| (\LL_{-T, -T})^{-1} \LL_{-T, T} (\LL^{-1})_{T, -T} \bb_{-T} \|_2
    \\ & \le
    \| (\LL_{-T, -T})^{-1} \|_2
    \| \LL_{-T, T} \|_{\infty \to 2}
    \| (\LL^{-1})_{T, -T} \bb_{-T}\|_\infty
    \\ & \le
    n^2
    \cdot
    \sqrt{n} U
    \cdot
    (nU)^{2-d} \| \bb_{-T} \|_1
\end{align*}
where we used \Cref{lemma:norm2-inv} to bound $\| (\LL_{-T, -T})^{-1} \|_2$, the fact that the maximum entry of $\LL$ is $U$ to bound $\| \LL_{-T, T} \|_{\infty \to 2}$, and the property of the probability distance to bound $\| (\LL^{-1})_{T, -T} \|_{\max}$ in the last inequality.
\end{proof}

The previous lemma provides an upper bound on the error incurred when removing a set of vertices that are sufficiently far—in terms of probability distance—from the support of the right-hand side vector.  
We now extend this result to quantify the total error in approximating the full solution of the linear system by zeroing out the corresponding entries for these far vertices.  
Intuitively, the corollary shows that vertices far from the nonzero entries of $\bb$ contribute negligibly to the solution of $\LL \xx = \bb$, and their removal introduces only a small error.

\begin{corollary}
\label{coro:remove-far-vertices}
Let $\LL \in \Z^{n \times n}$ be an invertible SDDM matrix, and let $\bb$ be any nonnegative $n$-dimensional vector.  
Let $P$ denote the set of indices of its positive entries, and let $T \subseteq [n] \setminus P$.  
If 
$
    D_{\LL}(u, v) \ge d > 0 
$
for all $u \in P, v \in T$,
then
\[
    \Bigl\|
        \bigl[(\LL^{-1} \bb)_{-T};\, (\LL^{-1} \bb)_T\bigr]
        -
        \bigl[(\LL_{-T, -T})^{-1} \bb_{-T};\, \mathbf{0}\bigr]
    \Bigr\|_2
    \le (nU)^{-d+5} \norm{\bb}_2.
\]
\end{corollary}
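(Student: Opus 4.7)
The plan is to split the $\ell_2$ norm coordinatewise across the partition $[n] = (-T) \cup T$ and to bound the contribution from each part separately. Since $T \subseteq [n] \setminus P$ and $P$ is the support of $\bb$, we have $\bb_T = \mathbf{0}$, and hence $(\LL^{-1} \bb)_{-T} = (\LL^{-1})_{-T,-T} \bb_{-T}$. Thus the $-T$ block of the error vector is exactly
\[
(\LL^{-1})_{-T,-T} \bb_{-T} - (\LL_{-T,-T})^{-1} \bb_{-T},
\]
which \Cref{lemma:remove-far-set} bounds by $n^{2.5} U (nU)^{2-d} \norm{\bb}_1$.

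For the $T$ block, the error is simply $(\LL^{-1} \bb)_T = (\LL^{-1})_{T,-T} \bb_{-T}$. For any $v \in T$ and $u \in P$, the distance assumption together with the definition of $D_{\LL}$ gives $(\LL^{-1})_{vu} \le (nU)^{2-d}$. Therefore each entry of $(\LL^{-1} \bb)_T$ is at most $(nU)^{2-d} \norm{\bb}_1$, and bounding an $n$-dimensional vector's $\ell_2$ norm by $\sqrt{n}$ times its $\ell_\infty$ norm yields
\[
\norm{(\LL^{-1} \bb)_T}_2 \le \sqrt{n} \,(nU)^{2-d} \norm{\bb}_1.
\]

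Combining the two contributions via $\norm{[\xx;\yy]}_2 \le \norm{\xx}_2 + \norm{\yy}_2$ and using $\norm{\bb}_1 \le \sqrt{n}\,\norm{\bb}_2$, the total error is at most
\[
\bigl( n^{3} U + n \bigr)(nU)^{2-d} \norm{\bb}_2 \le (nU)^{5-d}\norm{\bb}_2,
\]
which is the claimed bound. The argument is almost entirely bookkeeping of polynomial factors, and no step stands out as an obstacle once \Cref{lemma:remove-far-set} is invoked for the $-T$ block and the distance-to-entry bound $(\LL^{-1})_{vu} \le (nU)^{2-d}$ (which is just the definition of $D_{\LL}$ rearranged) is used for the $T$ block.
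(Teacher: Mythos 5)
Your proof is correct and takes essentially the same route as the paper: split the error into the $-T$ block (handled by \Cref{lemma:remove-far-set}) and the $T$ block (handled by the entrywise bound $(\LL^{-1})_{vu} \le (nU)^{2-d}$ that follows directly from the distance hypothesis and \Cref{def:probability-distance}), then convert $\norm{\bb}_1$ to $\norm{\bb}_2$ and absorb the polynomial factors using $U \ge 2$. Your write-up is if anything slightly more explicit than the paper's in how the two blocks are combined and in the final bookkeeping $(n^3 U + n)(nU)^{2-d} \le (nU)^{5-d}$.
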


\begin{proof}
Since $\bb_T = \0$, we have $(\LL^{-1})_{-T, -T} \bb_{-T} = (\LL^{-1} \bb)_{-T}$.  
Applying \Cref{lemma:remove-far-set} to $\bb_{-T}$, we obtain
\begin{align} 
\label{eq:thm_3_4-2}
    \bigl\|
        (\LL_{-T, -T})^{-1} \bb_{-T}
        -
        (\LL^{-1} \bb)_{-T}
    \bigr\|_2
    <
    (nU)^{4.5 - d} \norm{\bb_{-T}}_1
    \le
    (nU)^{4.5 - d} \sqrt{n} \norm{\bb_{-T}}_2.
\end{align}
Next, observe that
\[
    (\LL^{-1} \bb)_T
    =
    (\LL^{-1})_{T, P} \bb_P
    +
    (\LL^{-1})_{T, -P} \bb_{-P}
    =
    (\LL^{-1})_{T, P} \bb_P.
\]
Thus we have 
\begin{equation*}
    \norm{
        (\LL^{-1})_{T, P} \bb_P
    }_2
    \le
    nU \cdot  (nU)^{2-d} \norm{\bb_P}_1
    =
    (nU)^{3-d} \norm{\bb_P}_1.
\end{equation*}
\end{proof}

\subsection{The Efficient Partial Solver}
\label{sec:partial-solver}

In our \callalg{ThresholdDecay} framework, we introduce the boundary-expanded set for the use of our efficient partial solver in the next section. 

\paragraph{The boundary-expanded set.} Recall that each step of \callalg{ThresholdDecay}, we extract the large entries in the solution vector of the subsystem
\begin{align*}
    \LL_{S^{(t)}, S^{(t)}} \xx = \bbhat^{(t)}.
\end{align*}
Suppose that the right-hand-side vector $\bbhat^{(t)}$ has support $I \subseteq[n]$. The combinatorial interpretation of the solution is the sum of random walks on the subgraph $S^{(t)}$ starting from the vertices in $I$.
By leveraging the low-diameter cover, we enlarge the set 
$I$ to its \emph{boundary-expanded set}, ensuring that all but a small weighted fraction of the random walks remain inside this set.
This allows us to solve the subsystem restricted to the boundary-expanded set instead of 
$S^{(t)}$
, incurring only a small additive error, which is sufficient to accurately recover the large entries.
\begin{definition}[Boundary-expanded set]
\label{def:boundary-expanded-set}
Given a low-diameter cover $\calC = \{(V_i, W_i)\}_i$ and a subset $I \subseteq [n]$, we define the boundary-expanded set of $I$ with respect to $\calC$ as 
\begin{align*}
    \ExpSet_{\calC}(I) := \{ u \in [n]: (\exists (V_i, W_i) \in \calC) (u \in V_i \land  |W_i \cap I|>0)   \}.
\end{align*}     
\end{definition}

Note that $\ExpSet_{\calC}(I)$ is a superset of $I$ that contains all the inner balls whose outer ball touches $I$. An equivalent definition of $\ExpSet_{\calC}(I)$ is
\begin{align}  \label{eq:equiv-def-boundary-expanded-set}
    \ExpSet_{\calC}(I) := \bigcup_{(V_i, W_i) \in \calC:|W_i \cap I|> 0}  V_i.
\end{align}
We call standard Laplacian solvers on the boundary-expanded set $\ExpSet_{\calC}(I)$ so that the entries on $I$ have good approximations. 

\paragraph{The efficient partial solver.}

We present the main subroutine, \callalg{PartialSolve}, of our almost-linear-time algorithm that solves a partial system  $\LL_{S^{(t)}, S^{(t)}} \xx = \bbhat^{(t)}$ in the $t$-th iteration of \callalg{ThresholdDecay}.

\begin{figure}[!t]
    \begin{algbox}
        \textbf{\uline{\textsc{PartialSolve}}}: The subroutine that solves the system $\LL_{S, S} \xx = \bbhat$ using our low-diameter cover.
        \\
        \uline{Input}: Access to $\LL$ in \callalg{ThresholdDecay}, a low-diameter cover $\calC$ of $\LL$,
        \\
        a subset $S \subseteq [n]$, the right-hand-side vector $\bbhat$, 
        the accuracy parameter $\epsL$, and $\delta_L \in (0, 1)$.
        \\
        This procedure has access to the sets $I,H \subseteq[n]$.
        \\
        \uline{Input requirements}: We assume $I = \{ u \in S : \bbhat_u > 0\}$ and $H = \ExpSet_{\calC}( I) \cap S$.
        \\
        \uline{Output:} 
            vector $\xxhat$ such that 
            $\norm{
                {\xxhat}
                -
                (\LL_{S, S})^{-1} \, \bbhat
            }_2
            \le
            \epsL \norm{\bbhat}_2
            $.\\

            \begin{enumerate}
                \item
                    We call the Laplacian solver in \Cref{cor:near-linear-solver-SDDM} under fixed-point arithmetic,
                        \begin{align} \label{eq:SDDM-almost-linear-solve}
                            \xxhat_{H} \gets Z \cdot \textsc{LaplacianSolver}(\LL_{H, H}, \bbhat_{H} / Z, \epsL / 2, \delta_L).
                        \end{align}
                    where $Z \gets \norm{\bbhat_H}_1$ is the normalization factor.
                \item Output vector $\xxhat := [\xxhat_H; \xxhat_{S \setminus H}] = [\xxhat; \0] $
            \end{enumerate}
        
    \end{algbox}
    \caption{Solve the linear system restricted on a subset of vertices using the low-diameter cover}
    \labelalg{PartialSolve}
\end{figure}

We state our main lemma for the correctness of \callalg{PartialSolve}.

\begin{lemma}[Correctness of \callalg{PartialSolve}] \label{lemma:partial-solve-correctness}
    For any $\LL, \bbhat$, any $I, H$ that satisfy the input requirements, any $(r_1, r_2, s)$-cover $\calC$ such that $r_2 \ge 5 + \log_{n U} (2 / \epsL)$, the returned vector $\xxhat$ of \callalg{PartialSolve} satisfies
    \begin{align*}
        \norm{\xxhat - (\LL_{S,S})^{-1} \bbhat}_2 
        \le
        \epsL \norm{\bbhat}_2
    \end{align*}
    with probability $\ge 1 - \delta_L$.
\end{lemma}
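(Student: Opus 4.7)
I will split the error by the triangle inequality into a \emph{solver error} (from approximately inverting the reduced system on $H$) and a \emph{truncation error} (from replacing $(\LL_{S,S})^{-1}\bbhat$ with the zero-extension from $H$ to $S$ of $(\LL_{H,H})^{-1}\bbhat_H$), and bound each by $(\epsL/2)\norm{\bbhat}_2$. Concretely,
\[
\norm{\xxhat - (\LL_{S,S})^{-1}\bbhat}_2 \;\le\; \bigl\|\xxhat - [(\LL_{H,H})^{-1}\bbhat_H;\0]\bigr\|_2 + \bigl\|[(\LL_{H,H})^{-1}\bbhat_H;\0] - (\LL_{S,S})^{-1}\bbhat\bigr\|_2.
\]

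The solver term is immediate from \Cref{cor:near-linear-solver-SDDM}, which \callalg{PartialSolve} invokes on $\LL_{H,H}$ with target relative accuracy $\epsL/2$ and failure probability $\delta_L$. The rescaling by $Z = \norm{\bbhat_H}_1$ preserves the relative error, so with probability at least $1-\delta_L$ one obtains $\norm{\xxhat_H - (\LL_{H,H})^{-1}\bbhat_H}_2 \le (\epsL/2)\norm{\bbhat_H}_2 \le (\epsL/2)\norm{\bbhat}_2$, and both $\xxhat$ and the zero-extension vanish on $S\setminus H$.

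The truncation term is exactly the setting of \Cref{coro:remove-far-vertices}, applied to the SDDM matrix $\LL_{S,S}$ (a principal submatrix of $\LL$, hence SDDM), the nonnegative right-hand side $\bbhat$ with positive support $I \subseteq H$, and the truncation set $T := S \setminus H$. The key verification is the distance precondition: for every $u \in I$ and $v \in S \setminus H$, I need $D_{\LL_{S,S}}(u,v) > r_2$. By the input requirement $H = \ExpSet_\calC(I) \cap S$ together with $v \in S$, the vertex $v$ lies outside $\ExpSet_\calC(I)$. The covering property~(2) of \Cref{def:low-diam-cover} guarantees a pair $(V_i, W_i) \in \calC$ with $v \in V_i$, and the equivalent characterization \eqref{eq:equiv-def-boundary-expanded-set} of $\ExpSet_\calC(I)$ then forces $W_i \cap I = \emptyset$, so $u \notin W_i$. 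Property~(5) of \Cref{def:low-diam-cover} gives $D_\LL(u,v) > r_2$, which the monotonicity \Cref{lem:distance-monotonicity} promotes to $D_{\LL_{S,S}}(u,v) > r_2$. Invoking \Cref{coro:remove-far-vertices} now bounds the truncation term by $(nU)^{-r_2 + 5}\norm{\bbhat}_2 \le (\epsL/2)\norm{\bbhat}_2$, the last inequality by the hypothesis $r_2 \ge 5 + \log_{nU}(2/\epsL)$.

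I do not anticipate any deep obstacle: the bulk of the work is bookkeeping between quantities defined on the full vertex set (the cover $\calC$ and distances under $D_\LL$) and quantities in the ambient submatrix $\LL_{S,S}$ to which \Cref{coro:remove-far-vertices} is applied, and \Cref{lem:distance-monotonicity} is precisely what bridges this gap. A secondary point is verifying that $\bbhat$ is nonnegative (required by \Cref{coro:remove-far-vertices}), which holds because the $\approxbar$ update in step~\ref{ThresDec:updateB} of \callalg{ThresholdDecay} is a multiplicative approximation and therefore preserves the sign of each coordinate of a nonnegative vector.
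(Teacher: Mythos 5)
Your proposal is correct and follows essentially the same route as the paper's proof: apply the normwise solver guarantee (\Cref{cor:near-linear-solver-SDDM}) on the reduced block $\LL_{H,H}$, use the cover properties (a witness inner/outer pair $(V_i,W_i)$ for each $v\in S\setminus H$) to deduce $D_\LL(u,v)>r_2$ for all $u\in I$, promote this to $D_{\LL_{S,S}}$ via the monotonicity lemma, and invoke \Cref{coro:remove-far-vertices} with $T=S\setminus H$. The one place where you differ is the final bookkeeping: you take a single triangle inequality on the full vectors in $\R^S$ (comparing $\xxhat$ with the zero-extension of $(\LL_{H,H})^{-1}\bbhat_H$, then comparing that with $(\LL_{S,S})^{-1}\bbhat$), which cleanly yields $(\epsL/2)\norm{\bbhat}_2 + (nU)^{-r_2+5}\norm{\bbhat}_2 \le \epsL\norm{\bbhat}_2$. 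The paper instead bounds the $H$-coordinates and the $(S\setminus H)$-coordinates separately, each by $\epsL\norm{\bbhat}_2$, and then "combines"; as written this only gives $\sqrt{2}\,\epsL\norm{\bbhat}_2$ without a slightly stronger hypothesis on $r_2$. Your assembly is therefore a touch cleaner and actually tighter; the substance of the argument is identical.
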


\begin{proof}
    Recall that $\xxhat$ is obtained from \eqref{eq:SDDM-almost-linear-solve}, which satisfies
    \begin{align} \label{eq:lap-solver-guarantee}
        \norm{
            \xxhat_H
            -
            ( \LL_{H, H} )^{-1} \bbhat_H
        }_2
        \le
        \frac{\epsL}{2}
        \norm{
            \bbhat_H
        }_2.
    \end{align}
    with probability at least $1-\delta_L$ and $H = \ExpSet_{\calC} (I) \cap S$.

    Fix any $u \in S \setminus H$, we have $u \notin \ExpSet_{\calC} (I)$. By the definition of $\ExpSet_{\calC} (I)$, it holds that
    \begin{align*}
        (\forall i)(u \notin V_i \lor |W_i \cap I| = 0 ).
    \end{align*}
    By the definition of the clusters, there must exist an $i_u$ such that $u \in V_{i_u}$. Thus, $|W_{i_u} \cap I| = 0$, implying that $D_{\LL}(u,v) > r_2$ for all $v \in I$, due to the last property \Cref{def:low-diam-cover}. Note that the distance here is with respect to the Laplacian matrix of the entire graph. By \Cref{lem:distance-monotonicity}, we have $D_{\LL_{S, S}}(u,v) \ge D_{\LL}(u,v)$. Therefore, we have shown that 
    \begin{align}
        D_{\LL_{S, S}}(u,v) > r_2, \quad \forall u \in S \setminus H, v \in I.
    \end{align}
    
    We then apply \Cref{coro:remove-far-vertices} to the base Laplacian matrix $\LL_{S, S}$ and the right-hand-side vector $\bbhat$ defined on the set $S$, with $S \setminus H$ being the target set, obtaining 
    \begin{align}
        \label{eq:pseudo-D-removal-guarantee-1}
        \norm{
        \left( (\LL_{S, S})^{-1} \, \bbhat \right)_H
        -
        \left( \LL_{H, H} \right)^{-1} \, \bbhat_H 
        }_2
        &\le
        (nU)^{-r_2+5} \norm{\bbhat}_2\\
        \label{eq:pseudo-D-removal-guarantee-2}
        \norm{
        \left( (\LL_{S,S})^{-1} \bbhat \right)_{S \setminus H}
        }_2
        &\le
        (nU)^{-r_2+5} \norm{\bbhat}_2
    \end{align}
    
    Setting $r_2 \ge 5 + \log_{nU} (2 / \epsL)$, triangle inequality with \Cref{eq:lap-solver-guarantee} and \Cref{eq:pseudo-D-removal-guarantee-1} achieves
    \begin{align}
        \norm{
            \xxhat_H
            -
            \left( (\LL_{S,S})^{-1} \bbhat \right)_H
        }_2
        \le
        \left( \frac{\epsL}{2} + (nU)^{-r_2 + 5} \right)
        \norm{\bbhat}_2 
        \le \label{eq:correctness-1}
        \epsL \norm{\bbhat}_2.
    \end{align}

    Recall that $\xxhat_{S \setminus H} = \0$. Thus, \Cref{eq:pseudo-D-removal-guarantee-2} achieves
    \begin{align}
        \norm{
        \xxhat_{S \setminus H}
        -
        \left( (\LL_{S,S})^{-1} \bbhat \right)_{S \setminus H}
        }_2
        \le
        (nU)^{-r_2+5} \norm{\bbhat}_2
        < \label{eq:correctness-2}
        \epsL \norm{\bbhat}_2.
    \end{align}

    The correctness lemma then follows from combining \Cref{eq:correctness-1} and \Cref{eq:correctness-2}.
\end{proof}

\subsection{The Almost-Linear-Time SDDM Solver}
\label{sec:final-almost-lin-time}

\begin{figure}[!t]
    \begin{algbox}
        \textbf{\uline{\textsc{SDDMSolve}}}: The entrywise approximate linear system solver for SDDM matrices based on \callalg{ThresholdDecay} and \callalg{PartialSolve}, with fast implementations. \\
        \uline{Input}: 
            $\LL, \bb, \eps, \kappa$ of \callalg{ThresholdDecay}, where $\LL$ is an invertible SDDM matrix. $\delta \in (0,1)$.\\
            WLOG, assume $\LL$ is irreducible. \\
        \uline{Output:} vector $\xxtil$ such that $\xxtil \approxbar_{\epsilon} \LL^{-1} \bb$. \\
        \begin{enumerate}
            \item  \label{line:construct-calC}
                Produce a low-diameter cover $\calC \gets \callalg{LowDiamConstruct}(\LL, \delta / 2)$.
            \item  \label{line:run-threshold-decay}
                Run \callalg{ThresholdDecay}, with vectors $\bbhat^{(t)}, \xxtil^{(t)}, \xxhat^{(t)}$ represented by $O(\log(nU/\eps))$-bit floating-point numbers. \\
                \textbf{Step \ref{ThresDec:L-solver}: }
                We maintain $I^{(t)} := \{u \in S^{(t)} : \bbhat^{(t)}_u > 0\}$ and $H^{(t)} := \ExpSet_{\calC}(I^{(t)}) \cap S^{(t)}$ efficiently over the iterations. In iteration $t$:
                \begin{enumerate}
                    \item $\xxhat^{(t)} \gets \callalg{PartialSolve}(\LL, \calC, S^{(t)}, \bbhat^{(t)}, \epsL, \delta / (2 n), H^{(t)}, I^{(t)})$.
                \end{enumerate}
                \textbf{Step \ref{ThresDec:set-threshold}, \ref{ThresDec:extractF}, \ref{ThresDec:updateX}: } Work only on the subset of indices $H^{(t)}$; ignore the entries on $S^{(t)}$.

                \textbf{Step \ref{ThresDec:updateB}: } Initialize $n$-dimensional vector $\vvhat^{(0)} = 0$. In iteration $t$:
                \begin{enumerate}
                    \item
                        Compute $\vvhat^{(t+1)}$ defined on $S^{(t+1)}$, using in-place updates to $\vvhat^{(t)}$, 
                        \begin{align} \label{eq:compute-v} 
                            \vvhat^{(t+1)} \gets \vvhat^{(t)}_{S^{(t+1)}} 
                            -
                            \LL_{S^{\left( t +1 \right)}, F^{(t)}}
                            \xxhat^{\left( t \right)}_{F^{(t)}}.
                        \end{align}
                    \item 
                        Then, we have an implicit representation for $\bbhat^{(t+1)}$, that is, 
                        \begin{align} \label{eq:bhat-rep} 
                            \bbhat^{\left(t + 1\right)}
                            :=
                            \bb_{S^{\left( t +1 \right)}}
                            +
                            \vvhat^{(t+1)}.
                        \end{align}
                \end{enumerate}
        \end{enumerate}
        Return the vector $\xxtil$ produced by \callalg{ThresholdDecay}.
    \end{algbox}
    \caption{The entrywise approximate linear system solver for SDDM matrices in almost-linear-time}
    \labelalg{SDDMSolve}
\end{figure}

In this section, we present our almost-linear-time SDDM solver, \callalg{SDDMSolve}, by plugging in \callalg{PartialSolve} into \callalg{ThresholdDecay} with our pre-computed low-diameter cover from \callalg{LowDiamConstruct}.
The correctness of our solver follows directly from the correctness of the subroutines (\Cref{lemma:threshold-decay-correctness}, \Cref{lemma:partial-solve-correctness}, and \Cref{thm:construct-low-diam-cover}). We focus on fast implementations and the resulting almost-linear running time bounds. 

In iteration $t$, we call \callalg{PartialSolve} on the subsystem $\LL_{S^{(t)}, S^{(t)}}$ defined on the subset $S^{(t)}$. 
We use $I^{(t)}, H^{(t)}, \eta^{(t)}, \xxhat^{(t)}$ to denote $I, H, \eta, \xxhat$ of \callalg{PartialSolve} invoked in the $t$-th iteration. In the worst case, the sum of the sizes of $S^{(t)}$ over all iterations is $\Theta(n^2)$. Our partial solver works on the smaller subset $H^{(t)}$ instead of $S^{(t)}$. The following lemma shows that each vertex occurs in only a small number of sets $H^{(t)}$'s. 

\begin{lemma} \label{lemma:H-small}
    In \callalg{SDDMSolve}, for any $u \in [n]$, it holds that $\sum_{t=0}^{T} \mathbf{1}[u \in H^{(t)}] \le  \rin$. 
\end{lemma}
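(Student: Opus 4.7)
The plan is to show that once $u$ first enters $H^{(t_0)}$ at some iteration $t_0$, it must be removed from the active set $S^{(t)}$ within $\rin$ iterations. Since $H^{(t)} \subseteq S^{(t)}$ and a vertex removed from $S^{(t)}$ never returns, this immediately bounds $\sum_{t=0}^{T}\mathbf{1}[u \in H^{(t)}]$ by $\rin$.

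I would fix $u$ and let $t_0$ be the first iteration with $u \in H^{(t_0)}$. By the definition $H^{(t_0)} = \ExpSet_{\calC}(I^{(t_0)}) \cap S^{(t_0)}$, there is a pair $(V_i, W_i) \in \calC$ with $u \in V_i$ and some $v \in W_i \cap I^{(t_0)}$. Since $u \in V_i \subseteq W_i$ and $v \in W_i$, property~4 of \Cref{def:low-diam-cover} yields $D_{\LL}(u, v) \le \rin$, so $(\LL^{-1})_{uv} \ge (nU)^{2-\rin}$.

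Next I would establish a lower bound on $\xxbar_u := (\LL^{-1}\bb)_u$ by chaining through $v$. Because $v \in I^{(t_0)}$, we have $\bbhat^{(t_0)}_v > 0$, and the recursive update rule for $\bbhat$ (\Cref{eq:iteration-b}) forces either (a)~$\bb_v > 0$, or (b)~$v$ has a neighbor $w \in [n]\setminus S^{(t_0)}$ that was solved at some earlier iteration $t' < t_0$ with $\xxtil_w^{(t_0)} > 0$. In case~(a) we get $\xxbar_u \ge (\LL^{-1})_{uv}\bb_v \ge (nU)^{2-\rin}$ immediately. In case~(b), the Markov property of escape probabilities (as used in the proof of \Cref{claim:dist-prop}) gives
\[
    \xxbar_u \;\ge\; \xxbar_w \cdot \P(u,w,n+1) \;\ge\; \xxbar_w \cdot (nU)^{-D_{\LL}(u,w)},
\]
and the triangle inequality combined with the edge-distance bound $D_{\LL}(v,w) \le 4$ (property~4 of \Cref{claim:dist-prop}) yields $D_{\LL}(u,w) \le \rin + 4$. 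Since $w \in F^{(t')}$, we have $\xxhat_w^{(t')} \ge \theta_{t'}$, and by property~2 of \Cref{lemma:threshold-decay-correctness} the true value $\xxbar_w$ differs from $\xxhat_w^{(t')}$ by only a constant factor; monotonicity of $\theta_t$ and property~1 then give $\xxbar_w \ge \norm{\bbhat^{(t_0)}}_1 / \poly(nU)$. Combining both cases, $\xxbar_u \ge (nU)^{-\rin-O(1)} \norm{\bbhat^{(t_0)}}_1$.

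Finally, I would apply the contrapositive of property~3 of \Cref{lemma:threshold-decay-correctness}: whenever $\xxbar_u \ge (nU)^{-2} \norm{\bbhat^{(t)}}_1$, then $u \notin S^{(t+1)}$. Using the geometric decay $\norm{\bbhat^{(t)}}_1 \le (nU)^{t_0-t}\norm{\bbhat^{(t_0)}}_1$ from property~1, this criterion is satisfied once $t - t_0 \ge \rin - O(1)$, so $u$ exits $S$ within $\rin$ iterations of $t_0$. The main technical obstacle will be carefully tracking the $O(1)$ additive constants arising in the escape-probability chaining and the approximation factors from the recursive $\bbhat$ update, so that the final count matches the clean $\rin$ bound stated in the lemma; fortunately the slack between $\rin = 2^{2\ell+1}$ and the sharper diameter $1.3 \cdot 4^{\ell}$ actually guaranteed in the proof of \Cref{thm:construct-low-diam-cover} should absorb these constants.
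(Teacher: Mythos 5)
Your proposal is correct and follows essentially the same approach as the paper: fix $u$, find the pair $(V,W)\in\calC$ with $u\in V$ and $v\in W\cap I^{(t)}$, lower-bound $\xxbar_v$ via the two cases ($\bb_v>0$ vs.\ neighbor $w$ already solved) using the threshold $\theta_{t-1}$ and \Cref{lemma:threshold-decay-correctness}, transfer to $\xxbar_u$ through $D_\LL(u,v)\le\rin$ and \Cref{lemma:solution-of-near-vertices}, and contradict property~3 of \Cref{lemma:threshold-decay-correctness} after the geometric decay of $\norm{\bbhat^{(t)}}_1$ kicks in. The only cosmetic difference is that the paper chains $u\to v$ and $v\to w$ separately rather than directly bounding $D_\LL(u,w)$, and it folds case~(a) into the main bound by treating $\bbhat^{(-1)}$ as all-ones; both of these are bookkeeping choices, not a change in the argument.
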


Note that by \Cref{thm:construct-low-diam-cover}, we have that $\rin =  2^{O(\sqrt{\log n})}$ is small.
We first prove the lemma in the next section. To get the almost-linear running time, we also discuss the efficient maintenance of the boundary-expanded set $\ExpSet_{\calC}(S^{(t)})$ in \callalg{PartialSolve}, as well as fast implementations of the updates to the vectors and sets in \callalg{ThresholdDecay}.

\subsubsection{Bounding the sum of sizes}

In this section, we bound the sum of sizes of $H^{(t)}$ over all iterations $t$. We need the following lemma from \cite{GNY25}.

\begin{lemma}[Lemma 3.14 of \cite{GNY25}]
    \label{lemma:solution-of-neighbors-single-system}
    Let $\LL$ be an $n\times n$ invertible SDDM matrix with integer weights in $[-U,U]$, and $\bb$ be a nonnegative vector of dimension $n$.
    For any $i,j\in[n]$ with $\LL_{ij} \neq 0$, we have
    \begin{align*}
        (nU)^{-1} \cdot (\LL^{-1} \bb)_j \le (\LL^{-1} \bb)_i \leq nU \cdot (\LL^{-1} \bb)_j.
    \end{align*}
\end{lemma}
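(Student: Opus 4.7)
The plan is to read both inequalities directly off the $i$-th row of the linear system $\LL \xx = \bb$, where $\xx \defeq \LL^{-1}\bb$. Since $\LL$ is SDDM, hence an L-matrix, one has $\LL_{ii} > 0$ and $\LL_{ik} \le 0$ for every $k \ne i$, so solving row $i$ for $\xx_i$ yields
\[
\xx_i = \frac{\bb_i}{\LL_{ii}} + \sum_{k \ne i} \frac{-\LL_{ik}}{\LL_{ii}} \, \xx_k,
\]
and every coefficient on the right-hand side is nonnegative. The whole proof will be driven by dropping most of these terms and using the integer bounds.

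The one ingredient I have to establish before I can drop terms is that $\xx$ is coordinatewise nonnegative, and I would do this by invoking \Cref{lemma:escape-prob}: every entry $(\LL^{-1})_{st}$ equals $\P(s,t,n+1) \cdot \LL_{tt} (\LL^{-1})_{tt}$, a product of nonnegative quantities (an escape probability, the positive diagonal of $\LL$, and a positive diagonal entry of $\LL^{-1}$), so $\LL^{-1} \ge 0$ entrywise, and consequently $\xx = \LL^{-1} \bb \ge 0$ because $\bb \ge 0$. With $\xx \ge 0$ in hand, the lower bound is one line: dropping all but the $k = j$ term from the display gives $\xx_i \ge (-\LL_{ij}/\LL_{ii})\,\xx_j$, and the integrality of $\LL$ together with $\LL_{ij} \ne 0$ and entries in $[-U,U]$ yields $-\LL_{ij} \ge 1$ and $\LL_{ii} \le U$, so $\xx_i \ge U^{-1} \xx_j \ge (nU)^{-1} \xx_j$.

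For the upper bound I would appeal to the symmetry of $\LL$: since $\LL = \LL^\top$, the hypothesis $\LL_{ij} \ne 0$ also gives $\LL_{ji} \ne 0$, and the same one-row argument applied at index $j$ yields $\xx_j \ge (nU)^{-1} \xx_i$, which rearranges to $\xx_i \le nU \cdot \xx_j$. There is no real obstacle here; the entire argument is a single-row computation once nonnegativity of $\xx$ is in hand, and the gap between the bound the argument actually produces ($U^{-1}$) and the bound as stated ($(nU)^{-1}$) confirms that the statement is far from tight in $n$ and leaves comfortable room in the integer bookkeeping.
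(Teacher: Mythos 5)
Your proof is correct: nonnegativity of $\LL^{-1}$ (via \Cref{lemma:escape-prob} or the Neumann series) plus the row identity $\xx_i = \bb_i/\LL_{ii} + \sum_{k\ne i}(-\LL_{ik}/\LL_{ii})\xx_k$ and the integer bounds $-\LL_{ij}\ge 1$, $\LL_{ii}\le U$ give $\xx_i \ge U^{-1}\xx_j$, and symmetry of $\LL$ yields the other direction. The paper itself only cites this as Lemma~3.14 of \cite{GNY25} without reproducing a proof, but your row-expansion argument is the standard one and is exactly the technique the paper uses elsewhere (e.g., in establishing \eqref{eq:outside-ball-less-m-di-2} inside the proof of \Cref{thm:construct-low-diam-cover}), so nothing further is needed.
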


We also need the following generalized version of the lemma, using our probability distance (defined in \Cref{def:probability-distance}).

\begin{lemma} \label{lemma:solution-of-near-vertices}
    Let $\LL$ be an $n \times n$ invertible SDDM matrix with integer weights in $[-U, U]$, and $\bb$ be a nonnegative vector of dimension $n$.
    For any $i,j\in[n]$ with $D_{\LL}(i, j) \le d$, we have
    \begin{align*}
        (nU)^{-d} \cdot (\LL^{-1} \bb)_j \le (\LL^{-1} \bb)_i \leq (nU)^{d} \cdot (\LL^{-1} \bb)_j
    \end{align*}
\end{lemma}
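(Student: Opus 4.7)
The plan is to reduce the claim to an entrywise comparison between columns (or rows) of $\LL^{-1}$, and then to invoke the triangle inequality for the probability distance (the third property of \Cref{claim:dist-prop}). Since $\LL$ is an invertible SDDM matrix, every entry of $\LL^{-1}$ is nonnegative (this is standard and can be seen directly from the Neumann-series expansion $\LL^{-1} = \DD^{-1}(\II + \AA + \AA^2 + \cdots)$ used in the proof of \Cref{claim:dist-prop}), and $\bb$ is nonnegative by assumption. So for any $i \in [n]$,
\[
(\LL^{-1} \bb)_i \;=\; \sum_{k \in [n]} (\LL^{-1})_{ik}\, \bb_k,
\]
and it suffices to bound each individual ratio $(\LL^{-1})_{ik}/(\LL^{-1})_{jk}$ by $(nU)^{\pm d}$.

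First I would rewrite the entries of the inverse in terms of the probability distance. By \Cref{def:probability-distance}, $(\LL^{-1})_{ik} = (nU)^{\,2 - D_{\LL}(i,k)}$ and likewise for $(j,k)$, so
\[
\frac{(\LL^{-1})_{ik}}{(\LL^{-1})_{jk}} \;=\; (nU)^{\,D_{\LL}(j,k) - D_{\LL}(i,k)}.
\]
Applying the triangle inequality $D_{\LL}(j,k) \le D_{\LL}(j,i) + D_{\LL}(i,k) = D_{\LL}(i,j) + D_{\LL}(i,k) \le d + D_{\LL}(i,k)$ (using symmetry from \Cref{claim:dist-prop}) gives $D_{\LL}(j,k) - D_{\LL}(i,k) \le d$, hence $(\LL^{-1})_{ik} \le (nU)^{d}\, (\LL^{-1})_{jk}$. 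The symmetric swap of $i$ and $j$ yields the lower bound $(\LL^{-1})_{ik} \ge (nU)^{-d} (\LL^{-1})_{jk}$.

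Finally, I would sum these pointwise inequalities against the nonnegative vector $\bb$:
\[
(\LL^{-1}\bb)_i \;=\; \sum_k (\LL^{-1})_{ik}\, \bb_k \;\le\; (nU)^{d}\sum_k (\LL^{-1})_{jk}\, \bb_k \;=\; (nU)^{d}\,(\LL^{-1}\bb)_j,
\]
and analogously for the lower bound, giving both inequalities of the statement. There is no real obstacle here; the only subtlety worth flagging explicitly is the use of nonnegativity of $\LL^{-1}$ and $\bb$ to turn the entrywise ratio bound into a bound on the full sum, and the invocation of the symmetric triangle inequality from \Cref{claim:dist-prop} to cover both directions. This lemma is essentially the ``metric-lifted'' generalization of \Cref{lemma:solution-of-neighbors-single-system}, which corresponds to the case $d = 4$ by the fourth property of \Cref{claim:dist-prop}.
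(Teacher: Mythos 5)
Your proposal is correct and takes essentially the same route as the paper's proof: both fix $k$, apply the triangle inequality from \Cref{claim:dist-prop} to compare $(\LL^{-1})_{ik}$ with $(\LL^{-1})_{jk}$ (yielding the entrywise bound $(\LL^{-1})_{ik} \le (nU)^d (\LL^{-1})_{jk}$ and its reverse by symmetry), and then sum against the nonnegative $\bb$. The only cosmetic difference is that you work directly in the exponent $D_{\LL}(j,k) - D_{\LL}(i,k)$ whereas the paper writes the same inequality multiplicatively via $(\LL^{-1})_{ik} \ge (\LL^{-1})_{ij}(\LL^{-1})_{jk}(nU)^{-2}$.
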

\begin{proof}
    Fix any $k \in [n]$. By the triangle inequality from \Cref{claim:dist-prop}, we have $D_{\LL}(i,k) \le D_{\LL}(i,j) + D_{\LL}(j,k)$, and thus 
    \begin{align} \label{eq:tmp-Linverse-bound}
        (\LL^{-1})_{ik} \ge (\LL^{-1})_{ij} (\LL^{-1})_{jk}  (nU)^{-2} 
        \ge
        (nU)^{-d+2} \cdot (\LL^{-1})_{jk}  (nU)^{-2}
        =
        (nU)^{-d} \cdot (\LL^{-1})_{jk}.
    \end{align}
    Note that $(\LL^{-1} \bb)_i$ and $(\LL^{-1} \bb)_j$ can be written as linear combinations:
    \begin{align*}
        (\LL^{-1} \bb)_i = \sum_{k=1}^{n} (\LL^{-1})_{ik} \bb_k,
        \quad
        (\LL^{-1} \bb)_j = \sum_{k=1}^{n} (\LL^{-1})_{jk} \bb_k.
    \end{align*}
    Since $\bb_k \ge 0$ for all $k$, the inequality
    $
    (\LL^{-1} \bb)_i \ge (nU)^{-d} \cdot (\LL^{-1} \bb)_j
    $
    follows by taking the same linear combination of both sides of \eqref{eq:tmp-Linverse-bound}. The other direction of the lemma follows by symmetry.
\end{proof}

\begin{proof}[Proof of \Cref{lemma:H-small}.]
We fix any $u \in [n]$. Suppose that in iteration $t$, $u$ is contained in the set
\begin{align*}
    H^{(t)} = \ExpSet_{\calC}(I^{(t)}) \cap S^{(t)}.
\end{align*}
We thus have $u \in S^{(t)}$ and $u \in \ExpSet_{\calC}(I^{(t)})$. By the definition of boundary-expanded set, there exists $(V, W) \in \calC$ such that $u \in V$ and $|W \cap I^{(t)}|>0$. Let $v$ be any vertex in $W \cap I^{(t)}$. By the definition of $I^{(t)}$, we have
$
    \bbhat^{(t)}_v>0.
$
Let $\xxbar := \LL^{-1}\bb$. We now lower bound $\xxbar_{v}$.

By step~\ref{ThresDec:updateB} of \callalg{ThresholdDecay}, we have
\begin{align*} 
    \bbhat^{\left(t \right)}
    \approxbar_{\epsilon/(8T)}
    \bb_{S^{\left( t  \right)}}
    -
    \LL_{S^{\left( t  \right)}, [n] \setminus S^{(t)}}
    \xxtil^{\left( t \right)}.
\end{align*}
Therefore, either $\bb_{v} > 0$, or there exists $w \in [n] \setminus S^{(t)}$ such that $|\LL_{v, w}| > 0$ and $\xxtil^{(t)}_w > 0$.
In the first case, $\xxbar_v \ge \bb_v \ge 1$. In the second case, by \Cref{lemma:solution-of-neighbors-single-system}, 
\begin{align*}
    \xxbar_v \ge (nU)^{-1} \xxbar_{w} \approxbar_{\eps t / (4T)} \xxtil_{w} \ge \theta_{t-1} = \frac{1}{4(nU)^2} \norm{ \bbhat^{(t-1)} }_1.
\end{align*}
where the entrywise approximation follows from the second property of \Cref{lemma:threshold-decay-correctness}, and the last inequality follows from step~\ref{ThresDec:extractF} of \callalg{ThresholdDecay}. We thus obtain the following lower bound (where we define $\bbhat^{(-1)}$ as the all-ones vector to handle the first case)
\begin{align*}
    \xxbar_v \ge
    \frac{1}{8(nU)^2} \norm{ \bbhat^{(t-1)} }_1.
\end{align*}
Since $u, v$ are in the same outer ball $W$ in the low-diameter cover $\calC$, by the fourth property of \Cref{def:low-diam-cover}, we have $D_{\LL}(u, v) \le \rin$.
Therefore, by \Cref{lemma:solution-of-near-vertices}, 
\begin{align} \label{eq:xu-lower-bound}
    \xxbar_u \ge \xxbar_v \cdot (nU)^{-\rin} 
    \ge
    (nU)^{-\rin-3} \norm{ \bbhat^{(t-1)} }_1.
\end{align}
If $u$ survives after $\rin + 1$ iterations, then $u \in S^{(t + \rin + 1)}$ and by \Cref{lemma:threshold-decay-correctness},
\begin{equation*}
    \xxbar_u < (nU)^{-2} \norm{\bbhat^{(t+\rin)}}_1 < (nU)^{-3 -\rin}  \norm{\bbhat^{(t-1)}}_1
\end{equation*}
where we repeatedly applied property 1 of \Cref{lemma:threshold-decay-correctness} for the second inequality.
This contradicts with \eqref{eq:xu-lower-bound}.
Therefore, for any $u \in [n]$, it must survive at most $\rin$ iterations, and thus
\begin{align*}
    \sum_{t=0}^{T} \mathbf{1}[u \in H^{(t)}] \le  \rin.
\end{align*}

\end{proof}

\subsubsection{Maintaining the vector \texorpdfstring{$\bbhat^{(t)}$}{bbhat^(t)}}

The implementation of step~\ref{ThresDec:updateB} (stated in \callalg{SDDMSolve}) is the same implementation as in \textsc{PredictiveSDDMSolver} of \cite{GNY25}.
This maintains the vector $\bbhat^{(t)}$ efficiently over the iterations by storing an extra vector $\vvhat^{(t)}$.

Note that $F^{(t)} = S^{(t)} \setminus S^{(t+1)}$ is the set of removed vertices in iteration $t$. We update $\bbhat^{(t)}$ on the edges from $F^{(t)}$ to $S^{(t+1)}$, i.e., the nonzero entries of $\LL_{S^{(t+1)}, F^{(t)}}$. Each edge only contributes once over all iterations, so the total number of arithmetic operation is $O(n+m)$. The detailed proof appears in the proof of Lemma 3.13 of \cite{GNY25}.

\begin{claim}[Lemma 3.13 of \cite{GNY25}] \label{claim:maintain-bbhat}
    The implementation of Step~\ref{ThresDec:updateB} in \callalg{SDDMSolve} runs in a total of $O((n+m) \log(nU/\eps))$ bit operations. There are only $O(n+m)$ total entry updates to the vector $\vvhat^{(t)}$ and $\bbhat^{(t)}$.
\end{claim}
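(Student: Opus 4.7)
The plan is to show that the key quantity to bound is the total number of entry updates to $\vvhat^{(t)}$ across all iterations, since $\bbhat^{(t)}$ is never materialized explicitly but only represented implicitly via $\bbhat^{(t+1)} := \bb_{S^{(t+1)}} + \vvhat^{(t+1)}$. Once the $O(n+m)$ bound on updates is established, the bit-complexity bound follows immediately because each entry of the involved vectors is stored with $O(\log(nU/\eps))$ bits, and each update consists of reading one entry of $\LL$ (an integer in $[-U,U]$), one entry of $\xxhat^{(t)}$, and performing a single multiply--subtract into one entry of $\vvhat^{(t+1)}$.

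The first step is to observe that the sets $F^{(0)}, F^{(1)}, \dots$ produced by \callalg{ThresholdDecay} are disjoint: by construction $F^{(t)} = S^{(t)} \setminus S^{(t+1)}$, and $S^{(0)} \supseteq S^{(1)} \supseteq \cdots$, so each vertex belongs to at most one $F^{(t)}$. Next I would look at the update rule
\[
\vvhat^{(t+1)} \gets \vvhat^{(t)}_{S^{(t+1)}} - \LL_{S^{(t+1)}, F^{(t)}}\,\xxhat^{(t)}_{F^{(t)}},
\]
and charge each nonzero arithmetic operation in iteration $t$ to a nonzero entry $\LL_{ji}$ with $i \in F^{(t)}$ and $j \in S^{(t+1)}$. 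Since each vertex $i$ lies in at most one $F^{(t)}$, and for that $t$ the edges incident to $i$ contribute at most once (the other endpoint $j$ either is or is not in $S^{(t+1)}$), summing over $t$ charges each off-diagonal nonzero of $\LL$ at most once, and each diagonal nonzero at most once (for the copy step $\vvhat^{(t+1)} \gets \vvhat^{(t)}_{S^{(t+1)}}$, which can be made implicit by simply restricting the domain and need not copy data at all). This yields a total of $O(n+m)$ entry touches, as claimed.

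The bit-complexity bound then follows mechanically: each entry operation is a multiplication of an $O(\log U)$-bit integer by an $O(\log(nU/\eps))$-bit floating-point number, rounded back to $O(\log(nU/\eps))$ bits, followed by a floating-point subtraction, all of which cost $O(\log(nU/\eps))$ bit operations under standard floating-point arithmetic. Multiplying by the $O(n+m)$ bound on the number of updates yields the desired $O((n+m)\log(nU/\eps))$ bit complexity for the whole step across all iterations. The implicit representation $\bbhat^{(t)} = \bb_{S^{(t)}} + \vvhat^{(t)}$ is never expanded; whenever \callalg{PartialSolve} reads $\bbhat^{(t)}_H$, we form only those coordinates on the fly from $\bb$ and $\vvhat^{(t)}$, which is accounted for in the analysis of \callalg{PartialSolve}.

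The only mild obstacle is verifying that the in-place update pattern does not double-charge edges: one has to argue carefully that the edge $(i,j)$ with, say, $i$ removed at time $t_i$ and $j$ removed at time $t_j > t_i$ is touched exactly once, namely at time $t_i$ (and not at time $t_j$, since by then $i \notin S^{(t_j)}$ and the matrix $\LL_{S^{(t_j+1)}, F^{(t_j)}}$ has no row for $i$). This amortization argument is essentially the one used for \textsc{PredictiveSDDMSolver} of \cite{GNY25} (Lemma 3.13), and transfers here verbatim because the update rule for $\vvhat^{(t)}$ in \callalg{SDDMSolve} is identical; the only difference is which indices are selected into $F^{(t)}$, which is irrelevant to the charging argument.
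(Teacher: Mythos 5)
Your proposal is correct and matches the approach the paper (and \cite{GNY25}, Lemma 3.13, to which the paper defers) takes: charge each arithmetic operation in the update $\vvhat^{(t+1)} \gets \vvhat^{(t)}_{S^{(t+1)}} - \LL_{S^{(t+1)}, F^{(t)}}\,\xxhat^{(t)}_{F^{(t)}}$ to a nonzero off-diagonal entry of $\LL$, observe that the $F^{(t)}$ are disjoint so each edge is charged exactly once at $\min(t_i, t_j)$, and then multiply by the per-operation bit cost. The observation that the restriction $\vvhat^{(t)}_{S^{(t+1)}}$ is a domain restriction rather than a copy, and the explicit handling of why the later-removed endpoint contributes nothing, are exactly the details the paper's one-sentence sketch leaves to \cite{GNY25}.
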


 Furthermore, we need to track the norms of the vector $\bbhat^{(t)}$, including $\norm{\bbhat^{(t)}}_1$ and $\norm{\bbhat^{(t)}}_2$, for use in other steps of the algorithm. This can be implemented efficiently as well.

\begin{claim} \label{claim:maintain-bbhat-norm}
    In \callalg{SDDMSolve}, for any constant $c>0$, we can maintain $\norm{\bbhat^{(t)}}_1$ and $\norm{\bbhat^{(t)}}_2$ up to multiplicative error of $(\eps/(nU))^c$ with total $O((n+m) \log (nU/\eps) \log n)$ bit operations.
\end{claim}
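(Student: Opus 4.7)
The plan is to maintain each norm by a balanced binary (segment) tree over the $n$ coordinates of $\bbhat^{(t)}$. For the $\ell_1$-tree, leaf $i$ stores $|\bbhat^{(t)}_i|$ and each internal node stores the sum over its subtree; for the $\ell_2$-tree, leaf $i$ stores $(\bbhat^{(t)}_i)^2$, and $\|\bbhat^{(t)}\|_2$ is returned on query by taking the square root of the root value. Every time an entry of $\bbhat^{(t)}$ is modified (including setting it to $0$ when the corresponding index is moved out of $S^{(t)}$ into $F^{(t)}$), we recompute the affected leaf and propagate the change up $O(\log n)$ ancestors.

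By \Cref{claim:maintain-bbhat}, the total number of entry changes to $\bbhat^{(t)}$ (and to the auxiliary vector $\vvhat^{(t)}$) across the entire execution of \callalg{SDDMSolve} is $O(n+m)$; the removal bookkeeping contributes at most an additional $O(n)$ leaf updates. Each update incurs $O(\log n)$ arithmetic operations, each on $O(\log(nU/\eps))$-bit floating-point numbers, at a bit cost of $O(\log(nU/\eps))$ per operation. Multiplying gives the desired overall bound of $O((n+m)\log(nU/\eps)\log n)$ bit operations, and norm queries cost only $O(\log(nU/\eps))$ bit operations (reading the root and, for $\ell_2$, one square root).

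The main obstacle, and the reason a tree is used rather than a naive running sum, is controlling the relative error. A naive sum would update by adding $|b| - |a|$ or $b^2 - a^2$, which allows catastrophic cancellation when the change is much smaller than the accumulated norm. In the tree, by contrast, every value at every internal node is a sum of nonnegative quantities, so rounding errors cannot cancel: each rounding step contributes a multiplicative factor of $(1 + 2^{-\Omega(\log(nU/\eps))})$ to the value stored at a node, and these errors compound only along the $O(\log n)$-depth root-to-leaf path. Choosing the working precision as $O(\log(nU/\eps))$ with a sufficiently large hidden constant therefore bounds the total relative error at the root by $(1 + 2^{-\Omega(\log(nU/\eps))})^{O(\log n)} - 1 \le (\eps/(nU))^c$ for the target constant $c$; the final square root for the $\ell_2$ query at most halves this relative error. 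This matches the accuracy and bit-complexity bounds claimed.
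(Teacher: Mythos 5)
Your proposal is correct and matches the paper's proof essentially verbatim: both maintain a segment tree over the coordinates (storing the relevant power of each entry at the leaves and partial sums at internal nodes), argue that the $O(n+m)$ entry updates plus $n$ removals each cost $O(\log n)$ node recomputations on $O(\log(nU/\eps))$-bit floats, and rely on the nonnegativity of the summands to prevent cancellation so that the relative error at the root compounds only over the $O(\log n)$-depth tree. Your error analysis is in fact slightly more explicit than the paper's, but it is the same argument.
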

\begin{proof}
     By \Cref{claim:maintain-bbhat}, there are only $O(n+m)$ total updates to the vector $\vvhat^{(t)}$. Observe that it suffices to compute the sum of $p$-th powers over the index set $S^{(t)}$, since 
     \begin{align*}
         \norm{\bbhat^{(t)}}_p^p = \sum_{i \in S^{(t)} } (\bb_i + \vv^{(t)}_i)^p.
     \end{align*}
     We can maintain this sum while updating $S^{(t)}$ and $\vv^{(t)}$. There are $n$ vertex removals from $S^{(t)}$ and $O(n+m)$ entry updates to $\vv^{(t)}$. However, directly subtracting the sum when removing vertices from $S^{(t)}$ can cause numerical issues due to cancelation. To avoid this, we use dynamic data structures that maintain partial sums. A segment tree supports $O(\log n)$ update and query time per operation. We conclude the claim by using an $O(\log(nU/\eps))$-bit floating-point number for each node of the segment tree.
\end{proof}

\subsubsection{Maintaining $I^{(t)}$ and $H^{(t)}$}

In this section, we discuss how the set $I^{(t)}$ and $H^{(t)}$ are maintained efficiently. Recall the input requirements for \callalg{PartialSolve}, $I^{(t)} = \{ u \in S^{(t)} : \bbhat^{(t)}_u > 0\}$, and $H^{(t)} = \ExpSet_{\calC}(I^{(t)}) \cap S^{(t)}$.

\paragraph{Maintaining $I^{(t)}$.}
Note that there is no numerical issue here. Since $\bbhat^{(t)}$ is constructed with multiplicative approximation guarantees (and represented using floating points), it is easy to distinguish a nonzero value from a zero, as no cancelation occurs throughout our algorithm. 

By \Cref{claim:maintain-bbhat}, there are only $O(n+m)$ total entry updates to $\bbhat^{(t)}$. Consequently, maintaining $I^{(t)}$ alongside $\bbhat^{(t)}$ is straightforward in total time $O(n+m)$.

$I^{(t)}$ has a more desirable property, as shown in the next claim.
\begin{claim}
    For any iteration $t < T$ and any vertex $u \in S^{(t+1)}$, if $\bbhat^{(t)}_u > 0$, then it follows that $\bbhat^{(t+1)}_u > 0$.
\end{claim}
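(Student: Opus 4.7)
The plan is to unwind the in-place update to $\bbhat^{(t+1)}$ given by \eqref{eq:compute-v} and \eqref{eq:bhat-rep} in \callalg{SDDMSolve}, and observe that the change $\bbhat^{(t+1)}_u - \bbhat^{(t)}_u$ is a sum of nonnegative terms. Concretely, for $u \in S^{(t+1)} \subseteq S^{(t)}$, the representation $\bbhat^{(t)}_u = \bb_u + \vvhat^{(t)}_u$ together with the update rule yields
\begin{align*}
\bbhat^{(t+1)}_u
= \bb_u + \vvhat^{(t+1)}_u
= \bbhat^{(t)}_u - \sum_{v \in F^{(t)}} \LL_{uv}\, \xxhat^{(t)}_v
= \bbhat^{(t)}_u + \sum_{v \in F^{(t)}} (-\LL_{uv})\, \xxhat^{(t)}_v .
\end{align*}

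The proof then reduces to checking that every summand on the right is nonnegative. First, since $u \in S^{(t+1)}$ and $v \in F^{(t)} = S^{(t)} \setminus S^{(t+1)}$, we have $u \neq v$, so the L-matrix property of $\LL$ (off-diagonal entries are nonpositive) gives $-\LL_{uv} \ge 0$. Second, $F^{(t)}$ was chosen in Step~\ref{ThresDec:extractF} of \callalg{ThresholdDecay} as the indices where $\xxhat^{(t)} \ge \theta_t$, and $\theta_t$ is a positive power of two by Step~\ref{ThresDec:set-threshold}, so $\xxhat^{(t)}_v \ge \theta_t > 0$. Combining these gives $\bbhat^{(t+1)}_u \ge \bbhat^{(t)}_u > 0$, as desired.

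The only step that requires care is the second one, since $\xxhat^{(t)}$ is the floating-point output of \callalg{PartialSolve} rather than an exact quantity. However, the algorithm's thresholding in Step~\ref{ThresDec:extractF} is performed on the computed $\xxhat^{(t)}$, so by definition $v \in F^{(t)}$ implies the computed value $\xxhat^{(t)}_v \ge \theta_t > 0$ without appeal to the underlying exact solution. Likewise, the update in \eqref{eq:compute-v} is a sum of nonnegative floating-point numbers added to a positive floating-point number $\bbhat^{(t)}_u$, so no cancellation can occur and the sign of $\bbhat^{(t+1)}_u$ is preserved regardless of rounding. Hence the claim follows for the floating-point representation actually maintained by \callalg{SDDMSolve}.
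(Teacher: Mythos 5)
Your proof is correct and follows essentially the same route as the paper: you unwind the update equations \eqref{eq:compute-v}--\eqref{eq:bhat-rep} to write $\bbhat^{(t+1)}_u = \bbhat^{(t)}_u + \sum_{v \in F^{(t)}} (-\LL_{uv})\,\xxhat^{(t)}_v$, then invoke nonpositivity of the off-diagonals of $\LL$ together with positivity of $\xxhat^{(t)}_v$ for $v \in F^{(t)}$, which is exactly the paper's one-line argument spelled out. The extra paragraph on floating-point behavior (thresholding acts on the computed values, and the update is a cancellation-free sum of nonnegative quantities) is a useful clarification but does not change the argument.
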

\begin{proof}
    It follows directly from the construction of $\bbhat^{(t)}$, \eqref{eq:compute-v} and \eqref{eq:bhat-rep}, and the fact that $\LL_{S^{(t+1)}, F^{(t)}}$ only has non-positive entries while $\xxhat^{(t)}_{F^{(t)}}$ has positive entries.
\end{proof}
The claim shows that each vertex $u$ is inserted to $I^{(t)}$ only once.
It is removed from $I^{(t)}$ only when $u$ is no longer in the set $S^{(t)}$.

\paragraph{Maintaining $H^{(t)}$.} By the equivalent definition of boundary-expanded set \eqref{eq:equiv-def-boundary-expanded-set}, 
\begin{align*}
    \ExpSet_{\calC}(I) 
    & =
    \bigcup_{(V_i, W_i) \in \calC : |W_i \cap I|> 0}  V_i
     =
    \bigcup_{u \in I} \biggl(
        \bigcup_{
                (V_i, W_i) \in \calC:u \in W_i
        }  V_i
    \biggr)
    =
    \bigcup_{u \in I} \Cores_{\calC}(u).
\end{align*}
where $
\Cores_{\calC}(u):= \bigcup_{
                (V_i, W_i) \in \calC:u \in W_i
        }  V_i
$
is the union of cores of the balls containing $u$. We maintain the set $H^{(t)} = \ExpSet_{\calC}(I^{(t)}) \cap S^{(t)}$ as follows:
\begin{itemize}
    \item Create a counter $\cnt_i$ for each cluster $i \in |\calC|$, with initial value $0$. Maintain the indices of positive counters $C_{>0} := \{ i : \cnt_i > 0 \}$. 
    \item Insert/Delete $u$ in $I^{(t)}$: For each $i$ such that $(V_i, W_i) \in \calC$ and $u \in W_i$, increment/decrement $\cnt_i$ by $1$ and update $C_{>0}$.
    \item Build $H^{(t)}$ in iteration $t$: Build the union set by removing duplicates
    \begin{align*}
        H^{(t)} \gets \bigcup_{i \in C_{>0}} V_i.
    \end{align*}
\end{itemize}
By the third property of \Cref{def:low-diam-cover} and that $\calC$ is an $(\rin, \rout, \alpha)$-cover, each vertex $u$ is contained in at most $\alpha$ outer balls in $\calC$.

For the second step, since each vertex is inserted to $I^{(t)}$ only once, the total number of updates to the counters is bounded by $2n \alpha = n 2^{O(\sqrt{\log n})}$.

For the third step, $H^{(t)}$ is constructed in time complexity $O(|H^{(t)}| \cdot \alpha)$, since each vertex $u$ in $H^{(t)}$ has at most $\alpha-1$ duplicates by the third property of $\calC$.

The total time complexity for maintaining $H^{(t)}$ throughout the iterations is
\begin{align*}
    O(n 2^{O(\sqrt{\log n})}) + \sum_{t=0}^{T} O(|H^{(t)}| \cdot \alpha)
    & =
    O(n 2^{O(\sqrt{\log n})}) + \alpha  \cdot O\left( \sum_{t=0}^{T} |H^{(t)}|\right) 
    \\ & =
    O(n 2^{O(\sqrt{\log n})}) + \alpha\cdot  n 2^{O(\sqrt{\log n})} 
    =
    n 2^{O(\sqrt{\log n})}.
\end{align*}
where the second equality follows from \Cref{lemma:H-small}. We summarize our discussions into the following claim.

\begin{claim} \label{claim:maintain-H}
    In \callalg{SDDMSolve}, $H^{(t)}$ can be maintained in time complexity $n 2^{O(\sqrt{\log n})}$ throughout the iterations.
\end{claim}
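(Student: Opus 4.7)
The plan is to maintain $H^{(t)}$ incrementally using the equivalent formulation
\[
H^{(t)} = \Bigl(\bigcup_{u \in I^{(t)}} \Cores_{\calC}(u)\Bigr) \cap S^{(t)},
\]
where $\Cores_{\calC}(u) = \bigcup_{i : u \in W_i} V_i$ is the union of inner balls whose outer ball contains $u$. I would keep a counter $\cnt_i$ for each pair $(V_i, W_i) \in \calC$ storing $|W_i \cap I^{(t)}|$, along with an auxiliary set $C_{>0} = \{i : \cnt_i > 0\}$ of active clusters whose inner balls contribute to $H^{(t)}$.

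First I would handle insertions and deletions in $I^{(t)}$: when $u$ enters or leaves $I^{(t)}$, I iterate over all clusters $i$ with $u \in W_i$, incrementing or decrementing $\cnt_i$, and updating $C_{>0}$. By property (3) of \Cref{def:low-diam-cover}, each vertex belongs to at most $\alpha = 2^{O(\sqrt{\log n})}$ outer balls, so each update costs $O(\alpha)$. By the preceding claim, a vertex enters $I^{(t)}$ at most once and leaves only when it is removed from $S^{(t)}$, giving $O(n)$ insertion/deletion events in total. Multiplying, the aggregate cost of all counter maintenance is $O(n \alpha) = n \cdot 2^{O(\sqrt{\log n})}$.

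Second I would account for the cost of constructing $H^{(t)}$ itself in each iteration. I enumerate $i \in C_{>0}$ and take the union $\bigcup_{i \in C_{>0}} V_i$, ignoring any vertex no longer in $S^{(t)}$. Because $V_i \subseteq W_i$ and each vertex lies in at most $\alpha$ outer balls, each element of $H^{(t)}$ appears at most $\alpha$ times in this enumeration, so the work in iteration $t$ is $O(|H^{(t)}| \cdot \alpha)$ and the total construction cost is $\alpha \cdot \sum_t |H^{(t)}|$.

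The main step---and the only one requiring external input---is bounding $\sum_t |H^{(t)}|$. Here I would invoke \Cref{lemma:H-small}, which states that $\sum_{t=0}^{T} \mathbf{1}[u \in H^{(t)}] \le \rin$ for every vertex $u$. Summing over $u \in [n]$ yields $\sum_t |H^{(t)}| \le n \rin = n \cdot 2^{O(\sqrt{\log n})}$, so the construction cost is at most $n \cdot 2^{O(\sqrt{\log n})}$, matching the counter maintenance cost. Adding the two gives the claimed overall bound of $n \cdot 2^{O(\sqrt{\log n})}$.
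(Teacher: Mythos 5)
Your proposal is correct and follows essentially the same approach as the paper's proof: you maintain per-cluster counters $\cnt_i = |W_i \cap I^{(t)}|$ with the set $C_{>0}$ of active clusters, charge the $O(\alpha)$ counter updates per insertion/deletion in $I^{(t)}$ against the $O(n)$ total such events to get $O(n\alpha)$, bound the per-iteration construction cost by $O(|H^{(t)}|\cdot\alpha)$ via the fact that each vertex has at most $\alpha$ duplicates across the inner balls, and close by invoking \Cref{lemma:H-small} to bound $\sum_t |H^{(t)}| \le n \cdot \rin$. This matches the paper's argument point by point.
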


\subsubsection{Wrap-up}

In this section, we put everything together to prove the main theorem of the paper.

\MainTheorem*

\begin{proof}
We start by recalling the parameters.
\paragraph{Parameters.} In Step~\ref{line:construct-calC} of \callalg{SDDMSolve}, it follows from \Cref{thm:construct-low-diam-cover} that $\calC$ is an $(\rin, \rout, \alpha)$-cover with parameters as below, for $\ell := \lceil \sqrt{\log n} \rceil + 3$:
\begin{itemize}
    \item Diameter: $\rin = 2^{2\ell+1} = 2^{O(\sqrt{\log n})}$.
    \item Outer gap: $\rout = 2^{\ell-2} = 2^{\Omega(\sqrt{\log n})}$.
    \item Size parameter: $\alpha = 6 \ell^2 \cdot 16^\ell \cdot \lceil \log(n/\delta) \rceil =  2^{O(\sqrt{\log n})}$.
\end{itemize}
This cover is produced in $\Otil( m n^{o(1)} \log(U \delta^{-1}) )$ bit operations.

In Step~\ref{line:run-threshold-decay} of \callalg{SDDMSolve}, we run \callalg{ThresholdDecay} for $T=n$ iterations.

\paragraph{Correctness.} We apply \Cref{lemma:partial-solve-correctness} to show that \callalg{PartialSolve} satisfies the guarantees of Step~\ref{ThresDec:L-solver} in \callalg{ThresholdDecay}. It suffices to verify 
\begin{align*}
    5 + \log_{nU}(2 / \epsL) = 5 + \log_{nU}(128 T (nU)^2 / \eps) \le 9 + \log_{nU}(1 / \eps) \le 9 + 2^{\sqrt{\log n}}< \rin
\end{align*}

Since $\rin = 2^{2\ell + 1}$, the inequality holds provided that $\eps \ge 1/\poly(n)$. It then follows from the correctness of \callalg{ThresholdDecay} (\Cref{thm:ThresDec-correctness}) that $\xxtil \approxbar_\eps \LL^{-1} \bb$, concluding the correctness of our almost-linear-time solver.

About the probability correctness:
\begin{itemize}
    \item Failure probability by using \callalg{LowDiamConstruct} is $\le \delta / 2$. We call this procedure once.
    \item Failure probability by using \callalg{PartialSolve} is $\le \delta / (2 n)$. We call this procedure once each iteration, so overall $n$ times.
\end{itemize}

Union bound gives the failure probability of \callalg{SDDMSolve} to be $\le \delta$, thus the algorithm succeeds with probability at least $1 - \delta$.

\paragraph{Bit complexity.}
By \Cref{lemma:H-small}, for any $u \in [n]$, we have
\begin{align*}
    \sum_{t=0}^{T} \mathbf{1}[u \in H^{(t)}] \le  \rin = 2^{O(\sqrt{\log n})},
\end{align*}
and consequently, the total size of $H^{(t)}$ over all iterations is almost-linear in $n$, i.e., 
\begin{align} \label{eq:sum-of-H-size}
    \sum_{t=0}^T |H^{(t)}| \le n 2^{O(\sqrt{\log n})}.
\end{align}
By \Cref{claim:maintain-H}, the set $H^{(t)}$ is maintained in $n 2^{O(\sqrt{\log n})}$ operations
throughout the iterations.

\textbf{Step 2a} is the dominating step. We bound the total time of the Laplacian solvers we invoked. By \Cref{cor:near-linear-solver-SDDM}, the Laplacian solver in iteration $t$ solves the subsystem $\LL_{H^{(t)}, H^{(t)}}$, and requires $\Otil( \nnz(\LL_{H^{(t)}, H^{(t)}}) \log^2(U \eps^{-1} \delta^{-1}))$ bit operations. The sum of sparsity can be bounded by
\begin{align*}
    \sum_{t=0}^T  \nnz(\LL_{H^{(t)}, H^{(t)}}) 
    & \le  \sum_{t=0}^{T} \sum_{u \in [n]} \mathbf{1}[u \in H^{(t)}] \cdot \deg(u)
    \\ & \le  \sum_{u \in [n]} \left(\sum_{t=0}^{T}  \mathbf{1}[u \in H^{(t)}]\right) \cdot \deg(u) 
    \le  m2^{O(\sqrt{\log n})}.
\end{align*}
For \textbf{Steps 2b, 2c, 2d}, every vector is supported only on the set $H^{(t)}$. A straightforward implementation that iterates over $H^{(t)}$ gives a total of $\Otil(2^{O(\sqrt{\log n})} \cdot n)$ bit operations. For example, in \textbf{Step 2c}, we need to compute the set $F^{(t)} \subseteq S^{(t)}$ such that
\begin{align*}
    F^{(t)} = \{ i \in S^{(t)}: \xxhat^{(t)}_i \ge \theta_t\}.
\end{align*}
We cannot afford enumerating over $S^{(t)}$; instead, we enumerate over $H^{(t)}$ and skip the indices in $S^{(t)} \setminus H^{(t)}$ since we know they are all zeros. By \Cref{eq:sum-of-H-size}, this step take a total of almost-linear time. 

Lastly, by \Cref{claim:maintain-bbhat} and \Cref{claim:maintain-bbhat-norm}, \textbf{Step 2e} is implemented in $O((n+m) \log(nU/\eps))$ time.
Putting everything together, we conclude that \callalg{SDDMSolve} runs in 
$\Otil(m2^{O(\sqrt{\log n})} \log^{2}(U\eps^{-1} \delta^{-1}))$ bit operations. The final running time follows by considering the running time of computing the low-diameter cover as well.
\end{proof}

\section{Conclusion}
\label{sec:conclusion}

We presented an almost-linear–time algorithm for solving SDDM linear systems with entrywise approximation guarantees. A natural open question is whether one can achieve truly near-linear running time, that is, $O(m \poly\log n)$. In this work, we focused on inputs given in fixed-point representation. However, as noted in \cite{GNY25}, the case of floating-point inputs presents additional challenges due to possible reductions from shortest-path problems. Investigating algorithms for SDDM systems with floating-point inputs therefore remains an interesting direction for future research.

\bibliographystyle{alpha}
\bibliography{main}

\end{document}